\newcommand{\po}{\textcolor{BlueViolet}{\text{po}}}
\newcommand{\rf}{\textcolor{Green}{\text{rf}}}
\newcommand{\mo}{\textcolor{Red}{\text{mo}}} 
\newcommand{\hb}{\textcolor{NavyBlue}{\text{hb}}}
\newcommand{\rb}{\textcolor{RubineRed}{\text{rb}}}
\newcommand{\rfe}{\textcolor{Green}{\text{rfe}}} 
\newcommand{\rfi}{\textcolor{Green}{\text{rfi}}}
\newcommand{\psf}[3]{\textit{psafe}(#1, #2, #3)}
\newcommand{\rel}{\textcolor{CadetBlue}{\text{rel}}}
\newcommand{\comp}[2]{\text{comp}\langle #1, #2 \rangle}
\newcommand{\pts}[1]{\llangle \! #1 \! \rrangle}
\newcommand{\wf}[1]{\text{wf}(#1)}
\newcommand{\pwc}[1]{\text{pwc}(#1)}
\newcommand*{\note}[1]{\textcolor{black}{#1}}
\newtheorem{remark}{Remark}
\begin{document}

  \title{Memory Consistency and Program Transformations}

    \author{Akshay Gopalakrishnan}
    \affiliation{
      \institution{McGill University}
      \city{Montreal}
      \country{Canada}}
    \email{akshay.akshay@mail.mcgill.ca}

    \author{Clark Verbrugge}
    \affiliation{%
      \institution{McGill University}
      \city{Montreal}
      \country{Canada}}
    \email{clump@cs.mcgill.ca}

    \author{Mark Batty}
    \affiliation{%
      \institution{University of Kent}
      \city{Canterbury}
      \country{United Kingdom}}
    \email{m.j.batty@kent.ac.uk}

    \begin{abstract}
      A memory consistency model specifies the allowed behaviors of shared memory concurrent programs.
      At the language level, these models are known to have a non-trivial impact on the safety of program optimizations, limiting the ability to rearrange/refactor code without introducing new behaviors. 
      Existing programming language memory models try to address this by permitting more (\textit{relaxed/weak}) concurrent behaviors, but are still unable to allow all the desired optimizations.
      A core problem is that \textit{weaker} consistency models may also render optimizations unsafe, a conclusion that goes against the intuition of them allowing more behaviors.
      This exposes an open problem of the \textit{compositional interaction} between memory consistency semantics and optimizations; which parts of the semantics correspond to allowing/disallowing which set of optimizations is unclear.  
      In this work, we establish a formal foundation suitable enough to understand this compositional nature, decomposing optimizations into a finite set of elementary \textit{effects} on program execution traces, over which aspects of safety can be assessed.
      We use this decomposition to identify a desirable compositional property (\textit{complete}) that would guarantee the safety of optimizations from one memory model to another.
      We showcase its practicality by proving such a property between Sequential Consistency (SC) and $SC_{RR}$, the latter allowing independent read-read reordering over $SC$. 
      Our work potentially paves way to a new design methodology of programming-language memory models, one that places emphasis on the optimizations desired to be performed.
    \end{abstract}

    \keywords{Memory Consistency, Compiler Optimizations, Correctness, Compositionality}

    \maketitle

    \section{Introduction}
    \label{sec:intro}
    The semantics of shared memory concurrency is given by what is known a memory consistency model. 
    They specify constraints on the visibility of actions performed by a thread/processor to another, which give the allowed concurrent behaviors of a program.
    To be specific, it tells us what write values a read to shared memory can observe in a concurrent execution. 
    While the use of multi-core machines and software have become ubiquitous, its impact on compilers is still being uncovered.
    Compilers have traditionally played a large role in the performance of our programs, as they have been designed to automatically identify patterns in code which can be reordered/refactored to save CPU cycles/memory access times.   
    However, these program transformations (or commonly known as program/compiler optimizations) have historically been designed for sequential programs, and thus, performing them have a non-trivial consequence in a concurrent context \cite{LICM-LLVM}. 
    Even simple transformations such as constant propagation or code motion \cite{MuchCompiler} can be rendered unsafe: performing them results in the program exhibiting new behaviors. 
    The goal then, for multi-threaded program performance, is to have programming language memory consistency models that allow as many optimizations as possible.
    However, existing models used in practice are not able to satisfy this, and how to allow a desirable set of optimizations is not well understood \cite{MoiseenkoP}.

    A core reason revolves around the intuition behind how to allow a desired optimization. 
    In a sequential program (single threaded), an optimization is considered unsafe if it fails to guarantee the functional behavior of the program, a condition commonly referred to as ensuring \textit{conservative correctness} for any transformation.  
    Focusing on the possible interactions of multiple, concurrently executing code sequences, we can deem an optimization unsafe if it results
    the program exhibiting additional concurrent behaviors not intended by the programmer.
    \textit{Weak} consistency models allow more concurrent behaviors; thus, natural intuition is that \textit{weakening} the consistency semantics should be enough to allow the desired set of optimizations, necessarily preserving the safety of ones allowed by the original model. 
    However, this intuition turns out to be incorrect: weaker models may allow more behaviors, but not strictly more optimizations. 
    This complicates the understanding of how one might compose the consistency requirements of individual program optimizations or transformations that would be necessary to allow multiple optimizations.
    Our work sheds light on this issue, by answering the following two fundamental questions regarding compositionality: 
    \begin{itemize}
        \item When is the safety of an optimization implied from one model to another?  
        \item Given a memory model and a desired optimization known to be unsafe, can we formally derive/specify a memory model allowing it while retaining the safety of existing ones? 
    \end{itemize}
    
        \subsection{Core Idea} 
        
        In order to understand the compositional interaction between optimizations and consistency semantics, we observe that an optimization can be divided as a set of effects on execution traces.
        For instance, consider the example program in Fig~\ref{Core1}. 
        In each of our examples, $x,y,z$ represent shared memory and $a,b,c$ represents local memory to each thread.
        This program is transformed via code motion, moving $b=y$ above $x=1$.
        The same reordering can be translated in multiple ways as different series of effects at the execution trace level; one of which is depicted in the figure as reordering adjacent statements $b=y$, $z=1$ followed by that of $b=y$, $x=1$.     
        \begin{figure*}[htbp]
            \centering
            \includegraphics[scale=0.6]{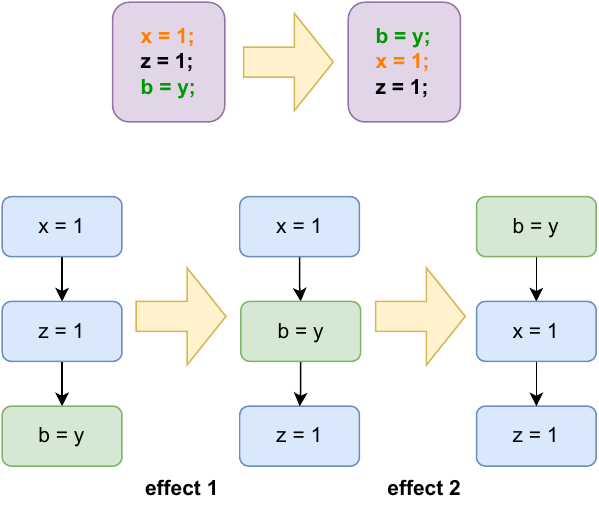}
            \caption{Code motion equivalent to performing two adjacent memory access reordering (effect 1 followed by effect 2).}
            \label{Core1}
        \end{figure*}
    
        A transformation can also result in different effects on different execution traces.
        Consider another transformation in Fig~\ref{Core2} where the program contains two possible execution traces, one for each conditional branch taken.
        The transformation of reordering $w=1$ is outside the conditional block, constitutes two separate series of effects, one for each branch taken.
        The 'then' branch can be depicted as adjacent reordering of $b=y$, $w=1$, whereas the 'else' branch can be depicted as introduction  of write $w=1$.
        \begin{figure*}[htbp]
            \centering
            \includegraphics[scale=0.6]{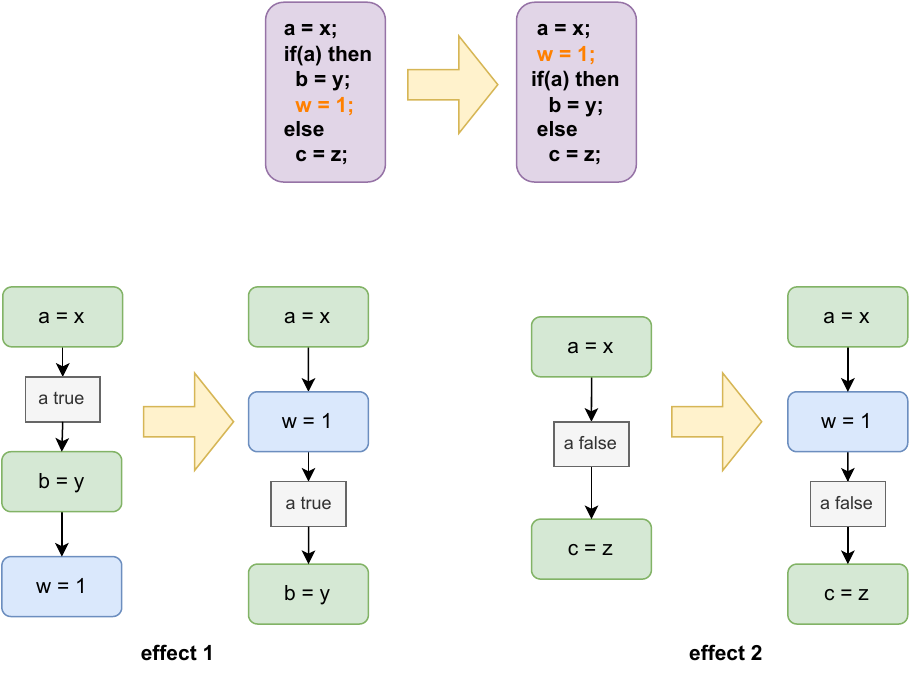}
            \caption{Code motion equivalent to reordering (effect 1) and introduction (effect 2) for two different execution traces.}
            \label{Core2}
        \end{figure*}
    
        Dividing a transformation into some combination of effects on traces of events allows us to assess the safety in terms of more elementary transformations (such as adjacent reordering), which is known \cite{MoiseenkoP} or can be proved easily \cite{SafeOptSevcik}.     
        The transformation then, is considered safe under the given consistency model, if all the corresponding effects are safe.       

    \subsection{Contributions}

        Answering the aforementioned two fundamental questions results in the following contributions:
        \begin{itemize}
            \item We represent concurrent program behaviors as a conservative approximate of execution traces, depicted by \textit{pre-traces} and \textit{candidate-executions} (Section~\ref{subsec:prelim-prog}).
            \item We use this approximation to decompose program transformations into different \textit{transformation-effects} on each pre-trace (Section~\ref{subsec:prelim-transf}). 
            \item We propose a methodology to derive consistency models that represent collections of desirable transformation-effects, followed by formally specifying the desired compositional interaction \textit{Complete}, which guarantees safety of transformation-effects from one model to another (Section~\ref{sec:methodology}).
            \item We showcase the practicality of the proposed methodology by deriving model $\textit{SC}_\textit{RR}$ using Sequential Consistency (\textit{SC}) and reordering of independent reads, followed by proving it Complete for a significant set of transformation-effects safe under SC (Section~\ref{sec:concrete}). 
        \end{itemize} 

        \paragraph*{Road Map}

        \note{
        Section~\ref{sec:motivation} goes over some informal background and motivation which can be skipped by readers familiar with the problem of safe optimizations and memory consistency.
        }
        Section~\ref{sec:prelim} goes over our view for programs as an over-approximate set of \textit{pre-traces}, followed by the formal description of program transformations as a set of \textit{transformation-effects}.
        Section~\ref{sec:methodology} showcases our proposed methodology and formalizes the desired compositional interaction \textit{Complete}.
        Section~\ref{sec:concrete} goes over concrete models, showing the practicality in deriving models by adding desired transformations and proving the model Complete w.r.t. the original.
        Section~\ref{sec:discussion} discusses the advantage of having such an approach for programming language memory models; one that focuses on the desirable set of compiler optimizations.
        Finally, Section~\ref{sec:related} goes over related work followed by Section~\ref{sec:conclusion} with conclusion and potential future work.
   
    \section{Background/Motivation}
    \label{sec:motivation}
    Sequential interleaving or Sequential Consistency (SC), is the de-facto memory model relied on while designing concurrent programs \cite{Lamport}.
    Informally, it is defined as a set of all outcomes possible by an arbitrary interleaving of the sequential executions of each thread.   
    For instance, consider the store-buffering (SB) litmus test in Fig~\ref{SB}. 
    Each shared memory is initialized to $0$.
    As per SC, several sequential interleavings are possible, each potentially giving us a different outcome. 
    The outcome $a=1 \wedge b=0$ in the green box is possible due to the interleaving order shown by the blue boxes below it.
    T1 can perform both its actions, which can be followed by actions of T2.
    The outcome in the red box, however, cannot be observed under \textit{SC}; the blue boxes below it along with the red arrows indicate a cyclic order of execution. 
    Via \textit{SC} reasoning, since $y=1$ sequentially precedes the copy of $x$ into $a$, $a$ holding the value $0$ implies $y = 1$ has happened before $x = 1$, thus enforcing $b = 1$.
    \begin{figure*}[htbp]
        \centering
        \includegraphics[scale=0.6]{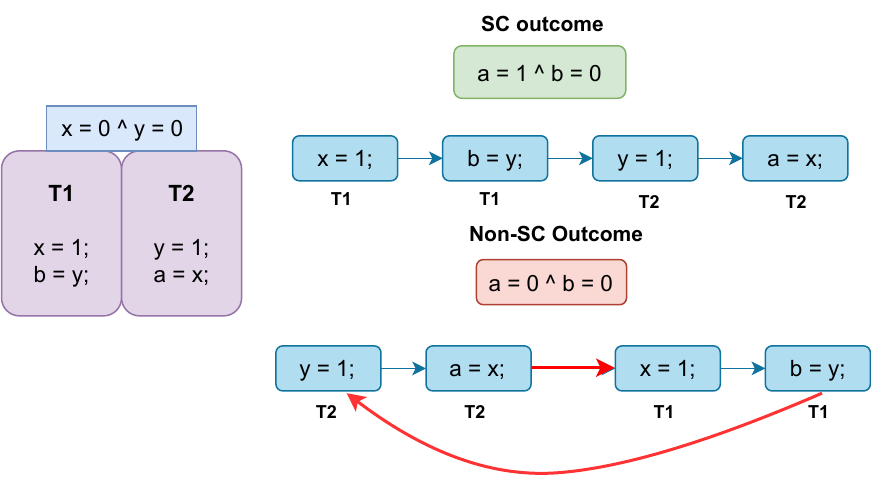}
        \caption{The store buffering (SB) litmus test showcasing an allowed (green box) and forbidden (red box) outcome under \textit{SC}.}
        \label{SB}
    \end{figure*}

    While \textit{SC} indeed is closer to a programmer's intuitive reasoning, its semantics have a detrimental impact on performance \cite{AdveS}.
    We direct our focus to the impact on compiler optimizations; syntactic changes (that preserve semantics) ranging from code-motion, common sub-expression elimination to whole algorithm transformations \cite{Dragon}, \cite{MuchCompiler}.
    The performance of the program, then, can be attributed to applying these transformations several times in different phases, which result in efficient computations overall.
    Fig~\ref{Opt} showcases some of the common optimizations done by the compiler on our programs. 
    Copy/constant propagation identifies redundant reads ($b=x$) from memory whose value already exists in the program ($x=1$).  
    Dead store elimination removes writes which have been overwritten before its value is used.
    Redundant store elimination removes writes which write the same current value to memory. 
    Loop invariant code motion moves code fragments whose result do not change in any iteration outside the loop.
    \begin{figure*}[htbp]
        \centering
        \includegraphics[scale=0.6]{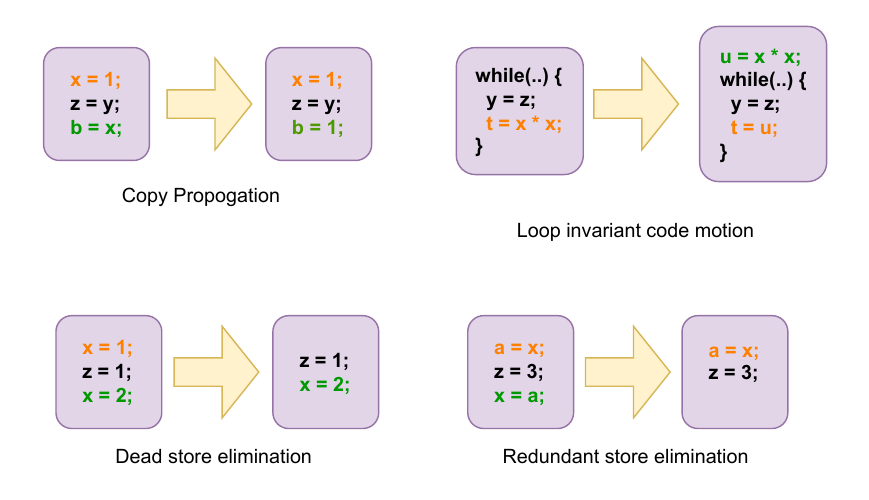}
        \caption{Different optimizations routinely performed by compilers today.}
        \label{Opt}
    \end{figure*}

    Many of these optimizations rely on plain code motion, which help in identifying redundancies in a thread-local (peephole) fashion.
    Redundancies such as these are typically identified through data-flow analysis \cite{MuchCompiler}, a technique that propagates information through all possible control flow paths of a program, with a code modification enabled due to some invariant property of the information flow between two program points.  
    For instance, the constant propagation example in Fig~\ref{Opt} is possible since x holding the value 1 is true at all points between the $x=1$ statement and the $b=x$ assignment.  
    We may also view it as a sequence of transformations based on more local relations: it is safe to reorder $b=x$ and $z=y$, and then it is easy to assert (via data-flow analysis) no side-effects exists.
    
    Traditionally (and due to the way conservative estimates of concurrent control flow rapidly waters down information), the invariants core to optimizations are established by following only sequential control flow.  
    The optimizations performed as a consequence, however, can have a significant impact on concurrent behavior.  
    For example, applying constant propagation on the store buffering example in Fig~\ref{SB-CP} may determine that $y$ holding $0$ is an invariant from the initialization of y through the entire execution of T1, and thus replacing $b=y$ with $b=0$ is valid.  
    This makes the outcome in the "red" box of Fig~\ref{SB} observable, even if SC is otherwise assumed, as shown by the sequential interleaving in blue boxes of Fig~\ref{SB-CP}.
    \begin{figure*}[htbp]
        \centering
        \includegraphics[scale=0.6]{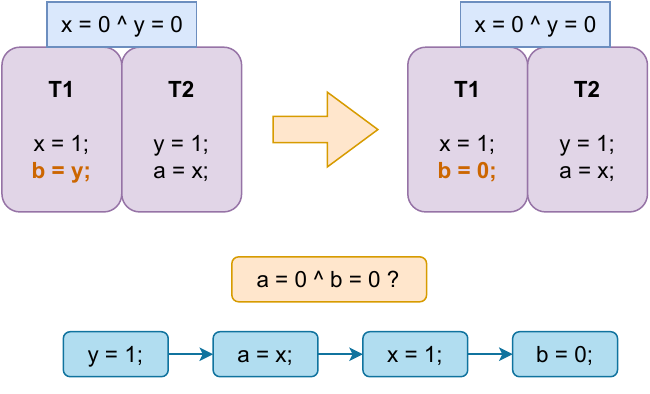}
        \caption{The forbidden outcome $a=0 \ \wedge \ b=0$ permitted under \textit{SC} after constant propagation, setting $b=0$.}
        \label{SB-CP}
    \end{figure*}
    In fact, even simple independent code motion can be rendered unsafe under \textit{SC}.
    \note{Revisiting the store buffering example in Fig~\ref{SB-WR}, the compiler can see no harm (using thread-local information) in performing independent write-read reordering.}    
    This syntactic change, however, allows the outcome originally forbidden under \textit{SC}; the blue boxes in Fig~\ref{SB-WR} show a sequential interleaving justifying the outcome.     
    \begin{figure*}[htbp]
        \centering
        \includegraphics[scale=0.6]{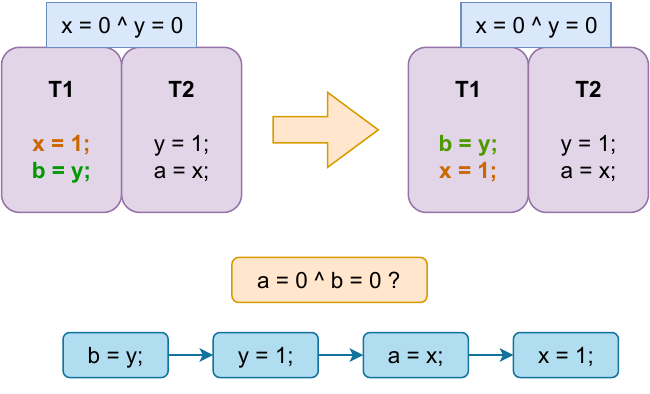}
        \caption{The forbidden outcome $a=0 \ \wedge \ b=0$ permitted under \textit{SC} after reordering $x=1$ and $b=y$.}
        \label{SB-WR}
    \end{figure*}

\subsection{Weak Consistency Models and Program Transformations}

    The impact of concurrency semantics on hardware optimizations have been studied extensively \cite{AdveS}, resulting in weaker memory models: those that allow more hardware optimizations. 
    One such example is Total Store Order (TSO), whose semantics represents an abstract machine using FIFO store buffers per-CPU.
    Informally, every write is first committed to the respective processor-specific write buffer.  
    Reads to memory are first checked from the processor's own write-buffer and a fence/lock instruction forcing the buffer to be flushed to main memory.
    Under TSO, the outcome disallowed under SC can be explained via store buffering \cite{OwensS}.
    Fig~\ref{TSO-SB} shows how the store buffering non-SC outcome is possible. 
    From the lens of TSO, both the writes $x=1$ by CPU0 and $y=1$ by CPU1 can be buffered, while the symmetric reads of $y$ by CPU0 and $x$ by CPU1 can still read from main memory, thus allowing both the reads to return $0$. 
    \begin{figure*}[htbp]
        \centering
        \includegraphics[scale=0.6]{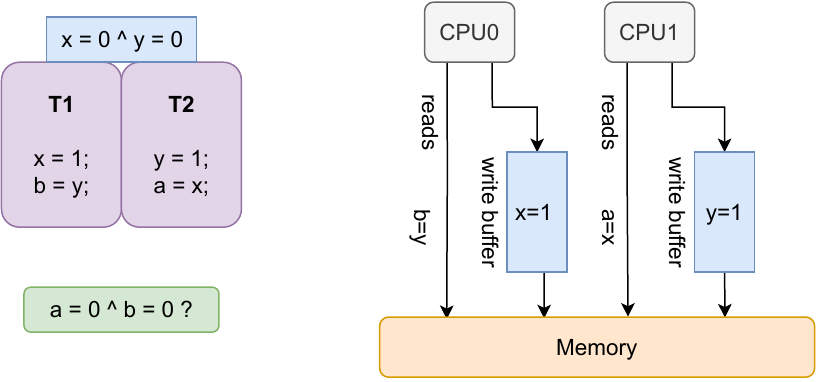}
        \caption{The forbidden \textit{SC} outcome $a=0 \ \wedge \ b=0$ for the program (left) allowed under TSO due to store buffering.}
        \label{TSO-SB}
    \end{figure*}
    
    Intuitively, the effect of TSO write buffers is similar to program transformations (optimizations) that involve independent writes and reads to be reordered. 
    Such reordering is unsafe under \textit{SC}, which may compel us to conclude \textit{TSO} allows more code transformations than \textit{SC}.
    While such reordering is indeed allowed \cite{MoiseenkoP}, the same cannot be said for rest of the optimizations.
    Let us consider for instance, thread-inlining transformation, which sequentially merges code of two threads into one.
    Although not a common or traditional optimization, and requiring some concurrency awareness, a compiler may decide to do this in order to match the amount of processors available in the target multi-core system.
    This optimization would avoid unnecessary context switching overhead \cite{ContSwtch}, thus improving performance.
    Fig~\ref{tso-invalid} represents one such instance of doing it. 
    To see why this optimization is unsafe, first note the outcome in the orange box is not observed under \textit{TSO}, as writes of $x$ and $y$ committed to main memory should be visible immediately to all processors as per its semantics (\textit{multi-copy-atomic}).
    Thus, if $T1$ observes $y=1$ before $x=1$ then $T4$ is constrained by that same ordering.
    However, if we do inline $T1$ and $T3$, as shown on the right, the outcome in question can be observed under \textit{TSO}: the read to $y$ can be taken from the write buffer of $T3;T1$ rather than main memory, and thus $T4$ is free to observe $x=1$ prior to $y=1$.
    \begin{figure*}[htbp]
        \centering
        \includegraphics[scale=0.6]{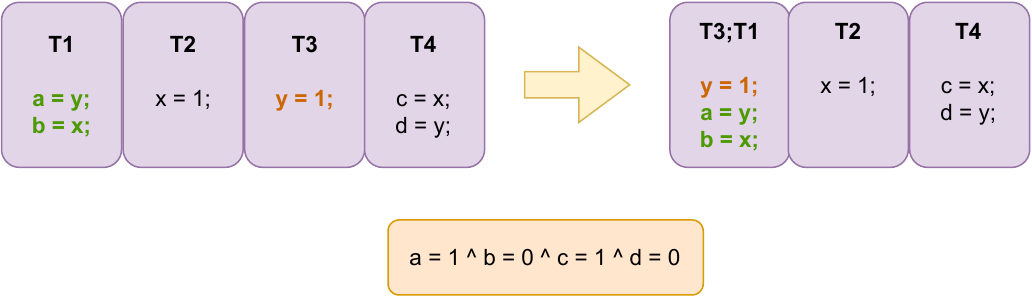}
        \caption{Example program (left) where inlining $T1$ sequentially after $T3$ (right) permits the forbidden behavior (yellow box) under \textit{TSO}, but not under \textit{SC}.}
        \label{tso-invalid}
    \end{figure*}
    In contrast, it is well known that \textit{thread inlining} is a sound transformation under \textit{SC}; it simply places a constraint on possible sequential interleavings of the program.
    
    To conclude, we now have an example of a weak memory model under which we can observe more concurrent behaviors, but does \textit{not strictly allow more safe transformations}.
    Figure~\ref{SC-TSO-Venn} shows the picture we have for \textit{TSO} and \textit{SC}.
    The Venn diagram on the left indicates the set of concurrent behaviors observed under these two models, whereas the right indicates our conclusion based on the set of transformations safe under them.
    \begin{figure*}[htbp]
        \centering
        \includegraphics[scale=0.6]{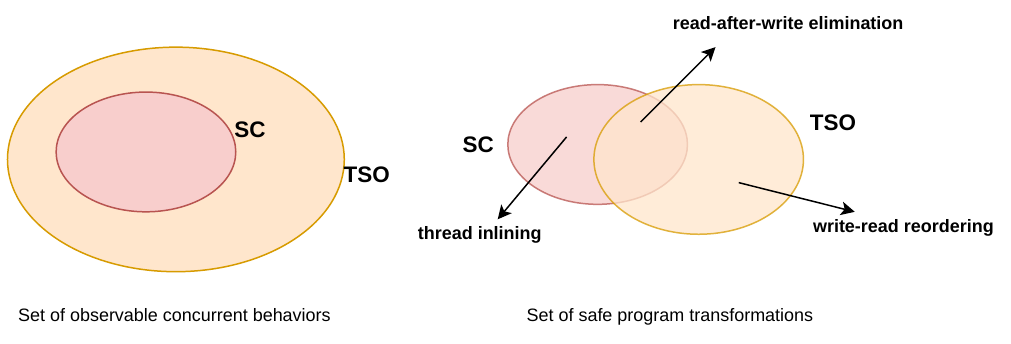}
        \caption{Venn diagram showcasing the relation between sets of allowed concurrent outcomes (left) and program transformations (right).}
        \label{SC-TSO-Venn}
    \end{figure*}

    \section{Preliminaries}
    \label{sec:prelim}
    We first go over our view of programs, representing program behaviors as a conservative approximate of \textit{pre-traces} and their \textit{candidate executions} (Sec~\ref{subsec:prelim-prog}).
    This is intended to represent a compiler's conservative view of program executions.   
    We then formally specify memory consistency models, adopting their declarative (axiomatic) format to filter out these execution traces (Sec~\ref{subsec:mem-model}).
    This is followed by defining a \textit{transformation-effect} over pre-traces, into which we decompose any program transformation (Sec~\ref{subsec:prelim-transf}).
    We then formalize safety of any transformation-effects, followed by linking it to safety of program transformations (Sec~\ref{subsec:prelim-safety}).  
    We finally discuss briefly the role of sequential (thread-local) semantics in proving safety of a given program transformation (Sec~\ref{subsec:seq-sem}).

    \subsection{Programs, Traces and Executions}
    
        \label{subsec:prelim-prog}

        We view concurrent programs ($\textit{prog}$) as a parallel composition ($||$) of sequential programs ($\textit{sp}$), each of which is associated with a thread id $t$ and a sequence of actions $p$.
        Each $p$ is composed of statements ($st$), conditional branches (if-then-else).
        Each $st$ is either a shared memory event or a local computation ($a=e$).  
        Memory events can either be a read from (eg: $a=x$) or a write to (eg: $x=v$) shared memory.
        Write events are also associated with a value $v$, which can, in general be a constant value or a local variable; for simplicity we limit $v$ to integers in this exposition.  
        The complete program grammar is as shown below. 
        \begin{align*}
            &prog := sp || prog \ | \ sp \\
            &sp := t:p \\ 
            &p := st \ | \ p;p \ | \ \text{if}(cond) \ \text{then} \ \{p\} \ \text{else} \ \{p\} \\ 
            &st := a\!=\!x; \ | \ x\!=\!v; \ | \ a\!=\!e \\ 
            &e := a \ | \ v \\
            &cond := \text{true} \ | \ \text{false} \ | \ a == v \ | \ a !\!= v \\   
            &\text{domains} := v \in \mathbb{Z} \ | \ t \in (\mathbb{N}\!\cup\!\{0\})
        \end{align*}

        \begin{remark}
            Following previous work on this topic \cite{Dodds}, we assume all loops terminate in a finite number of iterations and require static thread construction (via the $||$ operator).
            Our proofs/results however, do not depend on this assumption. 
        \end{remark}
        
        We represent initial values of shared memory as a sequence of write events syntactically ordered before all other memory accesses. 
        Let $\textit{tid}$ and $\textit{mem}$ be mapping of an event to the thread id and memory (addresses, although in examples we will use unique variable names).
        Given a program in this way, we construct an abstract program graph.
        \begin{definition}
            An \emph{abstract program} $Pr$ is an abstraction of the original program, retaining the syntactic order represented by the binary relation $\po_{prog}$, set of shared memory events $St_{prog}$ and conditional branch points.
        \end{definition}
        
        Fig~\ref{Prog-to-Abs} is an example of a program and its abstraction.
        Let $st_{prog}(Pr)$ return the set of statements and $\po_{prog}(Pr)$ return the set of syntactic orders from abstract program $Pr$.
        Let $St_{t}$ and $\po_{t}$ represent the set of events and syntactic order of thread $t$.
        Note that $\po_{\text{prog}}$ is a strict partial order.
        \begin{figure*}[htbp]
            \centering
            \includegraphics[scale=0.6]{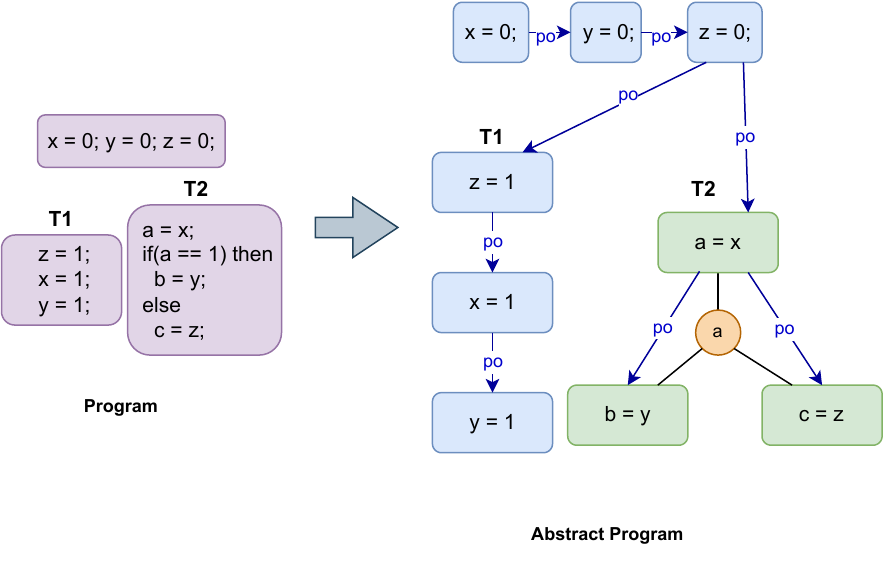}
            \caption{Example program (left) mapped to its abstract form (right).}
            \label{Prog-to-Abs}
        \end{figure*}

        We now take an abstract program $Pr$ and construct a pre-trace. 
        A pre-trace represents a possible execution trace consisting of maximal events from an Abstract Program and their syntactic order.
        We derive this by syntactically tracing program execution paths for each thread, non-deterministically choosing a path on every conditional branch point.
        \note{Formally, we first define $St_t^{\textit{trace}}  \subseteq\ St_t$ as a subset of program statements of thread $t$ such that $\forall a,b \neq a \in St_t^{trace}$ we have:
        \begin{align*}   
            &\langle a, b \rangle \in \po_{t}(Pr) \vee \langle b, a \rangle \in \po_{t}(Pr). \\                          
            &\langle a, c \rangle \in \po_{t}(Pr) \implies c \in St_{t}^{trace}.\\  
            &\langle c, b \rangle \in \po_{t}(Pr) \implies c \in St_{t}^{trace}.     
        \end{align*}
        }
       
        We then derive $po_{t}^{trace} \subseteq \po_{t}$ that represent the syntactic order between events in $St_{t}^{trace}$.
        Lastly, let $\po_{init}$ represent the syntactic order between initialization writes and memory events of each thread with $tid > 0$. 
        \begin{definition}
            \label{def:pre-trace}
            A \emph{pre-trace} ($P$) is a tuple consisting for some \emph{trace} of $Pr$, the $St_{t}^{trace}$ and appropriate $\po_{t}^{trace}$ for each thread.
            \begin{align*}
                P = ( \bigcup_{t=0}^{n}St_{t}^{trace}, \ \bigcup_{t=1}^{n}\po_{t}^{trace} \cup po_{init}).
            \end{align*}
        \end{definition}
        Fig~\ref{Pr-to-P} shows a mapping of an abstract program to all its pre-traces.

        \begin{remark}
            Since $\po_{t}^{trace}$ and $\po_{init}$ are strict total orders, we omit the subscripts for $\po$ in pre-traces. 
        \end{remark}

        Let $st(P)$ return the set of statements and $\po(P)$ return the set of program orders.
        Further, let $r(P)$ and $w(P)$ return the set of read and write memory events of $P$ respectively.
        Let $st_{loc}(P), r_{loc}(P), w_{P}$ return the set of events of $P$ operating on memory location $loc$.
        Let $!loc$ be those other than $loc$ events.
        \begin{figure*}[htbp]
            \centering
            \includegraphics[scale=0.6]{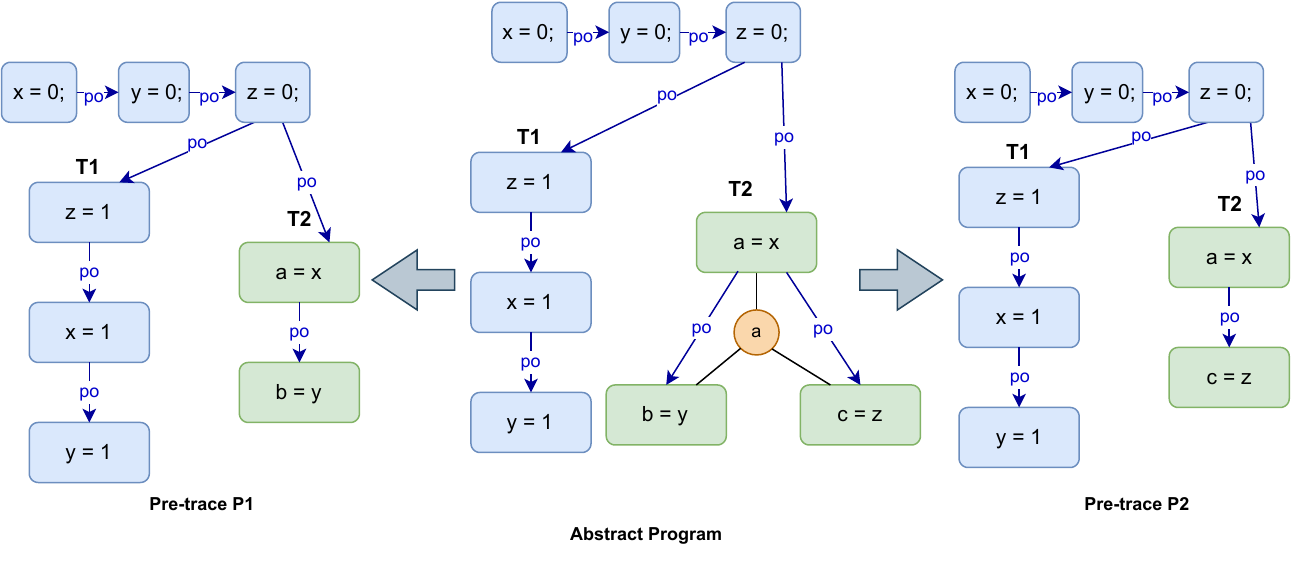}
            \caption{Example abstract program (middle) mapped to its two possible pre-traces (left and right).}
            \label{Pr-to-P}
        \end{figure*}
   
        \note{
        Pre-traces still do not represent an outcome, which is generally considered to be the final state of memory (both local as well as shared). 
        To represent outcomes, we define \textit{read-from} ($\rf$) to be a binary relation from a write event to a read event of same memory and \textit{memory-order} ($\mo$) to be the propagation order between write events of $P$.
        Let $\mo_{loc}$ represent the memory order between writes to the same memory and $max(loc)$ give the maximal write (if exists) in $\mo_{loc}$.
        }
        \note{ 
        \begin{definition}
            \label{def:exec}
            An execution is a pre-trace $P$ augmented with non-empty $\rf$, $\mo$ relations between events.
            \begin{align*}
                E = (P, \rf, \mo). 
            \end{align*}
        \end{definition}
        Let $p(E)$, $\rf(E)$, $\mo(E)$ be projection functions that return the pre-trace $P$, $\rf$ and $\mo$ relations respectively.
        We say $E$ is \emph{well-formed} ($\wf{E}$) if every read event is associated with an $\rf$ relation. 
        \begin{definition}
            \label{def:cand-exec}
            An execution $E$ is a candidate (or a \emph{candidate execution} $E$) if 
            \begin{tasks}(2)
                \task $\wf{E}$.
                \task $\rf^{-1}$ \text{functional}.
                \task $\mo$ \text{total order}.
                \task $\forall loc \ . \ max(loc) \neq \phi$.
            \end{tasks}
            where \emph{loc} represents all possible shared memory locations\footnotemark. 
        \end{definition}
        We finally define an outcome to be the $\rf(E)$ and $\mo(E)$ extracted from any candidate execution $E$.  
        }

        \footnotetext{
            The existence of a maximal event in every $\mo_{loc}$ captures the final state of memory.
        }

        Let $\langle P \rangle$ represent the set of candidate executions $E$ of given pre-trace $P$.
        Fig~\ref{P-to-E} shows one possible candidate execution for a given pre-trace $P$.
        \begin{figure*}[htbp]
            \centering
            \includegraphics[scale=0.6]{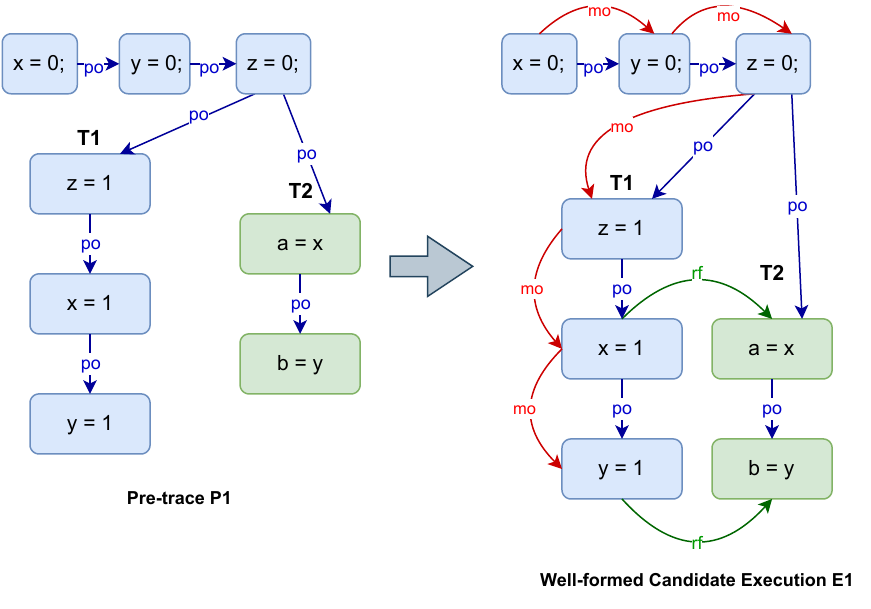}
            \caption{Example pre-trace (left) mapped to a possible candidate execution (right).}
            \label{P-to-E}
        \end{figure*}

        \begin{remark}
            Unless explicitly stated, we refer to $E$ as a candidate execution.
        \end{remark}


    \subsection{Memory Model}

    \label{subsec:mem-model}

    We have decomposed programs into a finite set of pre-traces and candidate executions. 
    We now define a memory model, which Informally, is a set of rules that apply to candidate executions. 
    \note{
    \begin{definition}
        \label{def:mem-model}
        A \emph{memory consistency model} $M$ is a finite set of consistency rules. 
        A \emph{consistency rule} $a \in M$ is a function that maps executions to boolean values. 
    \end{definition}
    We utilize the memory model to give us executions (and thereby candidate executions) that adhere to its rules.
    \begin{definition}
        \label{def:consistent}
        An execution $E$ is \emph{consistent} w.r.t. memory model $M$, written $c_{M}(E)$, if all the consistency rules of $M$ return \textit{true}\footnotemark. 
        \begin{align*}
            c_{M}(E) \ \implies \ \forall a \in M \ . \ a(E) = \textrm{true}. 
        \end{align*}
    \end{definition}
    }
  
    \footnotetext{
        Note that an execution need not be a \emph{candidate} to be consistent under a memory model.
    }
    An execution $E$ is \emph{inconsistent} w.r.t. memory model $M$ if at least one consistency rule returns \textit{false}.     
    As an example, suppose a memory model $M$ has a single consistency rule: one that returns true for any execution when  $\po \cup \mo \ \text{acyclic}$.
    Fig~\ref{ex-mem-model} shows two candidate executions, one consistent and other inconsistent as per $M$.
    \begin{figure*}[htbp]
        \centering
        \includegraphics[scale=0.6]{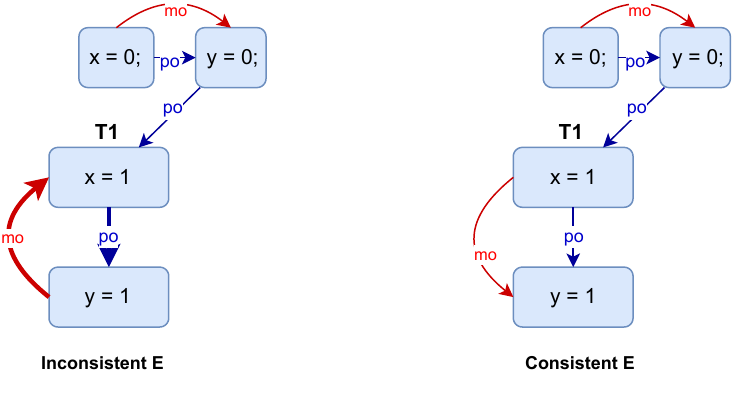}
        \caption{Inconsistent (left) and Consistent (right) executions as per memory model with consistency rule $(\po \cup \mo) \ \text{acyclic}$.}
        \label{ex-mem-model}
    \end{figure*}

    \begin{remark}
        In all our examples of candidate executions, we will omit showing certain relations which can be implied.
        For instance, we do not show all the $\mo$ and $\po$ edges in Fig~\ref{ex-mem-model}.   
    \end{remark}

    Let $\llbracket P \rrbracket_{M}$ represent the set of candidate executions of pre-trace $P$ given memory model $A$.
    Let $I\langle P \rangle_{M}$ represent the set of inconsistent candidate executions. 
    We then have $I\langle P \rangle_{M} \cup \llbracket P \rrbracket_{M} = \langle P \rangle$.    
    
    \paragraph*{Actual Behaviors of a Program}
    Note that the memory model is not enough to assert whether an outcome of the program is possible.
    A memory model needs to be coupled with thread-local language semantics to filter out pre-traces and corresponding candidate executions which are not possible by a program.
    For instance, Fig~\ref{pretrace-seq-sem} shows a pre-trace of a program tracing the false conditional path: one that cannot happen. 
    These pre-traces can be filtered out by thread-local reasoning.
    Once the relevant set of pre-traces are determined, we can filter out the program's behavior as per the entire language semantics.
    \begin{remark}
        Our results are independent of the choice of thread-local semantics adopted.
    \end{remark}
    \begin{figure*}[htbp]
        \centering
        \includegraphics[scale=0.6]{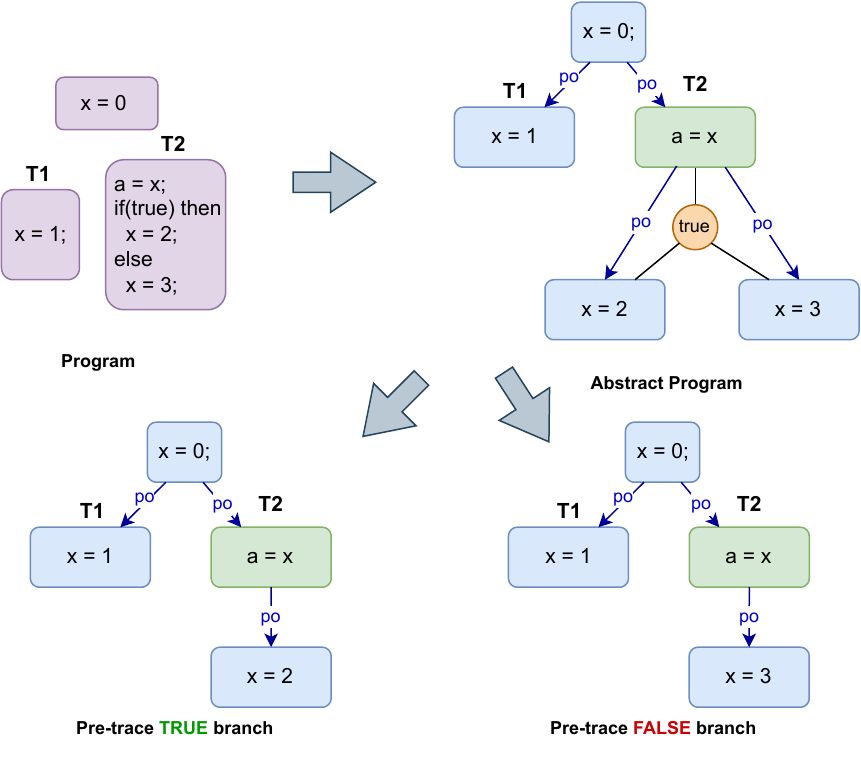}
        \caption{The pre-trace generated for program (top left) with \textit{false} branch (bottom right) will be filtered out by appropriate thread-local semantics.}
        \label{pretrace-seq-sem}
    \end{figure*}

    \subsection{Program Transformations}

    \label{subsec:prelim-transf}

    The set of compiler optimizations we aim to address can be placed under the umbrella of program transformations: syntactic changes to programs with the intention to preserve semantics.
    To relate these transformations to our view of programs, we note that each transformation at the program level has an effect on pre-traces, which in turn affect candidate executions.
    For instance, revisiting a previous example in Fig~\ref{fig:reord-eg}, reordering $w=1$ outside the conditional block at the program level affects the pre-traces in two different ways: one as simple reordering and one as write introduction. 
    More generally, these effects can result in adding/removal of statements $st$ and modification to syntactic order ($\po$).
    \begin{definition}
        \label{def:transf-effect}
        A \emph{transformation-effect} $tr$ is defined on a pre-trace $P \mapsto_{tr} P'$ as
        \begin{align*}
            P' = ((st(P)  - st^{-})\cup st^{+} , \ (\po(P) - \po^{-})\cup \po^{+} ). 
        \end{align*}
        where $st^{+}$, $\po^{+}$ represent elements added and $st^{-}$, $\po^{-}$ those removed.
    \end{definition}

    Going back to Fig~\ref{fig:reord-eg}, the reordering has the transformation-effects $P1 \mapsto_{tr} P1'$, $P2 \mapsto_{tr} P2'$ defined as 
    \begin{align*}
        &P1' = (st(P1), \ (\po(P1) - \{[b=y];\po;[w=1]\}) \cup \{[w=1];\po;[b=y]\} ). \\
        &P2' = (st(P2) \cup \{ w=1 \}, \ \po(P2) \cup (\{[a=x];\po;[w=1]\} \cup \{[w=1];\po;[b=y]\}) ). 
    \end{align*} 
    
    \begin{remark}
        Since $\po$ is transitive, we omit mentioning certain edges added/removed as part of a transformation-effect that are implied via transitivity. 
    \end{remark}

    \begin{figure*}[htbp]
        \centering
        \includegraphics[scale=0.6]{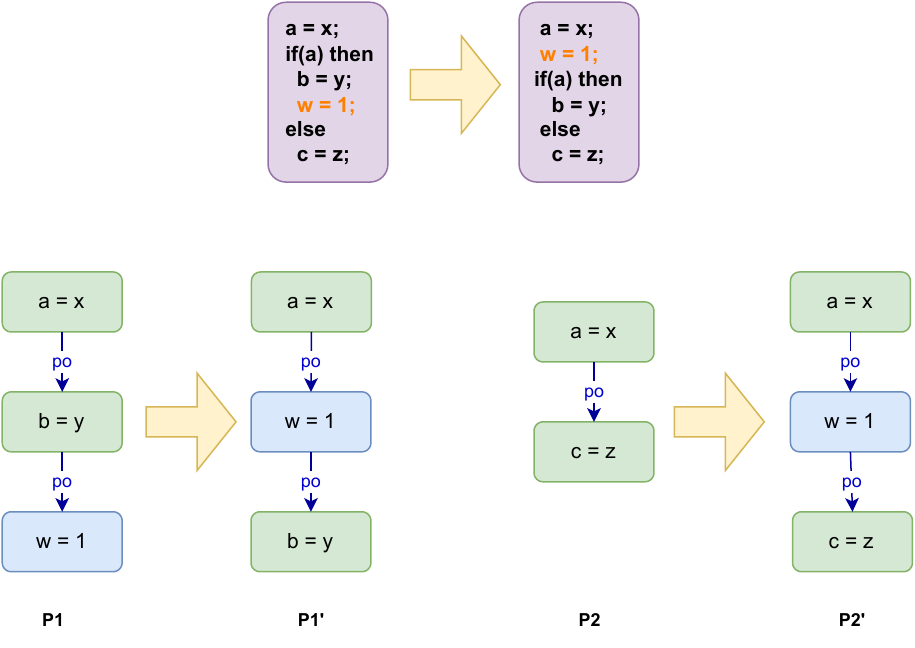}
        \caption{Example of reordering $w=1$ outside the conditional block in program as transformation-effects $P1 \mapsto P1'$ and $P2 \mapsto P2'$.}
        \label{fig:reord-eg}
    \end{figure*}

    \subsubsection{Identifying Relevant Pre-trace Pairs}

        We mentioned above that a program transformation, in essence, has an effect on pre-traces.
        Def~\ref{def:transf-effect} is generic; any pre-trace can be considered as a result of a transformation-effect on any other pre-trace.
        The example in Fig~\ref{fig:reord-eg} however, specifically has just two effects $P1 \mapsto P1'$ and $P2 \mapsto P2'$. 
        While we label these as transformed pre-traces, it is instead a mapping from pre-traces of one program to pre-traces of the transformed program. 
        This implies the appropriate effects that constitute a program-transformation are not naive mapping of every pre-trace of the original program to every pre-trace of the transformed.
        More specifically, we want to ensure the transformation is described by effects $P1 \mapsto P1'$ and $P2 \mapsto P2'$, but not $P1 \mapsto P2'$ or $P2 \mapsto P1'$.
        
        \paragraph*{Tracking event set identities}
            To identify these appropriate pairs of pre-traces, we need a way to compare them.
            We start by associating each memory event with a unique identity.
            This would ensure that a transformation-effect can be derived by simply comparing the event set of two pre-traces; making it easier to identify which statements were removed/reordered/added.
            Having such meta-data is similar to the meta-data LLVM associates to the program, which the compiler can keep track of, during an optimization \cite{MetaDataSoham}.
            
        \paragraph*{Tracking Conditional branch choices}
            
            Associating identities helps us more precisely categorize effects involving memory events that are congruent (eg: two reads to the same location x). 
            But this is not sufficient to identify the exact pairs of pre-traces in Fig~\ref{fig:reord-eg}; the meta-data on events alone would still consider $P1, P2'$ to be a valid pair.    
            
            To address this, we additionally associate the conditional branch points with a unique identity, which captures the conditional expression and an identity to the branch. 
            This ensures we know exactly which set of conditional branch points remain unchanged between the original and transformed program.
            With this additional meta-data, we ensure that for each common branch point, the same choice (then branch or else branch) is made to generate the corresponding pre-traces.
            Two branch points are common if they have the same identity and same expression. 
            This rules out the case of $P1, P2'$ being a valid pair in Fig~\ref{fig:reord-eg}.            
            \begin{remark}
                Asserting the equality of two conditional expressions can be complex, incorporating thread-local semantics to identify more common branch points, but for simplicity we assume them to be just the same expression.      
            \end{remark}
            Fig~\ref{fig:reord-track-effect} summarizes our idea of event set and conditional branch point identities. 
            The memory events are captured by identities $A, A1, A2, A3$ and the branch point is captured by $C1$.
            \begin{figure*}[htbp]
                \centering
                \includegraphics[scale=0.6]{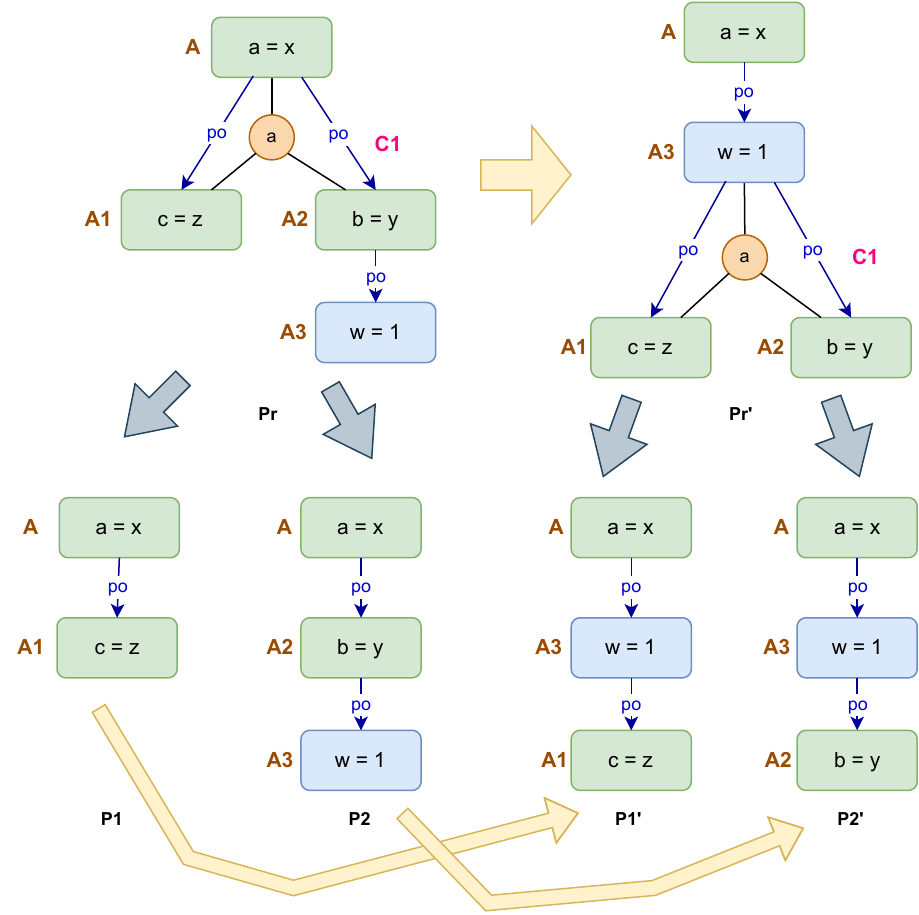}
                \caption{Tracking event set identities ($A, A1, A2, A3$) and conditional branch points ($C1$) allows us to represent the reordering as only the two effects $P1 \mapsto P1', P2 \mapsto P2'$.}
                \label{fig:reord-track-effect}
            \end{figure*}

        More generally, let $B_{i}$ represent a branch point unique identity, where $id(B_{i})$ is the label, $e(B_{i})$ the associated conditional expression and $c(B_{i})$ the choice made. 
        Let $cond(P)$ be the set of all conditional branch points $\{ B1, B2, B3 \ldots \}$ for pre-trace $P$.
        Then, two pre-traces $P1, P2$ are comparable if the choices for every conditional branch point common to both is the same.
        We now finally have a process to compare pre-traces, which form a pair between which the transformation-effect occurs.
        Let $\pts{Pr}$ represent the set of pre-traces for abstract program $Pr$. 
        \begin{definition}
            \label{def:pre-trace-cmp}
            Two pre-traces $P \in \pts{Pr}, P' \in \pts{Pr'}$ are \emph{comparable} ($P \sim P'$) if for each conditional branch point common to both, the same conditional outcome is chosen. 
            \begin{align*}
                \forall B \in cond(P), B' \in cond(P') \ . \ id(B) = id(B') \wedge e(B) = e(B') \implies c(B) = c(B').
            \end{align*}
        \end{definition}

        \begin{remark}
            Every $P \mapsto_{tr} P'$ we consider henceforth inherently implies $P \sim P'$.
        \end{remark}

        Given such similar pre-traces, we can extract the appropriate transformation-effect that changes one pre-trace to another.

    \subsection{Safety of Transformation and its Effects}
        
        \label{subsec:prelim-safety}

        On having explained how to identify the relevant pairs of pre-traces that constitute a transformation-effect, we can now delve into their safety.  
        In general, a program transformation is considered safe if the resultant program does not introduce additional behaviors.
        In our context, a behavior is synonymous to a candidate execution.
        Therefore, we first specify how to compare them.
        We can compare two candidate executions, if their common set of read events have the same $\rf$ relations and their common set of writes have the same $\mo$.
        \note{
        \begin{definition}
            \label{def:sim-exec}
            Two executions $E, E'$ are \emph{comparable}, written $E' \sim E$, if
            \begin{align*}
                &r(p(E')) \cap r(p(E)) = codom(\rf(E') \cap \rf(E)) \\ 
                &\forall a \in \mo(E) \ . \ (dom(a) \in w(p(E')) \wedge codom(a) \in w(p(E'))) \implies a \in \mo(E').  
            \end{align*}
            where \textit{dom} and \textit{codom} returns the domain and co-domain respectively of input binary relation.    
        \end{definition}
        }
        
        \begin{remark}
            $\sim$ is symmetric ($E' \sim E \iff E \sim E'$).    
        \end{remark}
        
        We now specify how to compare sets of behaviors.
        \begin{definition}
            \label{def:cand-exec-set-cmp}
            An execution set $A$ is \emph{contained} in candidate execution set $B$, written $A \sqsubseteq B$, if  
            \begin{align*}
                \forall E' \in A \ . \ \exists E \in B \ \text{s.t} \ E' \sim E.  
            \end{align*}       
        \end{definition}
        
        We first use the above relation to constrain the transformation-effects $P \mapsto_{tr} P'$ that we address; those that do not introduce a write-value.
        Excluding them is not uncommon: while introducing redundant/unused writes is safe in sequential code, they could potentially be read by other threads in a concurrent setting. 
        This can be expressed as the set of candidate executions of $P'$ to always be contained in that of $P$, or formally $\langle P' \rangle \sqsubseteq \langle P \rangle$.
        \begin{remark}
            We henceforth only consider transformation-effects that respect $\langle P' \rangle \sqsubseteq \langle P \rangle$.    
        \end{remark}
        
        We now can specify the safety of a transformation-effect.
        Informally, we require the effect to not introduce any additional consistent behaviors.
        This can be formally expressed as follows. 
        \begin{definition}
            \label{def:safe-transf}
            A transformation-effect $tr$ on a pre-trace $P \mapsto_{tr} P'$ is \emph{safe} under memory model $M$, written $\psf{M}{tr}{P}$, if 
            \begin{align*}
                &\llbracket P' \rrbracket_{M} \sqsubseteq \llbracket P \rrbracket_{M}.
            \end{align*}
        \end{definition}

        \note{
            \paragraph{Effects Altering Final Shared Memory State}
            We note that Def~\ref{def:safe-transf} would allow effects that may remove writes altering the final state of memory.
            For instance, even for a simple message-passing program (left) in Fig~\ref{fig:MP-FS}, transformation removing writes $x=1$ and $y=1$ (right) will always be considered a safe effect as per Def~\ref{def:safe-transf} under sequential consistency.
            However, in reality, this alters the final state of memory from $x=1 \wedge y=1$ (left) to $x=0 \wedge y=0$ (right), which is clearly something we do not desire.
            To prevent this, we instead assume the final writes (maximal writes in $\mo_{loc}$) can be all read at the end of the program execution\footnotemark. 
            This can be represented in the form of reads syntactically ordered after all events in a pre-trace, which are also associated with an $\rf$ relation as normal reads. 
            Having this assumption prevents us from allowing effects of the form like in Fig~\ref{fig:MP-FS}.
        }

        \begin{figure*}[htbp]
            \centering
            \includegraphics[scale=0.6]{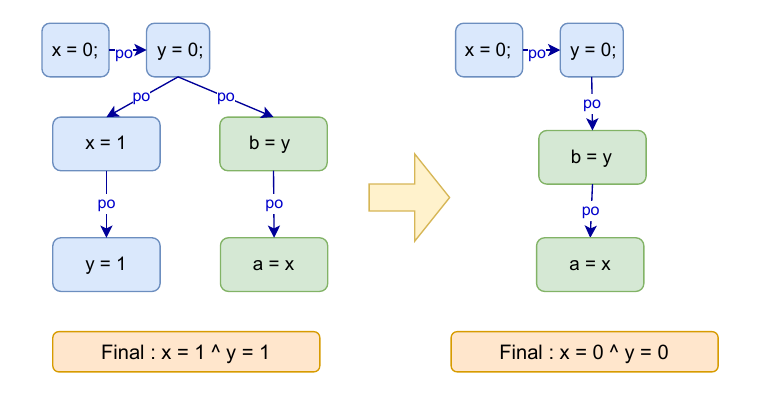}
            \caption{Def~\ref{def:transf-safe} considers the above effect to be safe under $SC$, thereby allowing a new behavior with different final state of memory.}
            \label{fig:MP-FS}
        \end{figure*}

        \footnotetext{
            The set of reads to the final state of memory can be a subset of shared memory, as some locations may not be used (read) by the program at all.
            We can leave the precise choice of final reads to the compiler. 
        }

    
    \subsubsection{From Safety of Effects to Transformations}

        Def~\ref{def:pre-trace-cmp} gives us the pairs of pre-traces for which we need to check safety of transformation-effects, while Def~\ref{def:safe-transf} gives us the criteria for safety of a transformation-effect.
        In order to determine safety of the transformation, we first note that every pre-trace of the transformed program represents one possible execution trace. 
        In order to ensure no additional behaviors are introduced, each of these pre-traces must then have at least one matching comparable pre-trace from the original program.
        Then, for all such matching pairs, we would require the corresponding effects to be safe. 
        \begin{definition}
            \label{def:transf-safe}
            A program transformation  $t$, modifying abstract program $Pr$ to $Pr'$, is safe under memory model $M$ if 
            \begin{align*}
                &\forall P' \in \pts{Pr'} \ . \ \exists P \in \pts{Pr} \ . \ P \sim P'. \\
                &\forall P \in \pts{Pr}, P' \in \pts{Pr'} \ . \ P \sim P' \implies \psf{P}{tr}{M}.
            \end{align*}
        \end{definition}
        


    \section{Methodology}
    \label{sec:methodology}
    With the base elements setup, there are two directions one can take. 
    The first obvious one is to test existing models used in practice.
    However, this would require additionally mapping the instruction semantics to a common language over which we can formally specify and compare the models for our purpose \cite{IMMAnton}.
    While this can be done, we note that our primary goal is on understanding the interaction between memory consistency semantics and optimizations.
    Hence, we resort to the second approach; we derive models incrementally by adding desired transformations known to be unsafe under a base model. 
    In this section, we lay out our methodology in deriving memory models this way, followed by formally identifying the properties desired that make it compositional w.r.t. transformations.

    We first lay out the intuition to our approach, specifying the desired compositional properties that the resultant model must satisfy w.r.t. the original (Sec~\ref{subsec:meth-proposal}).
    We then formally specify each desired property, followed by laying out in brief a strategy to prove them (Sec~\ref{subsec:meth-weaksound}, Sec~\ref{subsec:meth-comp}).
    \subsection{Proposed Approach}

        \label{subsec:meth-proposal}

        Recall that a transformation-effect is unsafe under memory model $B$ because on performing such a transformation, a new behavior may be introduced. 
        This means, the behavior in the original program was disallowed by some consistency rule (axiom/constraint) being violated.
        Hence, allowing a transformation implies removing a subset of axioms from $B$.
        Fig~\ref{fig:prop-transf-model} explains this intuition. 
        \begin{figure*}[htbp]
            \centering
            \includegraphics[scale=0.6]{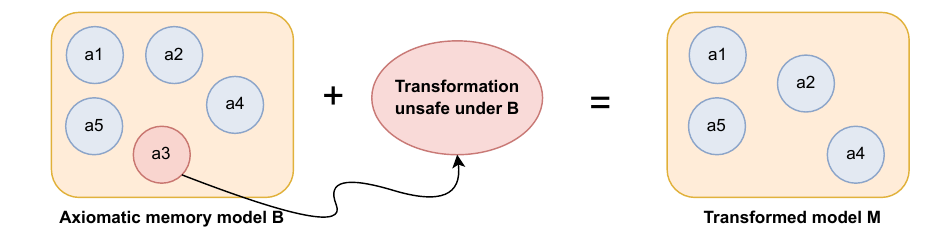}
            \caption{For memory model $B$, allowing an unsafe transformation is equivalent to removing consistency rule $a3$, giving us resultant transformed model $M$.}
            \label{fig:prop-transf-model}
        \end{figure*}
        Suppose we identify such an axiom for some transformation-effect $tr$ and remove it (ex: $a3$ in Fig~\ref{fig:prop-transf-model}), giving us resultant transformed model $M$.
        To ensure this is the desired model, we need to prove the following: 
        \begin{itemize}
            \item The resultant model $M$ allows more behaviors: \textit{Weak}.
            \item The resultant model $M$ allows the desired transformation-effect $tr$ for all pre-traces: \textit{Sound}.
            \item The resultant model $M$ preserves the safety of existing transformation-effects: \textit{Complete}.
        \end{itemize}

        \paragraph*{Relational notations}
        Given a binary relation $R$, let $R^{-1}$, $R^{?}$, $R^{+}$ represent inverse, reflexive closure and transitive closure respectively. 
        Let $[E]$ represent the identity relation on set $E$.
        Let $R1;R2$ represent left sequential composition of two binary relations.
        \note{
        Let $[A];R;[B]$ represent the relation $R \cap (A \times B)$.
        Lastly, we say the composition $R1;R2$ \emph{forms a cycle} or $R1;R2$ \emph{cycle} if there exists a cyclic path given by a non-empty relation $[a];R1;[b];R2;[a]$.
        }
        
        \paragraph*{Assumptions}

            Before we formally specify and discuss each of the above parts to prove, we first go state our assumptions on memory models. 
            Each memory model we consider will place a constraint on the existence of specific cycles in the graph. 
            \note{
                We assume that each such constraint of the memory model can be represented as a set of irreflexivity constraints over binary relations instead\footnotemark.
            }
            \footnotetext{
                Many existing models are generally represented in the form of acyclic constraints (eg: C11, RC11, Java, etc.).
                Identifying an equivalent set of irreflexivity constraints which is suitable enough for our purpose is left to future work.
            }
            For example, the constraint $(\po \cup \rf)^{+} \ \text{acyclic}$ can instead be specified as $(\po \cup \rf)^{+} \ \text{irreflexive}$. 
            Next, we assume no constraint is redundant: one cannot be contained within another. 
            For example, the constraint $(\po \cup \rf)^{+} \ \text{irreflexive}$ is contained within the constraint $(\po \cup \rf \cup \mo)^{+} \ \text{irreflexive}$\footnotemark.
            \footnotetext{
                Such a constraint being violated implies the relation forms a cycle.
            }

    \subsection{Weak and Sound}

    \label{subsec:meth-weaksound}
    A memory model is \textit{weaker} than another if it allows more concurrent behaviors.
    This can be formally expressed as follows.
    \begin{definition}
        \label{def:weak-mem-model}
        A memory model $W$ \emph{weaker} than memory model $B$ ($\text{weak}\langle W, B \rangle$) if every execution consistent in $B$ is also consistent in $W$.    
        \begin{align*}
            \forall E \ . \ c_{B}(E) \implies c_{W}(E).
        \end{align*}
    \end{definition}
    Proving weakness for our methodology is not required, as by construction, we remove an axiom from the base model. 
    
    A transformation-effect is \textit{sound} for a memory model if it is safe for any pre-trace.
    This can be formally expressed as 
    \begin{definition}
        \label{def:sound-transf}
        A transformation effect $tr$ is \emph{sound} w.r.t. memory model $M$ $sound(M, tr)$ if 
        \begin{align*}
            \forall P \ . \ \psf{A}{tr}{P}.
        \end{align*}
    \end{definition}

    Proving sound naively would require proving safety for all possible pre-traces.
    Fortunately, it suffices to quantify pre-traces using the axioms of given memory model, reasoning over their set of consistent/inconsistent candidate executions \cite{SafeOptSevcik}.

    \subsection{Complete}

    \label{subsec:meth-comp}

    In Sec~\ref{subsec:prelim-transf}, we saw how transformations can be decomposed as a series of transformation-effects.
    Therefore, in order to preserve safety of transformations from one model to another, we must preserve the safety of the composed effects.
    \begin{definition}
        \label{def:complete}
        A memory model $M$ is \emph{Complete} w.r.t. memory model $B$ ($\text{comp}\langle M, B\rangle$) if 
        \begin{align*}
            \forall P \mapsto_{tr} P' \ . \ \psf{B}{tr}{P} \implies \psf{M}{tr}{P}. 
        \end{align*}
    \end{definition}
    
    Proving Def~\ref{def:complete} requires us to quantify over all possible transformation-effects.
    One approach can be to decompose each transformation into a series of elementary effects, which can defined by us.  
    However, this can quickly turn out to be infeasible, when we desire to quantify over effects like those in Fig~\ref{fig:comp-transf}.
    Such a transformation involves several variants of de-ordering ($\po^{-} \neq \phi$), inlining ($\po^{+} \neq \phi$) and reordering (as in Fig~\ref{fig:reord-eg}) effects, depending on the way we decompose them.
    \begin{figure*}[htbp]
        \centering
        \includegraphics[scale=0.6]{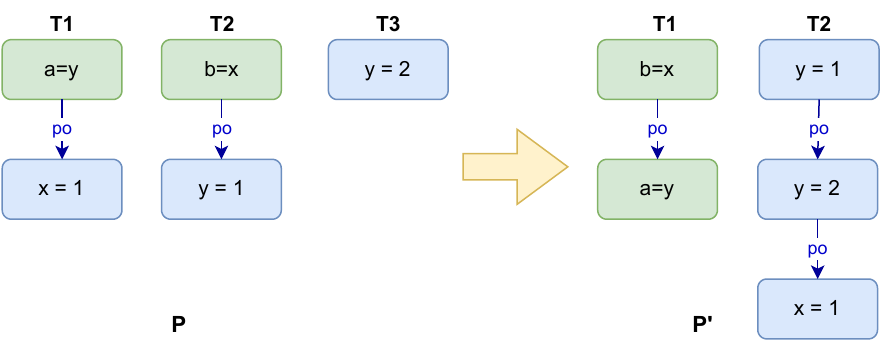}
        \caption{Example of a non-trivial transformation-effect $P \mapsto P'$, which can be decomposed into several forms of de-ordering, reordering and inlining transformation-effects.}
        \label{fig:comp-transf}
    \end{figure*}
    Separating our concerns case-wise over each set in Def~\ref{def:transf-effect} is also not practical, as even simple reordering involves both $\po^{-}$ and $\po^{+}$ in unison.
    We therefore try to prove Complete by defining the set of transformation-effects in an alternative fashion; one that utilizes the axioms of the models instead.

    We note that the set of transformation-effects we require proving Complete, by Def~\ref{def:complete}, are those safe under $B$.
    From a contrapositive lens, we instead require the set of transformation-effects $tr$ unsafe under $M$.
    Each of these $tr$ unsafe under $M$, by Def~\ref{def:safe-transf}, tell us that some consistent execution $E'$ in the transformed pre-trace has no comparable consistent execution $E$ in the original.
    The contraposition then requires the transformation-effect to also be unsafe under $B$.
    By $\text{weak}\langle M, B \rangle$, proving it would require us to handle only the case when $E'$ is also inconsistent under $B$.
    If this information suffices to prove that every such $tr$ is unsafe under $B$, then we are done.
    The above idea culminates into the following theorem.
    \begin{theorem}
        \label{thm:complete-gen}
        Consider memory models $B$, $M$ s.t. $\text{weak}\langle M, B \rangle$. 
        Consider any pre-trace $P$ and all transformation-effects $P \mapsto_{tr} P'$ such that 
        \begin{align*}
            \exists E' \in \llbracket P' \rrbracket_{M} \ . \ \forall E \in \langle P \rangle \ . \ E \sim E' \implies E \in I\langle P \rangle_{M},
        \end{align*}   
        If $\forall tr \ . \ \neg c_{B \setminus M}(E') \implies \neg \psf{B}{tr}{P}$ then $\text{comp} \langle M, B \rangle$. 
    \end{theorem}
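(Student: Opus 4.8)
The plan is to prove $\comp{M}{B}$ (Def~\ref{def:complete}) by contraposition, turning the goal into the two-case split that the theorem's hypothesis is built to close. First I would unfold Def~\ref{def:complete} and note it suffices to establish, for every effect $P \mapsto_{tr} P'$, the implication $\psf{B}{tr}{P} \implies \psf{M}{tr}{P}$; taking its contrapositive, I fix an effect with $\neg\psf{M}{tr}{P}$ and target $\neg\psf{B}{tr}{P}$. The first key step is to unfold $\sqsubseteq$ (Def~\ref{def:cand-exec-set-cmp}) inside $\psf{}{}{}$ (Def~\ref{def:safe-transf}) and observe that $\neg\psf{M}{tr}{P}$ is \emph{verbatim} the existential condition in the theorem statement: there is a witness $E' \in \llbracket P'\rrbracket_{M}$ all of whose comparable candidate executions lie in $I\langle P\rangle_{M}$. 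Dually, any effect that fails this condition already satisfies $\psf{M}{tr}{P}$, so the target implication is trivially true there and only the witnessed effects require work.

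With the witness $E'$ fixed, I would split on whether it violates an axiom removed from $B$. Since $M$ is $B$ with some axioms deleted, consistency factors as $c_{B}(E') \iff c_{M}(E') \wedge c_{B\setminus M}(E')$, and $c_{M}(E')$ already holds because $E' \in \llbracket P'\rrbracket_{M}$. In the case $\neg c_{B\setminus M}(E')$, the theorem's standing premise yields $\neg\psf{B}{tr}{P}$ directly. In the complementary case $c_{B\setminus M}(E')$ we obtain $c_{B}(E')$, hence $E' \in \llbracket P'\rrbracket_{B}$, and I claim the same $E'$ witnesses $\neg\psf{B}{tr}{P}$. To see this I invoke $\text{weak}\langle M, B\rangle$ (Def~\ref{def:weak-mem-model}) in contrapositive form, giving $I\langle P\rangle_{M} \subseteq I\langle P\rangle_{B}$: inconsistency under the weaker $M$ forces inconsistency under $B$. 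Every $E \in \langle P\rangle$ comparable to $E'$ therefore lies in $I\langle P\rangle_{M} \subseteq I\langle P\rangle_{B}$, so no consistent-under-$B$ execution of $P$ is comparable to $E'$; unfolding $\sqsubseteq$ this is exactly $\neg(\llbracket P'\rrbracket_{B} \sqsubseteq \llbracket P\rrbracket_{B})$, i.e.\ $\neg\psf{B}{tr}{P}$.

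The steps demanding care are bookkeeping rather than conceptual: dualizing $\sqsubseteq$ and $\psf{}{}{}$ so that $\neg\psf{M}{tr}{P}$ aligns exactly with the theorem's condition, and using $\llbracket P\rrbracket_{B} \subseteq \langle P\rangle$ together with the symmetry of $\sim$ to specialize the witness condition (quantified over all of $\langle P\rangle$) to consistent-under-$B$ executions. The genuine difficulty is deliberately exported into the hypothesis: the case $\neg c_{B\setminus M}(E')$, where $E'$ is consistent under $M$ yet violates a removed axiom, \emph{cannot} be closed by weakness alone, since then $E' \notin \llbracket P'\rrbracket_{B}$ and $E'$ ceases to be a witness under $B$. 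This theorem does not settle that case; it packages it as the premise $\neg c_{B\setminus M}(E') \implies \neg\psf{B}{tr}{P}$, to be discharged model-specifically (for the instance relating $\textit{SC}$ to $\textit{SC}_\textit{RR}$, by reasoning about the single removed read--read ordering axiom). I expect that model-specific discharge, not the reduction sketched above, to be the real obstacle.
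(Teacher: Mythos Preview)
Your proposal is correct and follows essentially the same reasoning as the paper. The paper organizes the argument as a contraposition on the entire theorem (assuming $\neg\comp{M}{B}$ and hence $\psf{B}{tr}{P}$, which immediately rules out your case $c_{B\setminus M}(E')$ and forces $\neg c_{B\setminus M}(E')$, contradicting the hypothesis), whereas you take the contrapositive of the per-effect implication and handle both cases of the split explicitly; the underlying steps---extracting the witness $E'$ from $\neg\psf{M}{tr}{P}$, using $\text{weak}\langle M,B\rangle$ to push $I\langle P\rangle_{M}\subseteq I\langle P\rangle_{B}$, and factoring $c_{B}(E')\iff c_{M}(E')\wedge c_{B\setminus M}(E')$---are identical.
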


    \begin{proof}
        We prove this by contraposition.
        Assuming $\neg comp\langle M, B \rangle$, we have
        \begin{flalign*}
            &\to \exists P \mapsto_{tr} P' . \neg (\psf{B}{tr}{P} \implies \psf{M}{tr}{P}). \\ 
            &\to \psf{B}{tr}{P} \wedge \neg \psf{M}{tr}{P}. \tag*{(Def~\ref{def:complete})}\\
            &\to \neg (\llbracket P' \rrbracket_{M} \sqsubseteq \llbracket P \rrbracket_{M}). \\
            &\to \exists E' \in \llbracket P' \rrbracket_{M} . \nexists E \in \llbracket P \rrbracket_{M} \ \text{s.t.} \ (E' \sim E). \tag*{(Def~\ref{def:safe-transf})} \\ 
            &\to E \in I\langle P \rangle_{M}. \tag*{($\langle P' \rangle \sqsubseteq \langle P \rangle$)}\\ 
            &\to E \in I\langle P \rangle_{B}. \ \tag*{(Def~\ref{def:weak-mem-model})}\\ 
            &\to E' \in I\langle P' \rangle_{B}.\ \tag*{(Def~\ref{def:safe-transf},\ref{def:weak-mem-model})} \\
            &\to \neg c_{B \setminus M}(E'). &
        \end{flalign*}
        Which means $\forall tr \ . \ \neg c_{B \setminus M}(E') \implies \neg \psf{B}{tr}{P}$ is false.
        Hence, proved. 
    \end{proof}

    Theorem~\ref{thm:complete-gen} helps us prove $M$ Complete w.r.t. $B$ by simply identifying those transformation-effects which need to be proven unsafe under concerned base model $B$. 
    \note{
    On a high level, we do this by identifying another $E'_{t} \in \llbracket P' \rrbracket_{B}$ which has no comparable $E_{t} \in \llbracket P \rrbracket_{B}$. 
    In order to derive such an execution, we manipulate specific $\rf$ relations of an inconsistent candidate execution.  
    The idea is that for a certain set of inconsistent $E'$ (at least under the memory models we consider), we can identify reads which, on the removal of their associated $\rf$ relation, give us a consistent execution instead (albeit not well-formed). 
    Fig~\ref{fig:crucial} shows the example of an inconsistent execution $E$ in the middle for a memory model with constraint $(\po \cup \rf)^{+} \ \text{acyclic}$. 
    We can see how removing the $\rf$ relation with $b=y$ or $a=x$  gives us a resultant executions $E1$ and $E2$ both of which are consistent under the same model.
    }
    \begin{figure*}[htbp]
        \centering
        \includegraphics[scale=0.6]{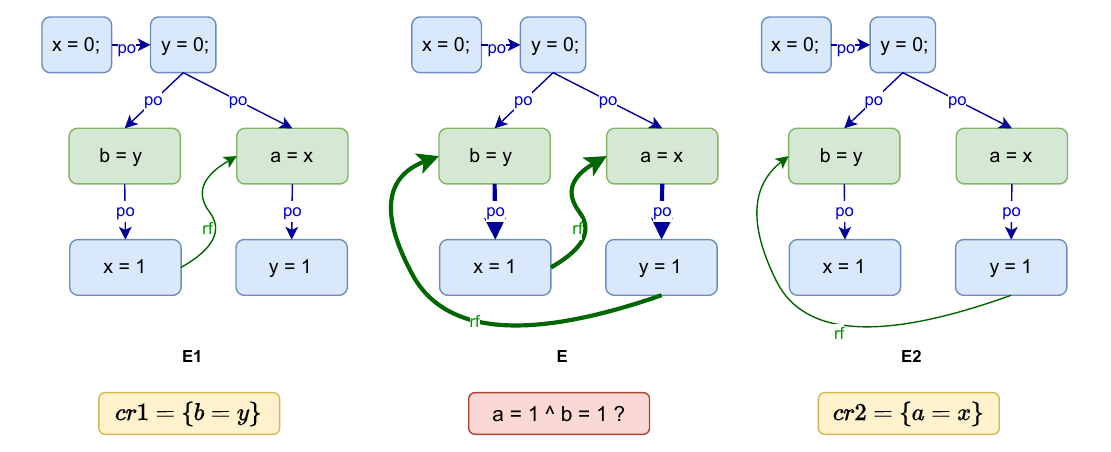}
        \caption{Removing the $\rf$ relation with $b=y$ (left) or with $a=x$ (right) from the inconsistent execution $E$ gives us consistent executions $E1$ and $E2$ respectively under memory model $(\po \cup \rf)^{+}$ acyclic.}
        \label{fig:crucial}
    \end{figure*}
    
    Formally, we denote such reads as \textit{crucial} w.r.t. a memory model for a candidate execution, and a collection of such reads as a crucial set.  
    \note{
    \begin{definition}
        \label{def:cr-rf}
        Given a execution $E$ inconsistent under memory model $M$, a set $cr \subseteq r(p(E))$ is a \emph{crucial set} if the execution $E' \sim E$ such that 
        \begin{tasks}(2)
            \task $\rf(E') = \rf(E) \setminus \ \{ [a];\rf;[b] \ | \ b \in cr \}$. 
            \task $\mo(E') = \mo(E)$.
        \end{tasks} 
        is consistent under $M$.
    \end{definition}
    Recalling example in Fig~\ref{fig:crucial}, both $cr1$ and $cr2$ are crucial sets.
    }

    \begin{remark}
        A crucial set may not always exist; the inconsistent execution in Fig~\ref{ex-mem-model} has no crucial set for the concerned memory model.
    \end{remark}
    
    \note{
    We now link crucial sets to the constraints of the memory model, which we recall will be in the form of irreflexivity constraints on binary relations.
    Violation of any such constraint would imply the existence of a cycle in the graph, like that in Fig~\ref{fig:crucial}.
    Therefore, the crucial reads would be those whose $\rf$ relations enable such a cycle to exist.  
    \begin{definition}
        \label{def:crucial-refl-comp}
        We define $cra$ as a function that takes in a binary relation $s$ between memory events in $E$ and returns a set of reads $r$ such that without each of its associated $\rf$, the relation is irreflexive.
        \begin{align*}
            \forall r \in cra(s) \ . \ r \notin st(p(E)) \implies s \ \text{irreflexive}
        \end{align*}
    \end{definition}
    }

    \footnotetext{
        The edges of the cycle will depend on the binary relations used to establish them. 
        These will be as per the memory model.
    }

    \note{
    With $cra$, we can construct crucial sets for each $E'$ inconsistent under $B$, choosing at most one read from each cyclic path formed which violate the constraints of $B$.
    Recall the purpose of identifying such a crucial set is to derive another $E'_{t}$ such that $c_{B}(E'_{t})$.
    We can do this if it is possible to reassign the $\rf$ relations of the reads in the crucial set to give us our desired $E'_{t}$.
    For the models we consider, we show this is possible by first removing all the $\rf$ relations from one of its crucial sets, followed by incrementally adding \emph{consistent} $\rf$ relations one by one.
    We term this property as \emph{piecewise consistent} that a memory model satisfies for any execution $E$.    
    Formally, 
    \begin{definition}
        \label{def:piecewise-cons}
        A memory model $M$ is \emph{piecewise consistent} ($\pwc{M}$), if for any execution $E$ such that 
        \begin{tasks}(2)
            \task $\neg \wf{E}$.
            \task $c_{M}(E)$.  
            \task $\rf^{-1}$ functional.
            \task $\mo$ total order.
            \task $\forall loc \ . \ max(loc) \neq \phi$.
        \end{tasks}
        we have a candidate execution $E_{t}$ such that 
        \begin{tasks}(2)
            \task $c_{M}(E_{t})$.
            \task $p(E) = p(E_{t})$.
            \task $\mo(E) = \mo(E_{t})$.
            \task $\rf(E) \subset \rf(E_{t})$.
        \end{tasks}
    \end{definition}
    }

    We show in Sec~\ref{subsec:scrr-conc} how the above ideas are utilized to prove Complete for a concrete memory model w.r.t. $SC$.

    \section{Towards Concrete Models}
    \label{sec:concrete}
    We now apply our formalism for concrete axiomatic models.
    First, we specify the axiomatic model of \textit{SC} in the form of irreflexivity constraints (Sec~\ref{subsec:sc-conc}).
    This is followed by showcasing with our formalism why \textit{TSO} is not Complete w.r.t. \textit{SC} (Sec~\ref{subsec:tso-conc}). 
    Next, we derive $\textit{SC}_\textit{RR}$ as a model that allows independent read-read reordering over \textit{SC}.
    We show that $\textit{SC}_\textit{RR}$ is Weak, Sound and Complete for transformation-effects not involving write-elimination (Sec~\ref{subsec:scrr-conc}).
    Then, we show that $\textit{SC}_\textit{RR}$ remains Weak, Sound and Complete (barring write-eliminations) w.r.t. \textit{SC} on inclusion of \textit{fences} ($frr$) and \textit{read-modify-write} ($rmw$) events in the language (Sec~\ref{subsec:scfrr-conc}).
    Lastly, we conclude by showing why it cannot be proven Complete for effects involving write-elimination (Sec~\ref{subsec:scrr-welim}).

    \paragraph*{Additional relations}
    First, let \textit{read-from-internal} ($\rfi$) represent the subset of $\rf$ such that both the write and read are of the same thread,let \textit{reads-from-external} ($\rfe$) be the rest.
    Second, let \textit{happens-before} ($\hb$) be the transitive closure of program order and reads-from relations $(\po \cup \rf)^{+}$.
    Third, let \textit{memory-order-loc} ($\mo_{|loc}$) represent the memory order between writes to same memory.
    Finally, let \textit{reads-before} ($\rb$) represent the sequential composition $\rf^{-1};\mo_{|loc}$, which relates reads to writes which have taken place before it to same memory in an execution.  formally

\subsection{Sequential Consistency (SC)}

    \label{subsec:sc-conc}

    We adopt the axiomatic format of \textit{SC} as in \cite{LahavV}, which consists of the following set of consistency rules for any execution $E$
    \begin{tasks}(3)
        \task $\mo$ strict total order.
        \task $\hb$ irreflexive.
        \task $\mo;\hb$ irreflexive. 
        \task $\rb;\hb$ irreflexive.
        \task $\rb;\mo;\hb$ irreflexive.
    \end{tasks}

    Fig~\ref{fig:sc-wr-axiomatic} show the two executions of store buffering discussed earlier in Fig~\ref{SB}. 
    The outcome to the left has no rule of \textit{SC} is violated, thus deeming it consistent under \textit{SC}.
    Whereas the outcome to the right is inconsistent, the cycle $[a=x];\rb;[x=1];\mo;[y=1];\po;[a=x]$ violates Rule (e).
    \begin{figure*}[htbp]
        \centering
        \includegraphics[scale=0.6]{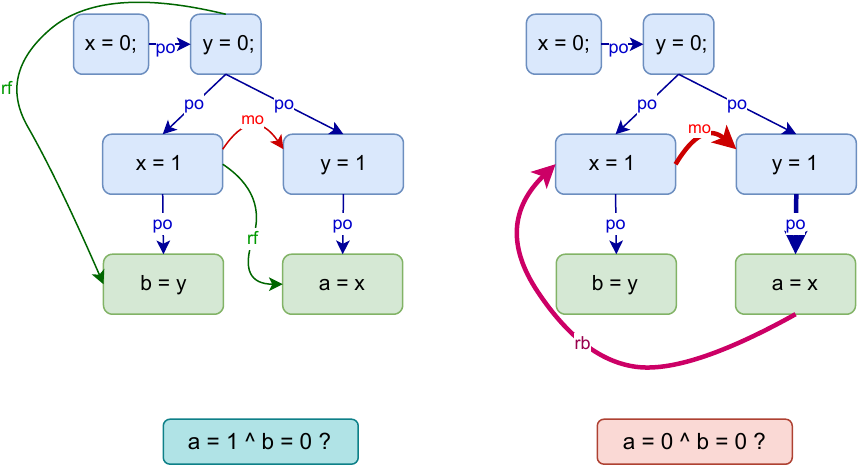}
        \caption{Two outcomes of store buffering; one consistent (left) and another inconsistent (right) under \textit{SC}.}
        \label{fig:sc-wr-axiomatic}
    \end{figure*}

    \note{
    We use Def~\ref{def:crucial-refl-comp} to correlate \textit{SC} with the notion of \textit{crucial reads}.
    Recall that the cyclic path in Fig~\ref{fig:sc-wr-axiomatic} right results in the sequential composition $\rb;\mo;\hb$ to be no longer irreflexive.
    Specifically, $[a=x];\rb;[x=1];\mo;[y=1];\po;[a=x]$ is non-empty.
    \begin{restatable}{lemma}{lemsccra}
        \label{lem:sc-cra}
        For a given execution $E$, we have 
        \begin{align*}
            &cra(\mo) = \phi \\
            &cra(\hb) = \{r \ | \ ([e];\rfe;[r];\hb;[e] \neq \phi) \vee ([e];\rfi;[r];\po;[e] \neq \phi) \} \\ 
            &cra(\mo;\hb) = \{r \ | \ [e];\mo;\hb^{?};\rfe;[r];\hb;[e] \neq \phi \} \\
            &cra(\rb;\hb) = \{r \ | \  ([e];\rb;\hb^{?};\rfe;[r];\hb;[e] \neq \phi) \vee ([r];\rb;\hb;[r] \neq \phi) \} \\
            &cra(\rb;\mo;\hb) = \{r \ | \ ([e];\rb;\mo;\hb^{?};\rfe;[r];\hb;[e] \neq \phi) \vee ([r];\rb;\mo;\hb;[r] \neq \phi) \}
        \end{align*}  
    \end{restatable}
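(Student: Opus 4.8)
The plan is to prove each of the five set equalities by establishing both inclusions, with the workhorse being a structural analysis of the reflexive paths (cycles) that can occur in each relation. The facts I would establish up front are: (i) under the SC rules, $\po \cup \mo$ is acyclic, since $\po \subseteq \hb$ and any $\mo$-using cycle is caught by $\mo$ being a strict total order together with $\hb$ and $\mo;\hb$ being irreflexive; and (ii) a ``normal'' internal read-from edge $[e];\rfi;[r]$ for which $[e];\po;[r]$ also holds is redundant, because the parallel $\po$ edge supports any path the $\rfi$ edge does, so deleting such an $r$'s $\rf$ never breaks a cycle. Together these say that the only $\rf$ edges that can be load-bearing in a cycle are external ($\rfe$) edges and internal edges in the abnormal direction $[r];\po;[e]$, which by themselves already form the two-step cycle $[e];\rfi;[r];\po;[e]$.

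For $cra(\mo)$ the claim is immediate: $\mo$ relates only writes and is irreflexive by rule (a), so no read's $\rf$ lies on any $\mo$-path and the set is empty. For $cra(\hb)$ I would take a reflexive $\hb$-path, expand it into $\po$ and $\rf$ edges, and use facts (i)--(ii) to locate a load-bearing $\rf$ edge into some read $r$. If it is external we obtain the cycle $[e];\rfe;[r];\hb;[e]$, where the rest of the cycle from $r$ back to the source write collapses into one $\hb$ edge; if it is internal it must be the abnormal two-cycle $[e];\rfi;[r];\po;[e]$. These are exactly the two listed disjuncts. Conversely, any $r$ satisfying one disjunct sits on such a cycle through a load-bearing $\rf$, so removing that $\rf$ restores irreflexivity.

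For the composed relations $\mo;\hb$, $\rb;\hb$ and $\rb;\mo;\hb$ I would reason under the assumption that the lower-numbered rules already hold (in particular $\hb$ irreflexive), so abnormal $\rfi$ edges are absent and every cycle, after deleting redundant normal $\rfi$ edges, uses only $\po$, $\mo$, $\rfe$ and $\rb$. A reflexive path then splits into a non-$\hb$ prefix ($\mo$, $\rb$, or $\rb;\mo$) followed by an $\hb$ segment back to the start: the prefixes $[e];\mo;\hb^{?}$, $[e];\rb;\hb^{?}$, $[e];\rb;\mo;\hb^{?}$ record travel along the composed relation up to the write that externally feeds the crucial read $r$, and $[r];\hb;[e]$ closes the cycle. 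The new phenomenon in the $\rb$ cases is that $\rb = \rf^{-1};\mo_{|loc}$ originates at a read, so deleting $r$'s $\rf$ kills not only an incoming $\rfe$ into $r$ but also any $\rb$ edge out of $r$; this yields the second disjuncts $[r];\rb;\hb;[r]$ and $[r];\rb;\mo;\hb;[r]$, in which $r$ is crucial as the source of the leading $\rb$ edge. By contrast $\mo$ originates at a write, which is why $\mo;\hb$ needs only the single $\rfe$ disjunct.

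I expect the main obstacle to be making the cycle decomposition both exhaustive and canonical: I must argue that every reflexive path rewrites into exactly one of the listed shapes with a single identified load-bearing read, and that removing that read's $\rf$ genuinely restores irreflexivity of the whole relation rather than merely breaking one path. The delicate points are carrying out the $\rfi$-redundancy rewrite without accidentally discarding an abnormal $\rfi$ edge, handling the dual role of a read that is simultaneously an $\rfe$-target and an $\rb$-source, and confirming that the prefix/suffix split remains well defined when the $\hb$ segment is empty (hence the $\hb^{?}$ and the separate start-at-$r$ disjuncts). Once facts (i)--(ii) are established cleanly from the SC axioms, the remaining case analysis should be routine.
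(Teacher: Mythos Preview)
Your proposal is correct and takes essentially the same approach as the paper: the paper's sketch also decomposes an $\hb$ cycle into the $\rfe$ and $\rfi$ cases you identify and then defers the remaining relations to ``a similar line of argument.'' Your plan is considerably more explicit than the paper's sketch---you set up the $\rfi$-redundancy and $(\po\cup\mo)$-acyclicity facts, treat both inclusions, and isolate the $\rb$-source disjunct for the last two cases---but the underlying cycle-decomposition argument is the same.
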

    }
    \begin{proof}(sketch)
        Consider the case for $\hb$ which is not irreflexive, i.e. represented by the relation $\hb$ cycle.
        \begin{align*}
            &[e];\hb; \ \text{cycle} \\  
            &\to [e];\rfe;[r];\hb \vee [e];\rfi;[r];\po \\ 
            &\to [e];\rfe;[r];\po;\hb;\po \vee [e];\rfi;[r];\po \ \text{cycle}. 
        \end{align*}
        We can observe that for both cases, without the $\rf$ relation with $r$ and $e$, the cycle will cease to exist. 
        The sequential composition no longer exists, meaning the aforementioned $\hb$ cycle no longer exists \footnotemark.

        The other cases follow a similar line of argument.
    \end{proof}
    
    \footnotetext{
        For $\hb$ to be irreflexive, all such $\hb$ cycles must disappear or no longer exist.
    }
    
    We refer readers to Appendix~\ref{appendix:B} for the proofs that the consistency rules of \textit{SC} are not redundant, and that \textit{SC} is piecewise consistent.

\subsection{Revisiting \textit{TSO} vs \textit{SC}}

    \label{subsec:tso-conc}

    We revisit our counter example on inlining which showcased TSO is not Complete w.r.t. \textit{SC}.
    We do not derive \textit{TSO} from \textit{SC}, rather we use a similar existing formulation of the model in \cite{LahavV}.
    The fragment of \textit{TSO} without fences $F$ and \textit{read-modify-writes/lock}s have the following consistency rules for any execution $E$. 
    \begin{tasks}(2)
        \task $\mo$ strict total order.
        \task $\hb$ irreflexive.
        \task $\mo;\hb$ irreflexive.
        \task $\rb;\hb$ irreflexive.
        \task $\rb;\mo;\rfe;\po$ irreflexive.
    \end{tasks}
    The rules (a), (b), (c), (d) are the same as that for \textit{SC}, whereas rule (e) is a subset of that in \textit{SC}; which specifically allow the behaviors exhibited by FIFO store buffering of TSO.
    As an example, Fig~\ref{fig:tso-wr-axiomatic} shows both the executions of store buffering shown for \textit{SC} in Fig~\ref{fig:sc-wr-axiomatic}. 
    Notice that the outcome on the right is also allowed; no rule of \textit{TSO} is violated.
    \begin{figure*}[htbp]
        \centering
        \includegraphics[scale=0.6]{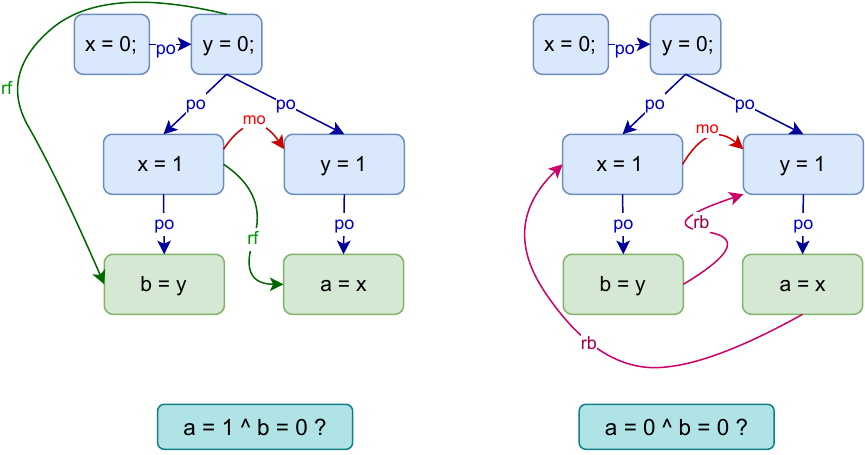}
        \caption{Two outcomes of store buffering, both consistent under \textit{TSO}.}
        \label{fig:tso-wr-axiomatic}
    \end{figure*}
    
    We show why \textit{TSO} is not Complete w.r.t. \textit{SC} using our formalism. 
    We make use of Theorem~\ref{thm:complete-gen} as follows:
    \begin{align*}
        &B = \textit{SC}. & M = \textit{TSO}. \\
        &\textit{weak}\langle \textit{SC}, \textit{TSO} \rangle & B \setminus M = \rb;\mo;\po \ \text{irreflexive}. 
    \end{align*}
 
    To prove Complete, we need to show all transformation-effects quantified by Theorem~\ref{thm:complete-gen} as unsafe under \textit{SC}.
    Even if one of these transformation-effects are safe instead, then by Def~\ref{def:complete}, \textit{TSO} is not Complete w.r.t. \textit{SC}.  
    Thread-inlining qualifies to be such an effect; Fig~\ref{fig:sc-tso-incomplete} shows two executions $E$ (left) and $E'$ as per our requirement.
    We can see that for $E'$ consistent under \textit{TSO}, there only exists inconsistent execution $E$ comparable from the original, i.e. $E \sim E'$. 
    We can see that $E$ (left) is inconsistent both under \textit{TSO} and \textit{SC} due to $[b=x];\rb;[x=1];\mo;[y=1];\rfe;[y];\po$ cycle, violating rule (e) of both \textit{SC} and \textit{TSO}.
    However, for $E'$ (right), note that it is only inconsistent under \textit{SC} due to $[b=x];\rb;[x=1];\mo;[y=1];\po$ cycle, which does not violate any constraint of \textit{TSO}. 
    Since such an inlining-effect is safe (and also sound \cite{MoiseenkoP}) under \textit{SC}, we have $\neg \text{comp}\langle \textit{TSO}, \textit{SC} \rangle$. 
    \begin{figure*}[htbp]
        \centering
        \includegraphics[scale=0.6]{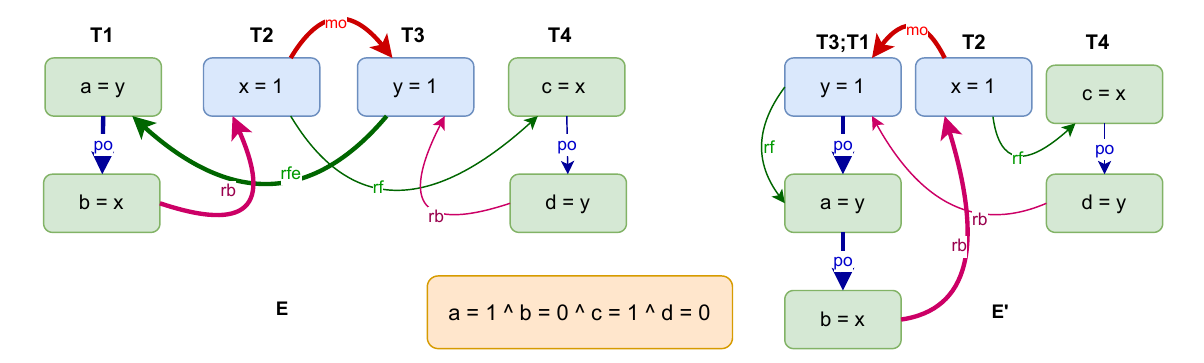}
        \caption{For the given outcome (yellow box), $E$ is inconsistent under both \textit{SC} and \textit{TSO}, but $E'$ is only inconsistent under \textit{SC}.}
        \label{fig:sc-tso-incomplete}
    \end{figure*}

    \subsection{SC + Reordering of Independent Read-Read (\textbf{$\textit{SC}_\textit{RR}$})}

    \label{subsec:scrr-conc}

    With Theorem~\ref{thm:complete-gen}, we identified transformation-effects using which we showed $\neg \comp{TSO}{SC}$. 
    However, the main question remains; is it practically possible to prove a given consistency model Complete  
    w.r.t. another ?
    If not for all transformation-effects, can we instead identify a significant set for which we can ?
    In this section, we show that it is possible to do so. 
    We first derive an axiomatic model from \textit{SC} with the intent on allowing read-read reordering.
    We then show the resultant model is Weak, Sound and Complete, the latter of which is restricted to effects not involving write elimination.  
    \subsubsection{Transformation-effect and Proposed Model}

        We first define reordering of adjacent independent reads as an effect.
        Let $\po_{imm}$ represent the program order between two adjacent events, i.e. no other memory event exists between them in program order ($[a];\po_{imm};[b] \implies \nexists c \ .\ [a];\po;[c];\po;[b]$). 
        \begin{definition}
            \label{def:rr-ind}
            Consider a pre-trace $P$ having two reads $r_{x}, r_{y}$ such that $mem(r_{x}) \neq mem(r_{y})$ and $[r_x];\po_{imm};[r_y]$.
            Then, \emph{reordering of adjacent independent reads} $r_{x}, r_{y}$ is a transformation-effect $P \mapsto_{rr} P'$ such that  
            \begin{tasks}(2)
                \task $r(P') = r(P)$. 
                \task $w(P') = w(P)$.
                \task $\po^{-} = [r_x];\po_{imm};[r_y]$.
                \task $\po^{+} = [r_y];\po_{imm};[r_x]$.
            \end{tasks}
        \end{definition}

        Transformation-effect in Def~\ref{def:rr-ind} can be shown unsafe under \textit{SC} using the example in Figure~\ref{sc-mp}.
        The outcome (yellow box) in the original pre-trace is not possible under \textit{SC} (left) as the candidate execution justifying it has $[b=x];\rb;[x=1];\po;[y=1];\rfe;[a=y];\po$ cycle, thereby violating rule (d) of \textit{SC}.
        However, the outcome is allowed under \textit{SC} as a result of reordering the two adjacent reads $[a=y];\po_{imm};[b=x]$ (right), thereby removing any cyclic path that violate rules of \textit{SC}.
        \begin{figure*}[htbp]
            \centering
            \includegraphics*[scale=0.6]{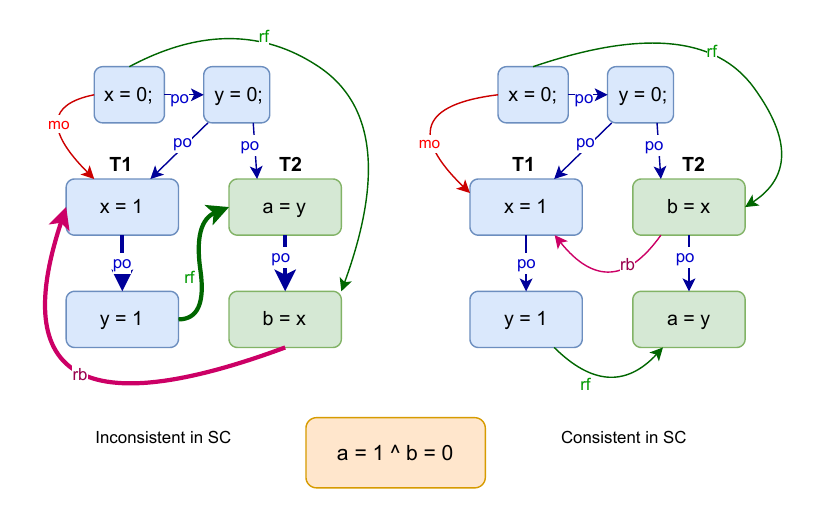}
            \caption{The forbidden outcome under \textit{SC} (yellow box) for the original pre-trace (left) allowed after reordering the two reads in $T2$ (right).}
            \label{sc-mp}
        \end{figure*}
    
        To identify the axiom that prevents $rr$, note that the transformation-effect in Fig~\ref{sc-mp} was unsafe primarily due to a subset of $\rb;\hb$ that involved the $\po_{imm}$ between the two reads ($\rb;\hb;\rfe;[a=y];\po_{imm};[b=x]$).
        More generally, we note that in any $\rb;\mo^{?};\hb$ cycle, the $\hb$ relation between any two independent reads like that above plays a role in disallowing any possible adjacent read-read reordering. 
        Thus, the axiomatic model $\textit{SC}_\textit{RR}$ that allows reordering of adjacent independent reads is the following constraint on any execution $E$.  
        \begin{align*}
            SC_{RR} = SC \setminus (\rb;\mo^{?};\hb^{?};\rfe;[r_{loc}(p(E))];\hb;[r_{!loc}(p(E))] \ \text{irreflexive}).
        \end{align*}
        where $\mo^{?}, \hb^{?}$ here denote they are optional in the sequential composition, and $loc$ represents any memory location.
        We refer to the removed axiom as $a_{rr}$ for brevity.
        To be specific, $\textit{SC}_\textit{RR}$ is the derived model having the following consistency rules for any execution $E$.
        \begin{tasks}(3)
            \task $\mo$ strict and total.
            \task $\hb$ irreflexive.
            \task $\mo;\hb$ irreflexive.
            \task $\rb;\hb \setminus a_{rr}$ irreflexive.
            \task $\rb;\mo;\hb \setminus a_{rr}$ irreflexive.
        \end{tasks}
        Rules (a), (b), (c) are the same as that of \textit{SC}, whereas rules (d) and (e) are ones which require the removal of $a_{rr}$.

    \subsubsection{Sound and Complete}

        We first show $\textit{SC}_\textit{RR}$ is Sound.
        \begin{restatable}{theorem}{thmrrsound}
            \label{thm:sc-rr-sound}
            For transformation-effect $rr$ in Def~\ref{def:rr-ind}, we have $sound(SC_{RR}, rr)$.
        \end{restatable}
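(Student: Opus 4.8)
The plan is to unfold the goal through the definitions: by Def~\ref{def:sound-transf} and Def~\ref{def:safe-transf}, $sound(SC_{RR}, rr)$ amounts to showing, for an arbitrary $P \mapsto_{rr} P'$, that $\llbracket P' \rrbracket_{SC_{RR}} \sqsubseteq \llbracket P \rrbracket_{SC_{RR}}$. So I fix any $E' \in \llbracket P' \rrbracket_{SC_{RR}}$ and must supply a witness $E \in \llbracket P \rrbracket_{SC_{RR}}$ with $E \sim E'$. Since Def~\ref{def:rr-ind} keeps the read and write sets identical ($r(P') = r(P)$, $w(P') = w(P)$) and only swaps the immediate program order of $r_x, r_y$, the natural witness is $E := (P, \rf(E'), \mo(E'))$, reusing the same $\rf$ and $\mo$ but over $P$. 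Well-formedness, $\rf^{-1}$ functionality, totality of $\mo$, and existence of maximal writes all transfer verbatim, so $E$ is a candidate execution; and $E \sim E'$ is immediate from Def~\ref{def:sim-exec} because the two share events, $\rf$, and $\mo$. The entire content of the proof is therefore the single claim $c_{SC_{RR}}(E)$.

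To prove consistency I would first isolate the one structural difference. Because $r_x, r_y$ are $\po_{imm}$-adjacent, swapping them changes their order relative to no third event, so $\po(P)$ and $\po(P')$ coincide except that $P$ carries $[r_x];\po;[r_y]$ where $P'$ carries $[r_y];\po;[r_x]$. The decisive observation is that $r_x$ and $r_y$ are reads and hence have no outgoing $\rf$: their only out-edges are $\po$ into the block of events that follow both. From this I would extract a short lemma: every $\hb(E)$-pair absent from $\hb(E')$ points into $r_y$, so any cycle witnessing inconsistency of $E$ that is not already present in $E'$ must traverse the flipped edge $[r_x];\po;[r_y]$. Each rule-check below reduces to this lemma.

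For rules (b) $\hb$ and (c) $\mo;\hb$ a rerouting argument closes the gap. In such a cycle the flipped edge is followed by $r_y$'s only admissible continuation inside $\hb$, namely a $\po$-step into a successor $u$ common to $r_x$ and $r_y$; replacing $[r_x];\po;[r_y];\po;[u]$ by $[r_x];\po;[u]$ deletes the flipped edge and yields the same cycle inside $(\po(P') \cup \rf)$, contradicting $c_{SC_{RR}}(E')$. Since these rules are unchanged from \textit{SC}, they are thereby preserved, together with (a).

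The hard part is rules (d) $\rb;\hb \setminus a_{rr}$ and (e) $\rb;\mo;\hb \setminus a_{rr}$, because $\rb$ gives $r_y$ a non-$\po$ exit that blocks the reroute. A cycle using the flipped edge but leaving $r_y$ by $\po$ still reroutes to an $E'$-cycle as above, and the rerouting preserves its $a_{rr}$-status, so a non-$a_{rr}$ one would contradict $c_{SC_{RR}}(E')$. The genuinely new case is a cycle whose unique $\rb$ edge emanates from $r_y$ and whose $\hb$-path re-enters $r_y$ through the flipped edge, i.e. $[r_y];\rb;\hb;[r_x];\po;[r_y]$. I would show each such cycle is of the excised form $a_{rr}$: if $r_x$ is entered by $\rfe$, this is immediate with $loc = mem(r_x)$; if $r_x$ is entered by a $\po$ or $\rfi$ edge from an event $u$ that precedes it in the thread, then $u$ also precedes $r_y$, so $[u];\po;[r_y]$ produces a $\rb;\hb$ cycle already present in $E'$, whose consistency forces it to be $a_{rr}$, and the certifying $\rfe$ lies on the sub-path shared with the $E$-cycle and so certifies $a_{rr}$ there too. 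Matching these non-reroutable cycles exactly to $a_{rr}$, together with the analogous $\rb;\mo;\hb$ bookkeeping for rule (e), is the step I expect to be the main obstacle and the one I would write most carefully; everything else is routine.
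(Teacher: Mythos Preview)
Your proposal is correct and follows essentially the same approach as the paper: construct the comparable $E$ by reusing $\rf(E')$ and $\mo(E')$ over $P$, then case-split on each $SC_{RR}$ rule and show that any violating cycle either reroutes around the swapped edge to a cycle already present in $E'$ or falls into the excised $a_{rr}$ class. The only cosmetic differences are that you reroute forward (shortcutting $[r_x];\po;[r_y];\po;[u]$ to $[r_x];\po;[u]$ via a common successor) whereas the paper reroutes backward (splitting on the edge into $r_x$ and shortcutting via a common predecessor), and you make the $a_{rr}$-status preservation under rerouting explicit where the paper leaves it implicit.
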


        \begin{proof}(sketch)

            We prove this by contradiction. 
            \begin{itemize}
                \item Suppose $rr$ is not Sound.
                \item This implies there exists some pre-trace $P$, for which the transformation-effect $P \mapsto_{rr} P'$ is unsafe ($\neg \psf{M}{rr}{P}$). 
                \item From Def~\ref{def:safe-transf}, this would mean there exists some consistent execution $E' \in \llbracket P' \rrbracket_{SC_{RR}}$ for which there does not exist any comparable execution $E \in \llbracket P \rrbracket_{SC_{RR}}$.
                \item Since for all well-formed executions, we assume $\langle P' \rangle \sqsubseteq \langle P \rangle$, we can infer that every such comparable $E$ is inconsistent under $\textit{SC}_\textit{RR}$.
                \item Considering $r_{x}, r_{y}$ to be two adjacent reads which have been reordered, we can infer that $[r_{x}];\po_{imm};[r_{y}]$ is responsible in $E$ being inconsistent. 
                \item The $\po_{imm}$ relation between the reads plays a role in violating some rule of $\textit{SC}_\textit{RR}$.
                \item Whereas on reordering them, $E'$ is consistent instead.
            \end{itemize}

            We show that this is never true; for any such $E$, we must also have $E'$ inconsistent under $\textit{SC}_\textit{RR}$.   
            As a sample, let us say the Rule (b), i.e. $\hb \ \text{irreflexive}$ does not hold in $E$, i.e. $\hb$ forms a cycle. 
            \begin{align*}
                &\to \hb \ \text{cycle} \\
                &\to \hb;[r_{x}];\po_{imm};[r_{y}];\hb  \ \\ 
                &\to \hb;[r_{x}];\po_{imm};[r_{y}];\po; \ \\ 
                &\to \hb;\po;[r_{x}];\po_{imm};[r_{y}];\po; \ \vee \ \hb;\rfe;[r_{x}];\po_{imm};[r_{y}];\po;. \\
                &\to \hb;\po;[r_{y}] \ \vee \ \hb;\rfe;[r_{x}];\po.  
            \end{align*}
            We can see that $\hb$ forms a cycle independent of the $\po_{imm}$ between $r_{x}$ and $r_{y}$, allowing $E'$ to also be inconsistent due to Rule (b). 
            Showing the same for all possible violated rules of $\textit{SC}_\textit{RR}$ helps us conclude we indeed have $sound(SC_{RR}, rr)$.
            The rest of the cases are in Appendix~\ref{appendix:C}.
        \end{proof}


        Next, we show that $\textit{SC}_\textit{RR}$ is Complete w.r.t. \textit{SC} for a significant set of transformation-effects.
        We use the result from Theorem~\ref{thm:complete-gen} for our purpose as follows:
        \begin{align*}
            &B = \textit{SC}. & M = \textit{SC}_{\textit{RR}}. \\
            &\textit{weak}\langle \textit{SC}, \textit{SC}_{\textit{RR}} \rangle & B \setminus M = a_{rr} 
        \end{align*}
 
        Recall from Sec~\ref{subsec:meth-comp} that our objective is to derive another $E'$ of the transformed pre-trace which is consistent under \textit{SC}, but has no consistent execution $E$ from the original pre-trace.
        For this, we make use of $cra$ (Def~\ref{def:crucial-refl-comp}) as follows. 
        We first note the relation between compositions violating $a_{rr}$ and 4 other sequential compositions which may not be irreflexive. 
        \note{
        \begin{proposition}
            \label{prop:sc-rr:cra}
            Let $rr$ be the composition $\rb;\mo^{?};\hb^{?};\rfe;[r_{x}];\hb;[r_{y \neq x}]$ which violates constraint $a_{rr}$, i.e. it forms a cycle.
            Then, we have the following 
            \begin{tasks}(2)
                \task $cra(rr) \nsubseteq cra(\rb;\mo^{?};\po;[r_{x}])$. 
                \task $cra(rr) \nsubseteq cra(\rb;\mo^{?};\po;[r_{y}])$.
                \task $cra(rr) \nsubseteq cra(\rb;\rfe;[r'_{x}];\po;[r_{x}])$.
                \task $cra(rr) \nsubseteq cra(\rb;\rfe;[r'_{y}];\po;[r_{y}])$. 
                \task $cra(rr) \nsubseteq cra(\mo;\rfe;[r_{x}];\po)$.
                \task $cra(rr) \nsubseteq cra(\mo;\rfe;[r_{y}];\po)$.
                \task $cra(rr) \nsubseteq cra(\rfi;[r_{x}];\po)$.
                \task $cra(rr) \nsubseteq cra(\rfi;[r_{y}];\po)$.
            \end{tasks}        
        \end{proposition}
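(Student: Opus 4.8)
The plan is to prove each of the eight non-containments directly, by exhibiting for every auxiliary composition $s'$ an explicit \emph{witness} read lying in $cra(rr)$ but not in $cra(s')$; since $cra(rr)\nsubseteq cra(s')$ needs only one such witness, this suffices. The guiding observation is that the $a_{rr}$-violating cycle $[r_y];\rb;\mo^{?};\hb^{?};\rfe;[r_x];\hb;[r_y]$ is \emph{doubly anchored}: it carries read-from information at both $r_x$ and $r_y$, whereas each $s'$ in items (a)--(h) is anchored at only one of the two reads, namely $r_x$ in (a), (c), (e), (g) and $r_y$ in (b), (d), (f), (h).

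First I would show that both $r_x\in cra(rr)$ and $r_y\in cra(rr)$. For $r_x$ this is immediate: the edge $\rfe;[r_x]$ lies on the cycle and carries exactly the read-from of $r_x$, so deleting $r_x$'s $\rf$ destroys that edge, mirroring the $\rfe;[r];\hb$ disjunct of Lemma~\ref{lem:sc-cra}. For $r_y$ I would use $\rb=\rf^{-1};\mo_{|loc}$: the opening edge $[r_y];\rb$ carries $\rf^{-1}(r_y)$ while the closing edge $\hb;[r_y]$ returns to $r_y$, which is precisely the $[r];\rb;\ldots;[r]$ disjunct of Lemma~\ref{lem:sc-cra}, so $r_y$ is crucial as well.

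Next I would characterise $cra(s')$ for each auxiliary composition with the same machinery, noting that the only $\rf$-carrying edges in these patterns are the opening $\rb$ (its source read), the explicit $\rfe$/$\rfi$ edges, and nothing from $\po$ or $\mo$. The trailing or internal bracket $[r_x]$ in (a), (c), (e), (g) and $[r_y]$ in (b), (d), (f), (h) pins the base point of the cycle, which forces the $\rf$-carrying positions to be occupied by $r_x$ (and at most the same-location read $r'_x$) in the first group and by $r_y$ (and at most $r'_y$) in the second. Hence $cra(\rb;\mo^{?};\po;[r_x])\subseteq\{r_x\}$, $cra(\rb;\rfe;[r'_x];\po;[r_x])\subseteq\{r_x,r'_x\}$, and likewise $cra(\mo;\rfe;[r_x];\po)\subseteq\{r_x\}$ and $cra(\rfi;[r_x];\po)\subseteq\{r_x\}$, with the symmetric statements centred on $r_y$. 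Combining with the previous step, $r_y$ is the witness for (a), (c), (e), (g) and $r_x$ is the witness for (b), (d), (f), (h); the required disjointness follows from $r_x\neq r_y$ (given) together with $mem(r_x)\neq mem(r_y)$, which also separates $r_y$ from $r'_x$ and $r_x$ from $r'_y$.

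The step I expect to be the main obstacle is making the characterisations of $cra(s')$ airtight across \emph{every} execution realising the $rr$-cycle, rather than merely the displayed one: because $cra$ is execution-dependent, I must rule out that some unrelated $s'$-cycle present in $E$ places the witness read in an $\rf$-carrying position. This reduces to checking that the witness ($r_y$ or $r_x$) can never occupy the $\rb$-source or the $\rfe$/$\rfi$-target slot of the pattern $s'$, which is exactly what the endpoint brackets forbid. I would also have to unfold the optional $\hb^{?}$ segments of $rr$ carefully, since $\hb=(\po\cup\rf)^{+}$ may itself route through $\rf$ edges, to confirm that the identification of $r_x$ and $r_y$ as crucial is not accidentally subsumed by an interior read of the cycle. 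Once these structural checks are discharged, the eight witnesses close the argument.
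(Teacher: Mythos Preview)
The paper states this proposition without proof (it is used directly in the proof of Theorem~\ref{thm:sc-rr:comp-no-w-elim} via the line ``From Prop~\ref{prop:sc-rr:cra}, we know $cra(rr) \nsubseteq cra(rsc)$''), so there is no reference argument to compare against. Your approach supplies exactly the reasoning the paper leaves implicit: the $a_{rr}$ cycle is anchored at two reads $r_x$ and $r_y$ whose $\rf$ edges both lie on it (one via $\rfe;[r_x]$, the other via $[r_y];\rb$), whereas each auxiliary pattern in (a)--(h) carries $\rf$-information tied to only one of the two locations. Picking the ``other'' read as the witness is the right move and matches how the proposition is consumed downstream when constructing a minimal crucial set $cr'$ for $E'$ with $\nexists cr\in Cr(E,SC)\,.\,cr\subseteq cr'$.

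Two small points worth tightening. First, your worry about ``every execution realising the $rr$-cycle'' is largely moot once you observe that the auxiliary compositions contain no $\hb$ component: the only relations appearing are $\rb$, $\mo^{?}$, $\po$, $\rfe$, $\rfi$, each of which carries at most one $\rf$ edge, and the bracketed endpoints fix which read that is. So $cra(s')$ really is contained in $\{r_x\}$ (or $\{r_x,r'_x\}$ for (c)) regardless of how many distinct $s'$-cycles the execution happens to have; there is no hidden $\rf$ slot for the witness to fall into. Second, your remark that the interior $\hb^{?}$ of $rr$ may route through additional $\rf$ edges is correct but harmless for the direction you need: it can only enlarge $cra(rr)$, and you require merely $r_x,r_y\in cra(rr)$, not equality. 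With those two observations made explicit the argument is complete.
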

        }
    
        We then show that for our purpose, we have at least one of the above compositions is true in $E$, as described in Theorem~\ref{thm:complete-gen}; corelating it with violating constraints common to both \textit{SC} and $\textit{SC}_\textit{RR}$.
        \note{
        \begin{restatable}{lemma}{lemscrrcompgen}
            \label{lem:sc-rr:comp:general}
            Consider transformation-effects $P \mapsto_{tr} P'$ defined by Theorem~\ref{thm:complete-gen} not involving write-elimination ($st^- \cap w(P) = \phi$).
            Consider also, from Theorem~\ref{thm:complete-gen}, the corresponding execution $E' \in \langle P' \rangle$ which is consistent under $\textit{SC}_\textit{RR}$ but inconsistent under \textit{SC}. 
            Then, at least one of the following is true for any $E \in \langle P \rangle$ where $E \sim E'$.  
            \begin{tasks}(2)
                \task $\rfi;\po$ cycle. 
                \task $\mo;\po$ cycle. 
                \task $\mo;\rfe;\po$ cycle.
                \task $\rb;\mo^{?};\po$ cycle.
                \task $\rb;\rfe;\po$ cycle. 
            \end{tasks}
        \end{restatable}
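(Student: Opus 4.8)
The plan is to first pin down exactly how $E'$ fails under SC. Since $E'$ is consistent under $\textit{SC}_\textit{RR}$ yet inconsistent under \textit{SC}, and $\textit{SC}_\textit{RR}$ is obtained from \textit{SC} by deleting only $a_{rr}$, the unique violated constraint in $E'$ must be $a_{rr}$; hence $E'$ carries an $a_{rr}$ cycle $\rb;\mo^{?};\hb^{?};\rfe;[r_{x}];\hb;[r_{y}]$ with $mem(r_x) \neq mem(r_y)$, which fixes the distinguished reads $r_x, r_y$. I then take any $E \in \langle P \rangle$ with $E \sim E'$. By the hypothesis inherited from Theorem~\ref{thm:complete-gen}, every such $E$ is inconsistent under $\textit{SC}_\textit{RR}$, so it violates at least one of rules (b)--(e) of $\textit{SC}_\textit{RR}$; equivalently, $E$ contains a cycle lying in one of $\hb$, $\mo;\hb$, $\rb;\hb$, or $\rb;\mo;\hb$ that is not purely the $a_{rr}$ pattern.

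The second step is a structural reduction of such a cycle. Expanding $\hb = (\po \cup \rfi \cup \rfe)^{+}$, every candidate cycle lives in the combined relation $(\po \cup \rfi \cup \rfe \cup \mo \cup \rb)$. Using the type discipline of the edges --- $\rfi, \rfe$ go from a write to a read, $\mo$ from a write to a write, and $\rb$ from a read to a write --- I would first argue that no cycle can avoid a $\po$ edge (any $\po$-free composition collapses, via $\mo$ being a strict total order, to an $\mo$-antisymmetry violation). I then inspect the non-$\po$ edge entering a maximal $\po$-run: a $\rfi$ yields form (a) $\rfi;\po$, an $\mo$ yields (b) $\mo;\po$, and an $\rb$ yields (d) $\rb;\mo^{?};\po$. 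If the entering edge is $\rfe$, its source write is itself entered by $\mo$, by $\rb$, or by $\po$, giving (c) $\mo;\rfe;\po$ or (e) $\rb;\rfe;\po$, or else pushing the analysis back by one $\po$-run. After merging consecutive $\po$ and consecutive $\mo$ edges, the only configuration that escapes forms (a)--(e) is a cycle alternating exclusively between $\po$ and $\rfe$ edges.

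The final and hardest step is to exclude this residual $\po$--$\rfe$ alternation. Such a cycle is a pure $\hb$ cycle whose vertices are writes (all preserved, since $st^{-} \cap w(P) = \phi$) joined across threads by $\rfe$ edges to reads whose reads-from is common to $E$ and $E'$ (by $E \sim E'$). The aim is to show this cycle would also be present in $E'$: the transformation-effect $tr$ only relocates events along program order, and I must verify that each $\po$ step $[r];\po;[w]$ used by the cycle survives the relocation (reordering merely permutes events within a $\po$-run without severing the chain). Were the cycle to persist, $E'$ would violate rule (b) of $\textit{SC}_\textit{RR}$, contradicting its consistency; hence this case cannot arise and one of (a)--(e) is already present in $E$. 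Here I would invoke Proposition~\ref{prop:sc-rr:cra} to certify that the reads carrying the extracted form are not contained in the $a_{rr}$-crucial set $cra(rr)$, so that the form genuinely belongs to $E$ independently of the reordered reads.

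I expect this third step to be the main obstacle: ruling out the $\po$--$\rfe$ alternation (and thereby confirming exhaustiveness of the five forms) for arbitrary transformation-effects rather than for the read-reordering effect alone. The delicate points are tracking precisely which $\po$ edges and events $tr$ preserves, ruling out escapes introduced by read-elimination, and tying the surviving cycle back to $E'$ via the $E \sim E'$ correspondence and the crucial-read non-containment of Proposition~\ref{prop:sc-rr:cra}.
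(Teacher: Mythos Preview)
Your first two steps are in the right spirit and parallel the paper's approach: the paper also proceeds by case analysis on which rule of $\textit{SC}_\textit{RR}$ is violated in $E$ and performs a symbolic reduction of the resulting cycle. Where you diverge is the third step, and there the plan has a genuine gap.

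You propose to dispose of the residual $\po$--$\rfe$ alternating cycle by arguing it would survive in $E'$ and hence contradict $E'$'s $\textit{SC}_\textit{RR}$-consistency. That argument presumes far more structure on $tr$ than the lemma affords. Transformation-effects here are the fully general ones of Definition~\ref{def:transf-effect}: $\po^{-}$ may delete arbitrary program-order edges (de-ordering, not just adjacent permutation), and $st^{-}$ may eliminate reads (the only restriction is $st^{-}\cap w(P)=\phi$). Either move can sever your $[r];\po;[w]$ link or drop the very read carrying an $\rfe$ edge, so the cycle need not appear in $E'$. Your remark that ``reordering merely permutes events within a $\po$-run without severing the chain'' is exactly the assumption that fails.

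The paper never compares with $E'$ at this stage. Its key device, which you are missing, is the totality of $\mo$ on writes: in a $[w];\rfe;\po;[w'];\rfe;\po;\ldots$ alternation, either $[w'];\mo;[w]$ holds, yielding a $\mo;\rfe;\po$ cycle (form (c)) immediately, or $[w];\mo;[w']$ holds, collapsing the first hop into $\mo;\hb$ and reducing to the rule-(c) case, which is then handled recursively. This purely local rewriting, together with a monotone-decrease argument for termination, exhausts all cases without ever invoking $E'$ or any property of $tr$ beyond $\mo$ being strict total. Your invocation of Proposition~\ref{prop:sc-rr:cra} is also out of place: that proposition compares crucial sets and belongs to the proof of Theorem~\ref{thm:sc-rr:comp-no-w-elim}, not to this lemma, which is a statement about $E$ alone.
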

        }
    
        \begin{proof}(sketch)

            By Theorem~\ref{thm:complete-gen}, we know that any $E$ such that $E \sim E'$ is inconsistent under $\textit{SC}_\textit{RR}$.
            Going case wise for each possible violated constraint of $\textit{SC}_\textit{RR}$, we show that it reduces to one of the above mentioned compositions.
            We refer the readers to the Appendix~\ref{appendix:C} for the detailed proof.

        \end{proof}

        Finally, using the above, we can construct and compare crucial sets for both $E'$ and $E$, allowing us to show $\textit{SC}_\textit{RR}$ is complete w.r.t. \textit{SC} for all effects that do not involve write elimination.
        \begin{restatable}{theorem}{thmscrrcompnowelim}
            \label{thm:sc-rr:comp-no-w-elim}
            For all transformation-effects not involving write elimination ($st^{-} \cap w(P) = \phi$), $\text{comp}\langle SC_{RR}, SC \rangle$.
        \end{restatable}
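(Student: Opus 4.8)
The plan is to obtain the result as a direct instance of Theorem~\ref{thm:complete-gen}, specializing $B = SC$ and $M = SC_{RR}$. Since $SC_{RR}$ is defined by deleting exactly the axiom $a_{rr}$ from $SC$, we have $B \setminus M = a_{rr}$, and the hypothesis $\text{weak}\langle SC_{RR}, SC\rangle$ holds by construction. By Theorem~\ref{thm:complete-gen} it then suffices to establish its premise $\forall tr \ . \ \neg c_{SC \setminus SC_{RR}}(E') \implies \neg \psf{SC}{tr}{P}$: for every relevant transformation-effect $tr$ (with the associated $E' \in \llbracket P' \rrbracket_{SC_{RR}}$ whose only comparable original executions are $SC_{RR}$-inconsistent), whenever $E'$ violates $a_{rr}$, the effect $tr$ is already unsafe under $SC$. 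I would discharge this by exhibiting a witness $E'_t \in \llbracket P' \rrbracket_{SC}$ having no comparable $SC$-consistent execution in the original pre-trace, which by Definition~\ref{def:safe-transf} gives $\neg \psf{SC}{tr}{P}$.

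First I would build the witness $E'_t$. Because $E'$ fails only $a_{rr}$ under $SC$, it carries an $\rb;\mo^{?};\hb^{?};\rfe;[r_x];\hb;[r_y]$ cycle while satisfying every rule of $SC_{RR}$. Using Definition~\ref{def:crucial-refl-comp} together with Lemma~\ref{lem:sc-cra}, I would select a crucial set meeting this $a_{rr}$ cycle, strip the $\rf$ edges of those reads to obtain a (non-well-formed) execution that is $SC$-consistent, and then invoke piecewise consistency $\pwc{SC}$ (established in Appendix~\ref{appendix:B}) to re-add $\rf$ edges one at a time, arriving at a well-formed candidate $E'_t$ with $c_{SC}(E'_t)$ that agrees with $E'$ on $\mo$ and off the stripped reads.

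Next I would show no $SC$-consistent original execution is comparable to $E'_t$. Every original $E$ comparable to $E'$ is $SC_{RR}$-inconsistent (the hypothesis of Theorem~\ref{thm:complete-gen}) and hence $SC$-inconsistent by weakness; Lemma~\ref{lem:sc-rr:comp:general}, whose proof needs exactly the no-write-elimination assumption $st^- \cap w(P) = \phi$, pins this inconsistency onto one of the five $SC_{RR}$-shared cycles $\rfi;\po$, $\mo;\po$, $\mo;\rfe;\po$, $\rb;\mo^{?};\po$, $\rb;\rfe;\po$. The crux is Proposition~\ref{prop:sc-rr:cra}: the crucial reads of the $a_{rr}$ cycle are not crucial for any of these five compositions, so the $\rf$-surgery that turns $E'$ into the $SC$-consistent $E'_t$ leaves each such cycle intact in the correspondingly reassigned original execution. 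Consequently every candidate in the original comparable to $E'_t$ still carries one of the five cycles and stays $SC$-inconsistent, yielding $\neg \psf{SC}{tr}{P}$ and, through Theorem~\ref{thm:complete-gen}, $\text{comp}\langle SC_{RR}, SC\rangle$.

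The step I expect to be the main obstacle is the matching between the $\rf$-reassignment on the transformed side and its counterpart on the original side. I must argue that manipulating the crucial reads to manufacture $E'_t$ can be mirrored in the original pre-trace (via the shared event identities of Section~\ref{subsec:prelim-transf}) in a way that preserves comparability while disturbing none of the five cycles, and that every original execution comparable to $E'_t$ indeed arises from some comparable $E$ this way. Proposition~\ref{prop:sc-rr:cra} supplies the non-containment of crucial sets that makes the cycles survive, but carefully verifying each of the five cases — and checking that the piecewise-consistent re-addition of $\rf$ edges never accidentally reintroduces an $a_{rr}$-style dependency that a comparable original execution could satisfy — is where the real work lies.
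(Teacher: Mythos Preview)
Your proposal is correct and follows essentially the same route as the paper: instantiate Theorem~\ref{thm:complete-gen} with $B=SC$, $M=SC_{RR}$, use Lemma~\ref{lem:sc-rr:comp:general} to reduce the $SC_{RR}$-inconsistency of every comparable $E$ to one of the five listed cycles, use Proposition~\ref{prop:sc-rr:cra} to show the crucial reads for the $a_{rr}$ cycle in $E'$ are not contained in those for the five cycles in $E$, and use $\pwc{SC}$ to rebuild a consistent witness. The paper packages what you call ``the main obstacle'' (mirroring the $\rf$-reassignment on the original side, including when some crucial reads of $E$ lie in $st^{-}$) into the auxiliary machinery of minimal crucial sets, Lemma~\ref{lem:crucial-based-unsafety}, and Corollary~\ref{cor:read-elim-crucial-unsafety}, together with a short case split on whether $cr$ exists and whether it meets $st^{-}$; this is exactly the bookkeeping you anticipated, so your plan lines up with the paper's argument.
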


        \begin{proof}(sketch)
            Note that each relation that forms a cycle as specified in Lemma~\ref{lem:sc-rr:comp:general} also violates one of the constraints of \textit{SC}.
            From Prop~\ref{prop:sc-rr:cra}, we can then derive a crucial set w.r.t. \textit{SC} for $E'$ and can show it isn't the same for $E$. 
            Using $pwc(SC)$, we can derive a pair of candidate execution pairs $E_{t} \sim E'_{t}$ such that $E'_{t} \in \llbracket P' \rrbracket_{SC}$ and $E_{t} \in I\langle P \rangle_{SC}$.
            Since no write-elimination occurs, by Def~\ref{def:transf-safe} the transformation-effect $tr$ is unsafe under \textit{SC}.
            Finally, by Theorem~\ref{thm:complete-gen}, we have $\comp{SC_{RR}}{SC}$.
            We refer the readers to Appendix~\ref{appendix:C} for the detailed formal proof.
        \end{proof}

    \subsection{Read-Modify-Writes and Fence Instructions}

    \label{subsec:scfrr-conc}

    So far, we have only addressed programs that have simple memory read/write events.
    In practice, most concurrent algorithms also rely on read-modify-write (rmw) instructions; memory events that perform a read followed by a write in one atomic step \cite{LawOfOrder} (eg: compare-and-swap, fetch-and-add, etc.)
    To add, the model $\textit{SC}_\textit{RR}$, when desired, does not allow us to forbid any form of read-read reordering.
    Both these requirements can be fulfilled respectively by adding a read-modify-writes and a fence event to the language.    
    
    Let $rmw(a,x,v)$ represent a read from shared memory $x$ into thread-local variable $a$, followed by a write of value $v$ to it and $rmw(P)$ return the read-modify-write events of pre-trace $P$.
    Let $f_{rr}$ represent the fence event and $frr(P)$ return the fence events of pre-trace $P$.

    \subsubsection{Including $rmw$ and $f_{rr}$ Rules}
        
        First, we consider $rmw$ to be both reads and writes - $rmw(P) = r(P) \cap w(P)$.
        This ensures all existing rules of \textit{SC} and $\textit{SC}_\textit{RR}$ also apply for any read-modify-write event. 
        Second, we add an additional rule for $rmw$ \textit{atomicity}: no memory event must appear to occur between the read and write of this event in any execution.
        From \cite{LahavV}, this requirement translates to forbidding outcomes like those in the left in Fig~\ref{fig:rmw-frr}.
        Here, we note that $a=0$ implies the read has taken place before the write $y=2$.
        However, the final value of $y$ being $1$ indicates that $y=2$ took place before $y=1$.
        Such an execution has $[rmw(a,y,1)];\rb;[y=2];\mo$ cycle. 
        Thus, for \textit{SC} and all models derived from it, the rule of atomicity would be 
        \begin{align*}
            \rb;\mo \ \text{irreflexive}.
        \end{align*}
       
        In order to disallow reordering two independent reads when desired under $\textit{SC}_\textit{RR}$, our idea is to place an $rmw$ or fence $f_{rr}$ event in between them.
        \begin{figure*}[htbp]
            \centering
            \includegraphics[scale=0.6]{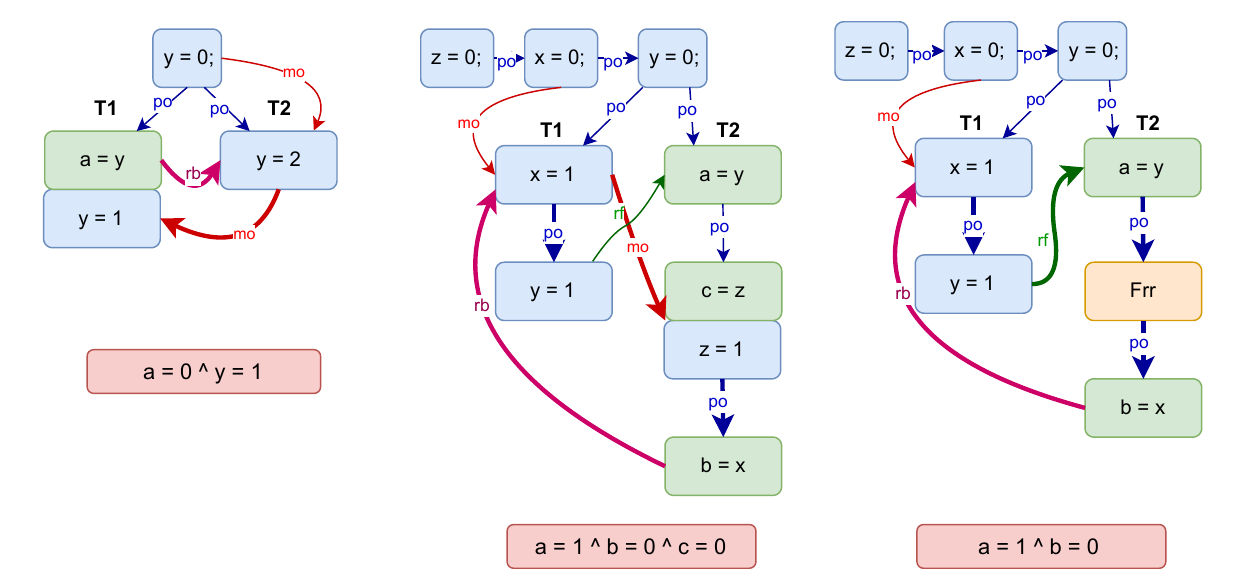}
            \caption{The role of read-modify-writes and fences in forbidding outcomes violating atomicity (left) and preventing reordering of independent reads (middle, right).}
            \label{fig:rmw-frr}
        \end{figure*}
        Concretely, if there is an $rmw$ event between two read events in program order, we do not wish to permit their reordering. 
        Since $\mo$ also involves $rmw$ events, we need to add the following rule:
        \begin{align*}
            \rb;\mo^{?};[rmw];\po;[r] \ \text{irreflexive}.
        \end{align*}
        This translates to disallowing the outcomes like those in the middle in Fig~\ref{fig:rmw-frr}.
        The corresponding candidate execution has $[a=y];\po;[rmw(c,z,1)];\po;[b=x]$, and the outcome $a=1 \ \wedge \ b=0 \ \wedge \ c=0$ is inconsistent under $\textit{SC}_\textit{RR}$ as $[b=x];\rb;[x=1];\mo;[rmw(c,z,1)];\po;[b=x]$ is non-empty.  
           
        Similarly, if a fence event $f_{rr}$ exists between two reads $r_{y}$, $r_{x}$ in program order, reordering the two reads should not be allowed.
        Using the same idea for $rmw$, the rule to add is similar to $a_{rr}$, except replacing $rmw$ with a fence instead, giving us the following consistency rule given any execution $E$
        \begin{align*}
            \rb;\mo;\rfe;[r_{loc}(p(E))];\po;[frr(p(E))];\po;[r_{!loc}(p(E))] \ \text{irreflexive}.
        \end{align*}
        We refer to the fence rule as $a_{frr}$ for brevity.
        This translates to disallowing the outcomes like those in the right in Fig~\ref{fig:rmw-frr}.
        The corresponding candidate execution has $[a=y];\po;[f_{rr}];\po;[b=x]$, and the outcome $a=1 \ \wedge \ b=0$ is inconsistent under $\textit{SC}_\textit{RR}$ as $[b=x];\rb;[x=1];\hb;[a=y];\po;[f_{rr}];\po;[b=x]$ is non-empty.  

        Our resultant model of $\textit{SC}_\textit{RR}$, with the inclusion of the above rules is 
        \begin{tasks}(3)
            \task $\mo$ strict and total.
            \task $\hb$ irreflexive.
            \task $\mo;\hb$ irreflexive.
            \task $\rb;\hb \setminus a_{rr}$ irreflexive.
            \task $\rb;\mo;\hb \setminus a_{rr}$irreflexive.
            \task $\rb;\mo$ irreflexive.
            \task $a_{frr}$.
        \end{tasks}
        The rule for $rmw$ is already a part of $\textit{SC}_\textit{RR}$; $a_{rr}$ specifically involves only plain reads.
        Therefore, we do not need to add it.
        Note also that $a_{frr}$ will only be required for $\textit{SC}_\textit{RR}$; the fence instruction is equivalent to a no-op under \textit{SC}, making the rule redundant for it. 
        
    \subsubsection{Weak, Sound, Complete?}

        The addition of read-modify-writes and read-read-fences along with their respective consistency rules for \textit{SC} and $\textit{SC}_\textit{RR}$ do not change our results.
        $\textit{SC}_\textit{RR}$ is still Weaker than \textit{SC}; the axiom $a_{frr}$ is redundant for \textit{SC}.
        For Sound, we need to address two more cases based on the added axioms of atomicity and fences.
        \begin{theorem}
            \label{lem:sc-frr-sound}
            For the $\textit{SC}_\textit{RR}$ model with the constraints (f) and (g) and transformation-effect $rr$ as in Def~\ref{def:rr-ind}, we have $sound(SC_{RR}, rr)$.
        \end{theorem}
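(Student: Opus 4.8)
The plan is to extend the proof-by-contradiction of Theorem~\ref{thm:sc-rr-sound} so that it also covers the two rules, (f) $\rb;\mo$ and (g) $a_{frr}$, which were absent from the base $\textit{SC}_\textit{RR}$ model. I would begin exactly as before: assume $rr$ is not sound, fix a pre-trace with $P \mapsto_{rr} P'$ unsafe, a consistent $E' \in \llbracket P' \rrbracket_{SC_{RR}}$, and every comparable $E \in \langle P \rangle$ inconsistent. Because Def~\ref{def:rr-ind} keeps the read and write sets fixed ($r(P') = r(P)$ and $w(P') = w(P)$), Def~\ref{def:sim-exec} forces every comparable $E$ to carry the same $\rf$ and $\mo$ as $E'$; the two executions differ only in that the segment $[r_x];\po_{imm};[r_y]$ is replaced by $[r_y];\po_{imm};[r_x]$. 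Since rules (a)--(e) are unchanged, and their soundness argument only inspects the $\po$/$\rfe$ edges incident to the reordered reads $r_x,r_y$ (which remain plain reads even once $rmw$ and fence events are present), I may reuse it; what remains is to show that a violation of (f) or (g) in $E$ likewise forces $E'$ inconsistent.

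For rule (f) the argument is immediate. The relation $\rb;\mo$ is assembled purely from $\rf$ and $\mo$ and contains no $\po$ edge whatsoever. As $E$ and $E'$ agree on both $\rf$ and $\mo$, the composition $\rb;\mo$ is literally identical in the two executions, so a cycle in $E$ is the very same cycle in $E'$, contradicting the consistency of $E'$. Reordering two reads can therefore never be what separates rule (f) on $E$ from rule (f) on $E'$.

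The rule (g) case is the delicate one and I expect it to be the main obstacle. The only $\po$ edges occurring in an $a_{frr}$ cycle are the pair that sandwich a fence between a reader of some location and a reader of another, namely $[r_{loc}];\po;[frr]$ and $[frr];\po;[r_{!loc}]$. For the read-swap to destroy such a cycle, the edge $[r_x];\po_{imm};[r_y]$ would have to lie inside one of these $\po$ paths. The crux is that this can never delete the cycle: by the formal reading of $\po_{imm}$, namely $\nexists c.\,[r_x];\po;[c];\po;[r_y]$, \emph{no} event, fence included, may sit between $r_x$ and $r_y$. Hence the two reordered reads always lie on the same side of every fence participating in an $a_{frr}$ cycle, and swapping them leaves each read before (resp. after) that fence exactly as before. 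The transitive $\po$ witnessing $[r_{loc}];\po;[frr]$ and $[frr];\po;[r_{!loc}]$ thus survives the transformation, possibly with $r_x$ and $r_y$ interchanged but still on the correct side, so the identical $a_{frr}$ cycle persists in $E'$, again contradicting its consistency.

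Assembling the two new cases with the reused analysis of (a)--(e) shows that every inconsistent comparable $E$ forces $E'$ inconsistent, contradicting the assumption and yielding $sound(SC_{RR}, rr)$. The one point I must be careful about is precisely the fence exclusion: I have to invoke the formal definition of $\po_{imm}$ (no intervening event of any kind) rather than the informal ``no memory event'', since a fence placed between $r_x$ and $r_y$ is exactly the configuration rule $a_{frr}$ is designed to block, and it is admissible here only because such a fence would break $\po_{imm}$-adjacency and so render the transformation $rr$ inapplicable in the first place.
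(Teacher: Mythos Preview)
Your proposal is correct and follows essentially the same approach as the paper: reuse the cases (a)--(e) from Theorem~\ref{thm:sc-rr-sound}, dispose of rule (f) by observing that $\rb;\mo$ contains no $\po$ edge, and handle rule (g) by arguing that the $\po_{imm}$-adjacent reads cannot straddle the fence and therefore the $a_{frr}$ cycle survives the swap. Your explicit appeal to the formal clause $\nexists c.\,[r_x];\po;[c];\po;[r_y]$ to exclude a fence between the reordered reads is exactly the point the paper leaves implicit, so your write-up is if anything slightly more careful on that detail.
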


        \begin{proof}
            
            Recall that in order to prove the effect $P \mapsto_{rr} P'$ is sound, we need to show that every execution $E \in \langle P \rangle$ which is inconsistent due to reads which are reordered, has a comparable execution $E' \in \langle P' \rangle$ which is also inconsistent.   
            In addition to the existing proof cases of Theorem~\ref{thm:sc-rr-sound}, two additional cases related to $rmw$ and $f_{rr}$ are to be addressed.
            \begin{itemize}
                \item Case 1: $\rb;\mo$ cycle
                    There is no $\po$ involved in the composition of this relation.
                    So none of the $\po$ edges matter.
                \item Case 2: $a_{frr}$ violated. This means there exists reads $r_{x}$, $r_{y}$ and a fence $f_{rr}$ such that 
                    \begin{align*}
                        &\rb;\mo;\rfe;[r_{y}];\po;[f_{rr}];\po;[r_{x}]  \ \text{cycle}. \\ 
                        &\to \rb;\mo;\rfe;[r_{y}];\po_{imm};[r_{z}];\po;[f_{rr}];\po;[r_{x}] \ \vee \\ 
                        &\rb;\mo;\rfe;[r_{y}];\po;[f_{rr}];\po;[r_{z}];\po_{imm};[r_{x}] \ \text{cycle}.
                    \end{align*}
                    For both cases, we note that $[r_{y}];\po_{imm};[r_{z}]$ as well as $[r_{z}];\po_{imm};[r_{x}]$ are not essential to violate $a_{frr}$.
            \end{itemize}
            Thus, by contradiction, we have $sound(SC_{RR}, rr)$.
        \end{proof}

        To show the result of Complete remains unchanged involves a little more work.
        First, we note relations between $cra(rr)$ and some additional compositions which form a cycle.
        \begin{proposition}
            \label{prop:sc-rr:rmw-frr}
            Let $rr$ be the composition $\rb;\mo^{?};\hb^{?};\rfe;[r_{x}];\hb;[r_{y \neq x}]$ which violates constraint $a_{rr}$, i.e. it represents a cyclic path.
            Then, we have the following 
            \begin{align*}
                &cra(rr) \nsubseteq cra(\rb;\mo) \\
                &cra(rr) \nsubseteq cra(\mo;\rf) 
            \end{align*}
        \end{proposition}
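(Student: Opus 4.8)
The plan is to establish both non-containments at once by exhibiting a single witness read that lies in $cra(rr)$ but in neither $cra(\rb;\mo)$ nor $cra(\mo;\rf)$. The natural candidate is $r_{x}$, the read sitting at the $\rfe$-entry of the given $rr$ cyclic path. The whole argument hinges on a typing distinction: $r_{x}$ is a \emph{plain} read, whereas every read that $cra(\rb;\mo)$ or $cra(\mo;\rf)$ can possibly contain is forced to be a read-modify-write event. Once that dichotomy is in place the two $\nsubseteq$ claims fall out immediately from the same witness.

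First I would show $r_{x}\in cra(rr)$. The composition $rr = \rb;\mo^{?};\hb^{?};\rfe;[r_{x}];\hb;[r_{y}]$ has the same shape as the $\rb;\mo;\hb$ family already treated in Lemma~\ref{lem:sc-cra}, whose crucial reads live at the $\rfe$ entry and at the $\rb$ source. Concretely, on the cyclic path $[r_{y}];\rb;\mo^{?};\hb^{?};\rfe;[r_{x}];\hb;[r_{y}]$ the subpath $\rfe;[r_{x}]$ uses exactly the $\rf$ relation into $r_{x}$; deleting that relation empties $\rfe;[r_{x}]$ and hence the entire path, so $rr$ becomes irreflexive and $r_{x}\in cra(rr)$ (the same holds for $r_{y}$ through the $\rb$ source). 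Moreover, since $a_{rr}$ is stated over plain reads only, $r_{x}$ and $r_{y}$ are plain reads, i.e. $r_{x}\notin rmw(p(E))$.

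Next I would characterise the two right-hand sides by a reflexive-witness argument. A reflexive witness of $\rb;\mo$ is an event that is simultaneously the source of $\rb=\rf^{-1};\mo_{|loc}$ (a read) and the target of $\mo$ (a write), hence a read-modify-write; and the only $\rf$ occurring in $\rb;\mo$ is the $\rf^{-1}$ inside $\rb$, anchored at that same rmw read. Thus $cra(\rb;\mo)\subseteq rmw(p(E))$. Symmetrically, a reflexive witness of $\mo;\rf$ must be both the source of $\mo$ (a write) and the target of $\rf$ (a read), again a read-modify-write, so $cra(\mo;\rf)\subseteq rmw(p(E))$. Combining with the previous step, $r_{x}$ is a plain read in $cra(rr)$ while $cra(\rb;\mo)$ and $cra(\mo;\rf)$ consist solely of rmw reads, so $r_{x}$ simultaneously witnesses $cra(rr)\nsubseteq cra(\rb;\mo)$ and $cra(rr)\nsubseteq cra(\mo;\rf)$.

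The step I expect to be the main obstacle is making the two right-hand-side characterisations fully airtight: one must argue purely from the read/write typing of $\rb$, $\mo$ and $\rf$ that these compositions can only close into a cycle at an rmw event, and that no $\rf$ edge into a \emph{plain} read ever participates in such a cycle. Once that typing argument is pinned down, and once the placement of $r_{x}$ in $cra(rr)$ is justified from the shape of $a_{rr}$ together with Lemma~\ref{lem:sc-cra}, both non-containments are immediate.
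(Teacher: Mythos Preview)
The paper states this proposition without proof; it is treated as an observation in the same spirit as Proposition~\ref{prop:sc-rr:cra}. Your argument is correct and is the natural one: the paper itself remarks that ``$a_{rr}$ specifically involves only plain reads'', and your typing observation---that any reflexive witness of $\rb;\mo$ or of $\mo;\rf$ must simultaneously be a read and a write, hence an $rmw$---is exactly what makes the non-containments immediate once $r_{x}\in cra(rr)$ is established via the shape of the $rr$ cycle (as in Lemma~\ref{lem:sc-cra}). The self-identified ``obstacle'' is not really one: since $\rb=\rf^{-1};\mo_{|loc}$ contributes the only $\rf$ edge in $\rb;\mo$, and the explicit $\rf$ is the only one in $\mo;\rf$, no $\rf$ into a plain read can participate in either cycle, so $cra(\rb;\mo)\cup cra(\mo;\rf)\subseteq rmw(p(E))$ follows directly.
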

        
        Second, we have three more compositions that may not be irreflexive in any execution $E$ as specified by Theorem~\ref{thm:complete-gen}.
        \begin{restatable}{lemma}{lemscrrcomprmwfence}
            \label{lem:sc-rr:comp:rmw-fence}
            Consider transformation-effects $P \mapsto_{tr} P'$ defined by Theorem~\ref{thm:complete-gen} not involving write-elimination ($st^- \cap w(P) = \phi$) and the corresponding execution $E' \in \langle P' \rangle$ consistent under $\textit{SC}_\textit{RR}$ but inconsistent under \textit{SC}. 
            Then, on addition of $f_{rr}$ and $rmw$ events and their consistency rules, the following may also be true for any $E \in \langle P \rangle$ where $E \sim E'$.
            \begin{tasks}(2)
                \task $\mo;\rf$ cycle.
                \task $\rb;\mo$ cycle.
                \task $\rb;\mo;\rfe;\po;[f_{rr}];\po$ cycle.
            \end{tasks}
        \end{restatable}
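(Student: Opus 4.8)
The plan is to reprise, over the extended rule set, the same case analysis that drove Lemma~\ref{lem:sc-rr:comp:general}. By Theorem~\ref{thm:complete-gen} every $E$ with $E \sim E'$ is inconsistent under the extended $\textit{SC}_\textit{RR}$, so at least one of the seven rules (a)--(g) is violated. For the five rules inherited from the plain-read setting I would invoke Lemma~\ref{lem:sc-rr:comp:general} to land in one of its five cycle forms, and then handle separately (i) the two genuinely new rules (f) and (g), and (ii) the new sub-cases that arise because $rmw(P)=r(P)\cap w(P)$ makes every read- or write-endpoint of the inherited rules potentially an $rmw$.

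The two new rules are immediate. If rule (f) $\rb;\mo$ is violated, the witnessing cyclic path \emph{is} a $\rb;\mo$ cycle (case (b)); I would note that, since $\rb$ runs from a read to a write and $\mo$ from write to write, such a cycle can only close through an event that is simultaneously a read and a write, i.e.\ an $rmw$, which is exactly the atomicity scenario of Fig~\ref{fig:rmw-frr}. If rule (g) $a_{frr}$ is violated, the path is a $\rb;\mo;\rfe;\po;[f_{rr}];\po$ cycle (case (c)); here I would borrow the argument from the soundness proof (Theorem~\ref{lem:sc-frr-sound}), namely that any intervening $\po_{imm}$ between the two reads separated by $f_{rr}$ is inessential to the violation, so the fence keeps the cycle intact exactly where read--read reordering would otherwise dissolve it.

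The remaining work, and where case (a) comes from, is the $rmw$ dual-role bookkeeping. The reductions in Lemma~\ref{lem:sc-rr:comp:general} repeatedly used that an $\hb$ path \emph{entering a write} must arrive along a $\po$ edge (since plain writes are never $\rf$-targets); with $rmw$ this incoming edge may instead be an $\rf$. Concretely, a $\mo;\hb$ cycle (rule (c)) closing on an $rmw$ may take the shape $\mo;\hb^{?};\rf;[rmw]$, whose minimal witness is a $\mo;\rf$ cycle (case (a)), and similarly the $\hb^{?}$ prefixes inside rules (d),(e) can terminate on an $rmw$ via $\rf$ rather than $\po$. I would re-scan each reduction of Lemma~\ref{lem:sc-rr:comp:general} and check it either still produces one of its five forms or collapses to $\mo;\rf$ precisely when the terminating event is an $rmw$; Proposition~\ref{prop:sc-rr:rmw-frr}, asserting $cra(rr)\nsubseteq cra(\rb;\mo)$ and $cra(rr)\nsubseteq cra(\mo;\rf)$, is what makes these two new forms usable in the downstream completeness argument.

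I expect the main obstacle to be exactly this re-examination of the inherited reductions: every endpoint that Lemma~\ref{lem:sc-rr:comp:general} implicitly treated as a pure read or pure write must be re-verified under the possibility that it is an $rmw$, and I must confirm that the \emph{only} new closure this introduces is the $\mo;\rf$ form, with no escape into a cycle shape outside the eight enumerated across the two lemmas. By contrast, the fence case (g) and atomicity case (f) should be routine, following by direct inspection and by analogy with the soundness argument for $f_{rr}$.
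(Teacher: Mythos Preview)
Your proposal is correct and matches the paper's approach: the paper's proof likewise handles rules (f) and (g) ``as is'' to obtain cases (b) and (c), and derives case (a) from rule (c) by splitting $\mo;\hb$ into $\mo;\rf \cup \mo;\po;\hb^{?}$, the first disjunct being the new $rmw$-enabled closure you describe. The paper is terser than you---it only exhibits the three rules that spawn new cycle shapes rather than explicitly re-scanning every reduction of Lemma~\ref{lem:sc-rr:comp:general}---so your plan to re-verify the inherited cases is more careful than strictly required, but not misguided.
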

        Third, remember that proving $\textit{SC}_\textit{RR}$ Complete w.r.t. \textit{SC} does not involve transformation-effects with $f_{rr}$; they exist simply as no-ops.
        Accordingly, we identify the cases when a transformation-effect involves $f_{rr}$.
        \begin{restatable}{lemma}{lemscrrfrr}
            \label{lem:sc-rr:frr}
            For transformation-effects $P \mapsto_{tr} P'$ as defined in Theorem~\ref{thm:complete-gen} with no write elimination, and with corresponding pair $E \sim E'$ and $B = SC$, $M = SC_{RR}$, we have the following             
            \begin{align*}
                &a_{frr}(E) = \text{false} \ \implies \ ((f_{rr} \in st^{-}) \vee ([f_{rr}];\po;[r_{x}] \in \po^{-}) \vee ([r_{y}];\po;[f_{rr}] \in \po^{-})).
            \end{align*}
            where $r_{x}, r_{y} \notin st^{-}$.
        \end{restatable}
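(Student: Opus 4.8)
The plan is to prove the implication directly by exhibiting the witnessing fence cycle in $E$ and then tracking each of its edges across the transformation $P \mapsto_{tr} P'$ to localize where the cycle must break. First I would unfold $a_{frr}(E) = \text{false}$ into the explicit cyclic path
\begin{align*}
    [r_{x}];\rb;[w_1];\mo;[w_2];\rfe;[r_{y}];\po;[f_{rr}];\po;[r_{x}],
\end{align*}
where $r_y$ reads the location $loc$, $r_x$ reads a different location $!loc$, $w_1,w_2$ are writes with $w_1 \mo w_2$, and $r_y,f_{rr},r_x$ all lie on one thread (forced by the two consecutive $\po$ steps). Under the hypotheses of Theorem~\ref{thm:complete-gen} with $B = SC$, $M = SC_{RR}$, the companion execution $E' \in \langle P' \rangle$ is consistent under $SC_{RR}$ and so satisfies rule (g), i.e. $a_{frr}(E') = \text{true}$: no copy of this cycle survives in $E'$. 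The objective is to argue that the only edges of the cycle that the transformation is capable of destroying are the two $\po$ edges incident to $f_{rr}$.

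Next I would show that the non-$\po$ prefix $[r_x];\rb;[w_1];\mo;[w_2];\rfe;[r_y]$ is reproduced verbatim in $E'$. The writes $w_1,w_2$ together with the write $w_0 = \rf^{-1}(r_x)$ underlying the $\rb$ step all survive because no write is eliminated ($st^- \cap w(P) = \phi$), and the reads $r_x,r_y$ survive by the standing side condition $r_x,r_y \notin st^-$. Since $E \sim E'$, every common read retains its $\rf$ (Def~\ref{def:sim-exec}, using that $\rf^{-1}$ is functional to pin the source write) and every common pair of writes retains its $\mo$; hence $\rf(r_x)=w_0$, $\mo(w_0,w_1)$, $\mo(w_1,w_2)$ and $\rf(r_y)=w_2$ all continue to hold in $E'$, so the derived $\rb$, $\mo$ and $\rfe$ edges reappear there. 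Given that $a_{frr}(E')$ holds while this prefix is intact, at least one of $[r_y];\po;[f_{rr}]$ or $[f_{rr}];\po;[r_x]$ must be absent from $E'$, or $f_{rr}$ itself must be gone — which is precisely the three-way disjunction to be established.

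The step I expect to require the most care is ruling out that a missing $\po$ edge is merely masked by the transitive closure: the axiom uses full program order, so losing $[r_y];\po;[f_{rr}]$ as a direct edge would be harmless if $r_y \po f_{rr}$ still held transitively in $E'$. Here I would invoke that $\po$ is a strict total order on each thread (the remark following Def~\ref{def:pre-trace}): since $r_y,f_{rr},r_x$ share a thread and all persist in $E'$, each pair among them is directly ordered in $E'$. Consequently $r_y \po f_{rr}$ can fail in $E'$ only by being reversed to $f_{rr} \po r_y$, a genuine reordering recorded by $[r_y];\po;[f_{rr}] \in \po^-$, and symmetrically for $f_{rr} \po r_x$; together with the case $f_{rr} \in st^-$ this closes the disjunction. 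This also explains the side condition $r_x,r_y \notin st^-$: were read elimination permitted, an eliminated $r_x$ or $r_y$ could destroy the cycle without disturbing the fence, and the conclusion would no longer follow.
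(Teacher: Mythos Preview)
Your proposal is correct and follows essentially the same approach as the paper: unfold the $a_{frr}$ cycle in $E$, use that $E' \in \llbracket P' \rrbracket_{SC_{RR}}$ forces the cycle to break in $E'$, and then observe that with $\rf$, $\mo$, the writes, and the two reads all preserved, the only remaining way to destroy the cycle is to remove $f_{rr}$ or one of its two incident $\po$ edges. Your treatment is in fact more careful than the paper's terse version, since you explicitly justify why the $\rb;\mo;\rfe$ prefix transfers to $E'$ via Def~\ref{def:sim-exec} and you address the $\po$ transitivity/total-order subtlety that the paper leaves implicit.
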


        Finally, we show that addition of $f_{rr}$ and $rmw$ events and their consistency rules (f) and (g) does not change the result of Theorem~\ref{thm:sc-rr:comp-no-w-elim}.        
        \begin{restatable}{theorem}{thmscrrcomprmwfrr}
            \label{thm:scrr:comp:rmw-frr-no-w-elim}
            Given \textit{SC} and $\textit{SC}_\textit{RR}$ with $rmw$ and $f_{rr}$ events, for all transformation-effects not involving write elimination, we have $\comp{SC_{RR}}{SC}$.
        \end{restatable}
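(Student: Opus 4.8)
The plan is to re-run the Complete argument of Theorem~\ref{thm:sc-rr:comp-no-w-elim} in the enlarged language, again invoking Theorem~\ref{thm:complete-gen} with $B = SC$ and $M = SC_{RR}$, where both models now carry the $rmw$-atomicity rule (f) and $SC_{RR}$ additionally carries the fence rule (g). Weakness $\text{weak}\langle SC_{RR}, SC\rangle$ still holds because $a_{frr}$ is redundant under SC, and since the only rule of SC absent from $SC_{RR}$ remains $a_{rr}$, we still have $B \setminus M = a_{rr}$. By Theorem~\ref{thm:complete-gen} it therefore suffices to show that every write-elimination-free effect $P \mapsto_{tr} P'$ admitting a witness $E' \in \llbracket P' \rrbracket_{SC_{RR}}$ that violates $a_{rr}$ (so $\neg c_{SC \setminus SC_{RR}}(E')$) is unsafe under SC, i.e. $\neg \psf{SC}{tr}{P}$. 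The construction is the familiar one: pick a crucial set for $E'$ with respect to SC, argue it fails to be crucial for any comparable $E$, and feed this into $\pwc{SC}$ to extract candidate executions $E'_t \in \llbracket P' \rrbracket_{SC}$ and $E_t \in I\langle P\rangle_{SC}$ with $E_t \sim E'_t$, which by Def~\ref{def:safe-transf} witnesses unsafety under SC.

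The genuinely new content lies in cataloguing why each comparable $E \sim E'$ is $SC_{RR}$-inconsistent once rules (f) and (g) are present. Here Lemma~\ref{lem:sc-rr:comp:general} is supplemented by Lemma~\ref{lem:sc-rr:comp:rmw-fence}, which contributes the three extra cycle shapes $\mo;\rf$, $\rb;\mo$, and the fence cycle $\rb;\mo;\rfe;\po;[f_{rr}];\po$. I would first observe that each of these also violates an SC constraint: $\rb;\mo$ and $\mo;\rf$ by atomicity and $\mo$-totality, and the fence cycle because, $f_{rr}$ being a no-op under SC, its underlying cycle is exactly an $a_{rr}$-cycle forbidden by rules (d)/(e) of SC. I would then combine Prop~\ref{prop:sc-rr:rmw-frr} ($cra(rr) \nsubseteq cra(\rb;\mo)$ and $cra(rr) \nsubseteq cra(\mo;\rf)$) with Prop~\ref{prop:sc-rr:cra} to exhibit a read in the crucial set of the $a_{rr}$-cycle of $E'$ whose $\rf$-removal does not repair the corresponding cycle in $E$; this non-containment is precisely what guarantees that the $E'_t$ delivered by $\pwc{SC}$ has no SC-consistent counterpart in $\langle P \rangle$.

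For the fence-specific cases I would lean on Lemma~\ref{lem:sc-rr:frr}: since the Complete statement only ranges over effects in which $f_{rr}$ is a passive no-op, whenever $a_{frr}(E)$ fails the effect must either delete the fence or delete a $\po$-edge incident to it, with the reordered reads $r_x, r_y \notin st^-$. In each such case the fence cycle in $E$ degenerates to the plain $a_{rr}$-cycle on $r_x, r_y$ already handled by Theorem~\ref{thm:sc-rr:comp-no-w-elim}, so the earlier crucial-set construction transfers verbatim. Assembling the cases, every qualifying $tr$ is unsafe under SC, and Theorem~\ref{thm:complete-gen} then yields $\comp{SC_{RR}}{SC}$ for all write-elimination-free effects; the full case split is deferred to Appendix~\ref{appendix:C}.

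I expect the main obstacle to be the $rmw$ interaction underpinning the second paragraph. Because an $rmw$ is simultaneously a read and a write, the new $\rb;\mo$ and $\mo;\rf$ cycles can overlap the $a_{rr}$-cycle at a shared $rmw$ node, and I must rule out the degenerate situation in which a single $\rf$-removal repairs the SC-inconsistency of both $E'$ and its comparable $E$ at once, which would collapse the distinction the whole argument rests on. Prop~\ref{prop:sc-rr:rmw-frr} is exactly what forecloses this possibility, so the bulk of the effort is the $cra$-bookkeeping that establishes it rather than the top-level reduction.
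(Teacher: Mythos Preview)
Your proposal is essentially the paper's proof: same instantiation of Theorem~\ref{thm:complete-gen}, same appeal to Lemma~\ref{lem:sc-rr:comp:rmw-fence} for the additional cycle shapes, same use of Prop~\ref{prop:sc-rr:rmw-frr} together with Prop~\ref{prop:sc-rr:cra} to get the $cra$ non-containment, and the same crucial-set / $\pwc{SC}$ endgame (the paper spells this out as a four-way case split on whether a minimal crucial set for $E$ exists and how it intersects $st^{-}$, via Lemma~\ref{lem:crucial-based-unsafety} and Corollary~\ref{cor:read-elim-crucial-unsafety}, which your ``familiar construction'' paragraph summarizes correctly).

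The one place you diverge is the fence case~(c) of Lemma~\ref{lem:sc-rr:comp:rmw-fence}, and there your route is shakier than the paper's. The paper does \emph{not} try to run a crucial-set argument for this case at all: it invokes Lemma~\ref{lem:sc-rr:frr} to conclude that any effect producing such an $E$ must manipulate $f_{rr}$ (remove it or remove a $\po$-edge incident to it), and then simply declares such effects out of scope for Complete, because $f_{rr}$ is a no-op under \textit{SC} (this restriction is stated in the main text just before Lemma~\ref{lem:sc-rr:frr}). Your alternative---collapsing the fence cycle in $E$ to ``the plain $a_{rr}$-cycle on $r_x,r_y$ already handled by Theorem~\ref{thm:sc-rr:comp-no-w-elim}''---does not obviously work: Theorem~\ref{thm:sc-rr:comp-no-w-elim} relies on $E$ exhibiting one of the \emph{non}-$a_{rr}$ cycles of Lemma~\ref{lem:sc-rr:comp:general} so that $cra(rr)\nsubseteq cra(rsc)$ separates the crucial sets of $E'$ and $E$. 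If $E$'s only \textit{SC}-violation is itself an $a_{rr}$-shaped cycle (the fence cycle with the no-op fence erased), that separation argument collapses, since both sides are $cra(rr)$. So for the fence case you should follow the paper and dispatch it by scope rather than by reduction.
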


        The proofs of Lemma~\ref{lem:sc-rr:comp:rmw-fence}, \ref{lem:sc-rr:frr} and Theorem~\ref{thm:scrr:comp:rmw-frr-no-w-elim} are in Appendix~\ref{appendix:D}.

    \subsection{Write Elimination-effects}

    \label{subsec:scrr-welim}

    We acknowledge that we have proven $\textit{SC}_\textit{RR}$ Complete w.r.t. \textit{SC} only for transformation-effects without write-elimination.
    The reason for this can be explained via a simple counter example in Fig~\ref{sc-rr-vr}. 
    Consider the original program on the left. 
    The compiler, on performing value range analysis, can declare the write $z=0$ as dead-code and remove it, resulting in the code to the right. 
    \begin{figure*}[htbp]
        \centering
        \includegraphics*[scale=0.6]{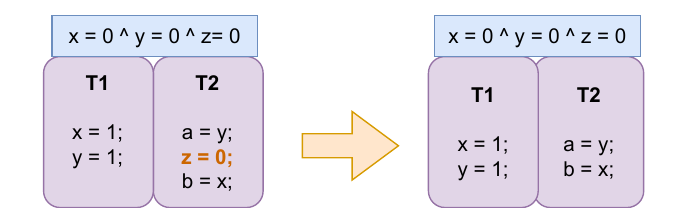}
        \caption{The write $z=0$ eliminated as dead-code on performing value range analysis.}
        \label{sc-rr-vr}
    \end{figure*}
    While such a transformation-effect is safe under \textit{SC} (we welcome to readers to check it themselves for all possible candidate executions), the same cannot be said for $\textit{SC}_\textit{RR}$.
    Observe the two similar candidate executions $E \sim E'$ in Fig~\ref{sc-rr-welim}, the left of the original program's pre-trace $P$ and the right for the transformed pre-trace $P'$. 
    We can see that the behavior (yellow box) justified by $E$ is inconsistent under $\textit{SC}_\textit{RR}$, the cycle $[b=x];\rb;[x=1];\mo;[z=0];\po;[b=x]$ in $E$, violates consistency rule (e).
    Whereas $E'$ is consistent under $\textit{SC}_\textit{RR}$ for $P'$.
    To finish, note that even on changing the $\mo$ between $y=1$ and $z=0$, the resultant candidate execution still remains inconsistent under $\textit{SC}_\textit{RR}$, due to the cycle $[z=0];\mo;[y=1];\rfe;[a=y];\po;[z=0]$ violating constraint (c).
    Thus, by Def~\ref{def:transf-safe}, this transformation-effect is unsafe under $\textit{SC}_\textit{RR}$.  
    \begin{figure*}[htbp]
        \centering
        \includegraphics*[scale=0.6]{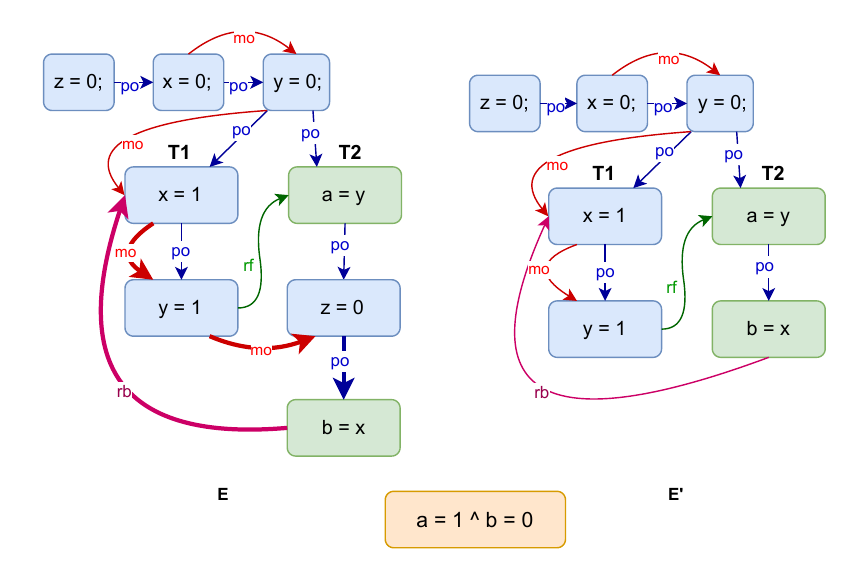}
        \caption{The outcome $a=1 \ \wedge \ b=0$ (yellow box) forbidden under $\textit{SC}_\textit{RR}$ for the original pre-trace due to $E$ but allowed after write-elimination due to $E'$.}
        \label{sc-rr-welim}
    \end{figure*}

    We now have an example where a write-elimination effect is safe under \textit{SC} but unsafe under $\textit{SC}_\textit{RR}$.
    \note{In order to allow such an effect, we may require adding an additional rule to $SC_{RR}$, which will need to prohibit the behavior represented by $E'$.
    However, this additional axiom would be prohibiting the message passing weak outcome $E'$, effectively making reordering of indepdendent reads unsound once again.
    A feasible way to incorporate such effects without adding any axiom is to observe the role of a write in constraints (d) and (e) of $\textit{SC}_\textit{RR}$. 
    We conjecture that it plays an additional role similar to that of constraint (g), that involving $f_{RR}$. 
    While elimination of a write necessarily involves removal of certain $\mo$ relations, those that are to restrict certain outcomes can instead be taken over by an $f_{rr}$ event in place. 
    We keep this as a potential future work.}

    \section{Discussion}
    \label{sec:discussion}
    We have established a representation of program transformations which allow us to decompose them as effects across execution traces.
    This enabled us to reason about safety of transformations w.r.t. memory models in terms of elementary transformation-effects, allowing us to formally specify and reason about compositional property such as Complete, which guarantees safety of program transformations from one model to another.
    We showcased the above formalism's application comparing two models: $\textit{SC}_\textit{RR}$ with respect to \textit{SC}.
    showing $\textit{SC}_\textit{RR}$ Weak, Sound and Complete w.r.t. \textit{SC} for all effects barring write elimination.
     
    \subsection{Usefulness: A New Property For Programming Models}
    
    Def~\ref{def:transf-effect} provides a feasible way to represent transformations.
    Coupled with Theorem~\ref{thm:complete-gen}, we have a process of representing the space of transformation-effects for which their safety can be guaranteed from one model to another.

    \paragraph*{Isolated reasoning of safety for Optimizations}
        One of the most useful property of a memory model is that of data-race-freedom (DRF). 
        It is typically coupled with the fact that under DRF, the model guarantees \textit{SC} reasoning.
        This enables programmers to think about their programs as though it adheres to sequential interleaving, leaving aside much of the complexity of the weak memory model semantics.
        However, DRF-SC may not guarantee a model is Complete w.r.t. \textit{SC}.
        This has implications for compiler optimizers; they cannot solely rely on \textit{SC} reasoning when it comes to designing safe optimizations for the language model.
        Proving models Complete help address this. 
        For instance, if we show the C++ memory model is weaker than that of LLVM, followed by proving C++ Complete w.r.t. LLVM, 
        it guarantees any new optimization designed for LLVM can naturally be applied to programs written in C++.
        This implies LLVM developers need not worry about designing new optimizations that may be rendered unsafe for C++ programs compiled down to its language.
        
    \paragraph*{"DRF-SC" like property for Optimizations}
        Note that DRF is a constraint on the shape of programs, which can be inferred using binary relations in a program's corresponding candidate execution \cite{RaceAdve}. 
        If a model is not Complete w.r.t. another for all transformations, it is possible to identify the set of transformation-effects for which it does.
        Alternatively, this can be translated into a constraint on program shapes, those similar to DRF.
        
        The constraint on transformation-effects we placed (no write-elimination) for proving $\comp{\textit{SC}_{\textit{RR}}}{SC}$ can be equivalently viewed as a constraint on the shape of pre-traces.
        For instance, if the given compiler only performs adjacent write-before-write elimination (eg: $[x=1];[x=2] \mapsto [x=2]$), then for programs that do not have adjacent writes to same location, any optimization done by the compiler under \textit{SC} can also be done under $\textit{SC}_\textit{RR}$.
        As another example, consider Corollary~\ref{cor:sc-rr:rw-reord} as shown below, which states when a transformation-effect involves read-write reordering, when considering $\textit{SC}_\textit{RR}$.
        \begin{restatable}{corollary}{corscrrrwreord}
            \label{cor:sc-rr:rw-reord}
            For transformation-effects quantified by Theorem~\ref{thm:complete-gen} with pair $E \sim E'$ and $B = SC$, $M = SC_{RR}$,
            we have the following for $E$ if no write elimination-effect exists.
            \begin{align*}
                (\rfi;[r_{x}];\po \cup \mo;\rfe;[r_{x}];\po) \ \text{cycle} \implies [r];\po;[w] \in \po^{-}. 
            \end{align*}
            where $r_{x} \in (r(P) \cap r(P'))$.
        \end{restatable}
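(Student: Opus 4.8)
The plan is to argue directly from the consistency of $E'$ under $SC_{RR}$ together with the fact that every $\rf$ and $\mo$ edge appearing in the hypothesized cycle survives the transformation. First I would observe that each disjunct of the antecedent is an instance of a cycle already forbidden by a rule common to $SC$ and $SC_{RR}$: the relation $\rfi;[r_x];\po$ is contained in $(\po \cup \rf)^{+} = \hb$, so an $\rfi;[r_x];\po$ cycle is an $\hb$ cycle violating rule (b) ($\hb$ irreflexive); likewise $\mo;\rfe;[r_x];\po \subseteq \mo;\hb$, so a $\mo;\rfe;[r_x];\po$ cycle violates rule (c) ($\mo;\hb$ irreflexive). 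In the first case the cyclic path has the shape $[w];\rfi;[r_x];\po;[w]$ for some write $w$, and in the second $[w_1];\mo;[w_2];\rfe;[r_x];\po;[w_1]$ for writes $w_1, w_2$; in both the sole program-order segment runs from the read $r_x$ to a write.

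Next I would use $E \sim E'$ and the no-write-elimination hypothesis ($st^{-} \cap w(P) = \phi$) to transport the non-$\po$ portion of the cycle into $E'$. Since no write is removed, $w$ (resp. $w_1, w_2$) survives in $P'$; since $r_x \in r(P) \cap r(P')$ is a common read, Def~\ref{def:sim-exec} preserves its incoming $\rf$ edge, and common writes preserve their $\mo$. Hence $[w];\rf;[r_x]$, and thus $[w];\hb;[r_x]$, persists in $E'$ in the first case, while $[w_1];\mo;[w_2]$ together with $[w_2];\rf;[r_x]$ persists in the second. Whether the formerly internal $\rfi$ becomes $\rfe$ after the transformation is immaterial here, since $\rf \subseteq \hb$ regardless.

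Then I would invoke that $E'$ is consistent under $SC_{RR}$ (it is the execution furnished by Theorem~\ref{thm:complete-gen}, shown inconsistent only under $SC$ via Lemma~\ref{lem:sc-rr:comp:general}). Because rules (b) and (c) hold in $E'$ while the $\hb$ (resp. $\mo;\hb$) prefix of the cycle persists, the cycle must be severed on its one remaining segment, the program-order step from $r_x$ to the write: that is, $[r_x];\hb;[w]$ is empty in $E'$, and in particular $[r_x];\po;[w]$ is empty. Since this ordering held in $P$ but not in $P'$, Def~\ref{def:transf-effect} gives $[r_x];\po;[w] \in \po^{-}$, a read-before-write edge, which is precisely the claimed $[r];\po;[w] \in \po^{-}$ (symmetrically with $w_1$ in the second case).

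The step I expect to be most delicate is this last one's transitive bookkeeping: I must rule out that the ordering from $r_x$ to the write is re-established in $E'$ along some other surviving path, and confirm that the eliminated relation is genuinely a read-to-write $\po$ edge rather than an arbitrary edge buried in a longer $\po$ chain. Both are forced by the consistency of $E'$ — any remaining $\po$ (or $\hb$) route from $r_x$ back to the write would reinstate the forbidden $\hb$ (resp. $\mo;\hb$) cycle and contradict $c_{SC_{RR}}(E')$ — so the removal of the $[r_x];\po;[w]$ step itself is unavoidable, establishing the corollary.
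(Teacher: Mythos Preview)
Your proposal is correct and follows essentially the same approach as the paper's proof: both argue that the $\rf$/$\mo$ portion of the cycle persists in $E'$ (by $E \sim E'$ and no write elimination), so consistency of $E'$ under $SC_{RR}$ forces the sole $\po$ segment $[r_x];\po;[w]$ to be removed. The paper's version is terser—it simply states that ``$\rfi;[r_x]$ cannot change'' and ``$\mo$ and $\rfe$ cannot change'' and concludes the $\po$ edge must lie in $\po^{-}$—whereas you additionally name the specific $SC_{RR}$ rules ((b) and (c)) witnessing the contradiction and worry about whether $\rfi$ could become $\rfe$ (it cannot, since $tid$ is fixed, but your observation that it would not matter for $\hb$ is harmless).
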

        Suppose we could not prove $\textit{SC}_\textit{RR}$ Complete w.r.t. \textit{SC} when $E$ had relations as (a) and (c) in Lemma~\ref{lem:sc-rr:comp:general}.
        Then, via Corollary~\ref{cor:sc-rr:rw-reord}, we can conclude that for programs not having read-write syntactic order, we have $\textit{SC}_\textit{RR}$ Complete w.r.t. \textit{SC}.

    \section{Related Work}
    \label{sec:related}
    The earlier implications of memory consistency models on optimizations were observed on those performed by hardware \cite{AdveS}.
    The issue was different, and entailed coming up with a formal specification of hardware memory consistency, much of which were (and still quite a bit) described in informal prose format \cite{OwensS}, \cite{POWERSarkar}.
    An axiomatic formulation, as opposed to operational model greatly helped in clarifying hardware designer intentions \cite{AlglaveJ}.
    The same axiomatic formulation helped in specifying programming language consistency models.

    The implication of language consistency models was only later identified, facing along with informal prose, the non-trivial effects on compilation \cite{PughW}, \cite{BattyS}, \cite{LahavO}, \cite{WattJava}, \cite{GopalECMA}. 
    Earlier criticisms of Java were primarily based on informal/unclear specifications that were hard to understand, naturally leading to most JVMs violating them \cite{PughW2}.
    Optimizations like bytecode reordering, copy propagation, thread-inlining performed on Java Programs violated the specifications \cite{PughW}, \cite{MansonP}.
    A comprehensive study on the impact of optimizations exposed many \textit{SC} optimizations were rendered simply unsafe; examples being redundant read/write eliminations/introductions and reordering independent memory accesses \cite{SevcikJ}. 
    C++ also faced similar issues of informal/unclear specifications, much of which is now properly clarified and formalized \cite{BattyS}.
    However, the detrimental impact on optimizations were being slowly acknowledged, revealing simple optimizations such as register spilling and redundant write elimination unsafe in programs with benign data-races \cite{BoehmC11}.  
    Testing product compilers such as GCC exposed bugs that involved optimizations like redundant eliminations, reordering atomic/non-atomic code fragments and redundant introductions, all of which were unsafe under the C11 memory model.    
    A comprehensive study of the common class of optimizations disallowed in the C11 memory model exposed the consequence of allowing certain causal cycles in conjunction with the restrictions on non-atomic reads on compiler optimizations \cite{VafeiadisV}.
    Validating LLVM optimizations for compiled C11 programs also exposed existing bugs related to shared memory accesses \cite{MetaDataSoham}, some of which are still being discovered \cite{LICM-LLVM}. 
    The JavaScript memory model also was subject to similar issues, with impacts due to unclear specification on hardware mapping and optimizations \cite{WattJava}, \cite{GopalECMA}.
    
    To add, other complications with programming language models involve the infamous out-of-thin-air (OOTA) problem; semantics permitting behaviors which should not be possible by programs \cite{BoehmD}. 
    Recent work concludes that a significant part of this problem can be reduced to tracking semantic-dependencies without disallowing optimizations that remove redundancies \cite{BattyM}.
    Addressing OOTA has resulted in several solutions, yet they do not entirely solve it, permitting behaviors which are still considered OOTA or not successfully able to permit all desired optimizations.
    Java was one of the first languages for which a solution was proposed, involving complex reasoning to justify behaviors involving causal cycles \cite{MansonP}.
    Although this was done catering to optimizations, it still disallows thread-inlining. 
    The \textit{promising semantics} proved to be a good solution; an operational model associating promises to writes which will guarantee to be visible to other threads at a later time in execution \cite{KangPromise}. 
    \note{
    Recent proposals have also seen the use of \emph{event structures}, a representation of programs as a collection of different possible execution paths.
    The Modular Relaxed Dependencies (MRD) model attempts to capture semantic dependencies by calculating a \emph{co-product} from the event structure of the program\cite{PaviottiC}. 
    However, this notion of semantic dependency turns out to be too strict, disallowing optimizations such as thread-inlining, and even certain forms of thread-local redundant read/write eliminations.  
    Another compelling approach is to use an event structure to describe an operational memory model with optimizations included as operational steps\cite{PharabodJ}. 
    These are a small set of per-thread modifications to the event structure that represent thread-local reordering and elimination optimizations.
    These operational steps at the level of event structures coincidentally resemble our notion of (or can be a counterpart to) transformation-effects over pre-traces. 
    However, their formalism is specific to C++11, and it falls short while handling elimination of atomic writes \cite{PichonThesis}.
    While they do have a representation of global optimizations involving value range analysis, other inter-thread analyses that would permit thread-inlining, reordering, elimination and introduction of memory events that would otherwise be unsound is not considered.
    However, we believe their model is appropriate to reason with JIT compilers, something our pre-trace model does not currently handle.
    }

    \note{
    We note that the above solutions using event structure-based models or more generally, multi-execution memory models are to handle the problem of OOTA, as opposed to our goal with pre-traces.
    The pre-trace model is meant to represent a control flow graph of the program and the effects as changes to the graph itself.
    We argue this is closer to a compiler designer's intuition of programs they wish to optimize.  
    Moreover, our model itself is much more conservative w.r.t. the view of program behaviors, permitting more possible outcomes than the program can exhibit (recall example in Fig~\ref{Pr-to-P}, the reads are not associated with any write value yet).
    This conservative view is often desired to design optimizations, and helps in proving properties such as correctness of flow-analyses and portability of optimizations (which in our case, proving Complete between two models).
    While the focus of our work does not inherently deal with OOTA, we strongly believe our approach to decomposing an optimization into observable effects on different traces can help address this issue.
    We \emph{conjecture} the problem of allowing (safe) optimizations involving redundancies can be separated from the OOTA problem, although we leave this for future work.
    Our view is that the idea of event structures flow naturally from trace-level reasoning of programs, whereas the idea of pre-traces and effects flow naturally from reasoning about safety of traditional compiler optimizations via abstract representation of programs.
    It would be unsurprising to us if there exists a convergence point where both these models meet.
    }

    While consistency models such as \textit{SC} do render optimizations unsafe, whether they affect performance of programs is still debatable. 
    A compelling proposal suggests relying just on \textit{SC}, claiming much of the optimizations responsible for performance are already allowed by the model \cite{MarinoSCPresv}. 
    While their proposal arguably simplifies the whole problem, it requires significant changes to hardware, which can enable tracking changes to shared memory state in a concurrent execution.
    Such changes have not been adopted to date by commercial hardware vendors, hinting towards their impracticality.

    Addressing safety of optimizations using execution traces has been done before, but aspects of completeness for which we adopt it has not been addressed. 
    Previous contributions are on optimizations performed on data-race-free programs and those which do not introduce any races in the process \cite{SevcikJ}. 
    Moreover, the space of these optimizations are almost always involving adjacent thread-local events. 
    Our work intends to place no such constraint, and instead guarantee safety of any future optimization from one model to another; be it thread-local or inter-thread global optimizations.  

    Our design of pre-trace matching for identifying correct effects is quite similar to a validator designed to verify C11 optimizations \cite{SohamWeakMO}. 
    In some sense, our work aims generalize this idea of matching in order to create a separation between thread-local (sequential) and concurrent semantics, so that aspects of correctness of transformation can be addressed in a disjoint fashion. 
    An interesting observation from this work is their heuristic approach to loops, claiming utilizing just two iterations suffices in guaranteeing no false positives.
    We intend to investigate this claim in future, albeit from a perspective of identifying the class of memory models for which this claim will hold.  

    Previous work in explaining the interaction between memory consistency and optimizations has been done \cite{LahavV}, showcasing that weak behaviors of \textit{TSO} can be explained via local write-read reordering and read-after-write elimination. 
    It also exposes weak behaviors of models such as release-acquire that cannot be explained via these elementary thread-local transformations nor thread-inlining.
    The results however, are orthogonal to our notion of Complete, and what we refer to as \textit{Optimality}; one must ensure the derived model is not weakened too much that it allows more optimizations than desired.

    \section{Conclusion}
    \label{sec:conclusion}
    In this work, we proposed a representation of a program transformations suitable enough to understand its compositional relation with memory consistency semantics.
    We achieved this by decomposing a transformation into a series of elementary effects on pre-traces, which represent an over-approximation of program behaviors.
    We separated the role of sequential semantics w.r.t. safety of transformations, formalized the desired compositional nature between memory models (\textit{Complete}), followed by showing its practicality by proving $\textit{SC}_\textit{RR}$ Complete w.r.t. \textit{SC}.
    Our contributions expose the need for a new property between memory consistency models, that which may prove useful when designing future optimizations for compilers, thread-local or global.
    Finally, it paves way to a new design methodology for programming language models, one that places emphasis on desired optimizations, with the intention of retaining safety of existing ones. 

    \note{
        We believe our approach is generalizable enough, in that language models can be derived based on a specific set of desired optimizations.
        However, these optimizations must be in the form of syntactic changes that represent a transformation-effect.
        For instance, optimizations like register allocation, algebraic/conditional expression simplifications will be indistinguishable.
        Our approach can be used to allow algorithmic optimizations (eg: Bubble Sort to Merge Sort). 
        But identifying the precise set of effects to enable such optimizations and derive a practical memory model from it may be non-trivial.
        We also assume loops are finite/bounded, i.e. they can simply be unrolled.
        While we can represent loop optimizations (eg: loop-invariant code motion, loop unrolling, etc.) as syntactic effects, asserting their safety w.r.t. unbounded loops will not be possible as of now (Def~\ref{def:transf-safe} would have to assert safety over unbounded set of pre-traces).
    }

    \subsection{Future Work}

    An immediate follow up to our current work would involve addressing elimination effects.
    While we were able to prove $\textit{SC}_\textit{RR}$ Complete barring write eliminations, we suspect the role of writes also play the same role as $f_{rr}$, disallowing any read-read reordering. 
    Thus, a workaround to guarantee write-elimination effects safe from \textit{SC} would be to replace them with $f_{rr}$ instead (any write elimination effect would incur a fence introduction).

    In the future, we want to derive models over \textit{SC} for other forms of elementary reorderings (read-write, write-read, write-write).
    An immediate strong candidate would be write-read reordering, as we can compare it directly to \textit{TSO}, potentially helping us identify the space of transformations for which \textit{TSO} is Complete w.r.t. \textit{SC}.
    We intend to further combine these reordering models, in hopes of identifying a suitable model that allows all forms of reordering, and Complete w.r.t. \textit{SC} for (possibly) all transformation-effects.
    \note{Some of these combinations would closely resemble models like $TSO$, $PSO$ $RMO$ which are known to be used in practice.
    Thus, comparing our derived models with them along with a result of Complete w.r.t. these models will prove beneficial.}
    In order to make these models practical, we would also need to ensure they are weakened \emph{Optimally}, i.e., we do not allow more optimizations than what we desire.
    \note{
        Obtaining a model for the exact desired set may not always be trivial, noting that allowing effects in turn allows multiple desired optimizations. 
        For example, $SC_{RR}$ also permits the optimization $"a=x;c=y;b=x; \ \mapsto \ a=x;c=y;b=a;"$ forwarding the load to $x$ past the read to $y$.
        Forbidding such an optimization would potentially require adding axioms in addition to removing them. 
        To add, proving models Complete may also require manipulation of $\mo$ relations along with $\rf$, especially when inconsistent executions may not have a crucial set (eg: when an execution is inconsistent under $SC$ only due to $\mo;\po$ cycle).
    }  

    Finally, another promising direction would be to address the long-standing the OOTA problem.
    We conjecture that by decomposing optimizations into effects on pre-traces, the aspect of safety can be separated from this problem entirely, allowing compilers to perform optimizations identifying redundancies as desired.

    \appendix

    \newpage

    \section{Auxiliary Observations}
    \label{appendix:A}
    These will be useful for the proofs in Appendix~\ref{appendix:C},~\ref{appendix:D}.

    \note{
    Every memory model we consider guarantees the cycles that violate its consistency rules in an execution are preserved on adding (pending) $\rf$ and $\mo$ relations.
    Formally, 
    \begin{definition}
        \label{def:bin-rel-presv}
        A memory model $M$ is \emph{relation-preserving} if any binary relation $\rel$ of execution $E$ required for the constraints of $M$ is preserved in every $E'$ ($\rel(E) \subseteq \rel(E')$) such that
        \begin{tasks}(3)
            \task $p(E) = p(E')$.
            \task $\rf(E) \subseteq \rf(E')$.
            \task $\mo(E) \subseteq \mo(E')$. 
        \end{tasks}
    \end{definition}
    }

    Crucial sets and piecewise consistency of a model can be used to derive a consistent candidate execution from an inconsistent one.
    \begin{lemma}
        \label{lem:cr-cons}
        For a given a candidate execution $E$ and memory model $M$ such that $\pwc{M}$ and $\neg c_{M}(E)$, if a crucial set $cr$ is non-empty, then there exists a candidate execution $E_{t}$ such that 
        \begin{tasks}(2)
            \task $c_{M}(E_{t})$.
            \task $p(E_{t}) = p(E)$. 
            \task $\po(p(E_{t})) = \po(p(E))$.
            \task $\mo(E_{t}) = \mo(E)$. 
            \task $\rf(E_{t}) \cap \rf(E) = \rf(E) \setminus cr$.  
        \end{tasks} 
    \end{lemma}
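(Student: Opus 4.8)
The plan is to chain the two constructions the hypotheses supply: the crucial-set definition (Def~\ref{def:cr-rf}) to pass from the inconsistent candidate $E$ to a consistent but non-well-formed intermediate execution, and piecewise consistency (Def~\ref{def:piecewise-cons}) to complete that intermediate back into a candidate execution while leaving $\mo$ and the pre-trace fixed and only \emph{adding} $\rf$ edges. First I would instantiate Def~\ref{def:cr-rf} with $E$ and the given non-empty $cr$, obtaining the execution $E'$ with $\rf(E') = \rf(E) \setminus \{ [a];\rf;[b] \mid b \in cr \}$, $\mo(E') = \mo(E)$, and $c_{M}(E')$.

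Next I would check that $E'$ meets the five preconditions of $\pwc{M}$. It is not well-formed since $cr \neq \phi$ leaves at least one read without an $\rf$ edge; it is consistent by the previous step; $\rf^{-1}$ remains functional and $\mo$ remains a total order because both are inherited from the candidate $E$ (deleting $\rf$ edges cannot break functionality, and $\mo$ is untouched); and $\forall loc\,.\,max(loc) \neq \phi$ holds since the write events and $\mo$ are unchanged. Applying $\pwc{M}$ then yields a candidate $E_{t}$ with $c_{M}(E_{t})$, $p(E_{t}) = p(E')$, $\mo(E_{t}) = \mo(E')$, and $\rf(E') \subset \rf(E_{t})$. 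Properties (a)--(d) follow immediately: (a) is the conclusion of $\pwc{M}$; since the crucial-set step only edits $\rf$ we have $p(E') = p(E)$, giving (b) and hence (c) for the program orders; and (d) is $\mo(E_{t}) = \mo(E') = \mo(E)$.

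The only genuinely delicate property, and the step I expect to be the main obstacle, is (e): $\rf(E_{t}) \cap \rf(E) = \rf(E) \setminus cr$. The inclusion $\supseteq$ is routine, since $\rf(E) \setminus cr = \rf(E') \subseteq \rf(E)$ and $\rf(E') \subseteq \rf(E_{t})$. For the reverse inclusion I must rule out $E_{t}$ re-assigning some crucial read $b$ its \emph{original} write, i.e.\ exactly the assignment whose removal restored consistency. Here I would appeal to how the crucial set is built via $cra$ (Def~\ref{def:crucial-refl-comp}), selecting at most one read per consistency-violating cycle: every remaining edge on $b$'s cycle is then a $\po$ edge, an $\mo$ edge, or an $\rf$ or $\rb$ edge of a non-crucial read, each of which is preserved in $E_{t}$ by (c), (d), and the $\supseteq$ direction already established. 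Re-introducing $b$'s original $\rf$ would therefore reconstitute that cycle and contradict $c_{M}(E_{t})$; hence no original crucial edge survives in $E_{t}$, so $\rf(E_{t}) \cap \rf(E) \subseteq \rf(E')$, and combining the two inclusions closes the argument. The subtle point to nail down carefully is that, under the one-read-per-cycle choice, the edges of $b$'s witnessing cycle other than $b$'s own contribution really are all preserved in $E_{t}$ — this is where the relation-preserving character of the models, together with the fact that $\rb$ into a non-crucial read is fixed once its $\rf$ is fixed, does the work.
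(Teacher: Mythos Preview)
Your overall structure matches the paper's: apply Def~\ref{def:cr-rf} to pass from $E$ to a consistent but non-well-formed $E'$, then invoke $\pwc{M}$ (Def~\ref{def:piecewise-cons}) to complete $E'$ to the desired $E_t$. The paper's own proof is just those two sentences and does not explicitly verify the five premises of $\pwc{M}$ nor property~(e); your treatment of (a)--(d) and the $\supseteq$ half of (e) is more careful than the paper's and is correct.

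The gap is in your $\subseteq$ argument for (e). You justify it by saying the crucial set is ``built via $cra$ \ldots\ selecting at most one read per consistency-violating cycle,'' but the lemma's hypothesis is merely that $cr$ is \emph{some} non-empty crucial set in the sense of Def~\ref{def:cr-rf}; it need not be minimal or one-read-per-cycle. For a non-minimal $cr$ containing two reads $r_1,r_2$ on the same cycle, $\pwc{M}$ may well reassign $r_1$ its original write (this is consistent once $r_2$'s edge is gone), giving $\rf(E_t)\cap\rf(E)\supsetneq\rf(E)\setminus cr$ and breaking (e). Your argument actually establishes (e) only under the extra assumption that $cr$ is minimal (Def~\ref{def:min-crucial}): then for each $r\in cr$ there is a cycle surviving the removal of $cr\setminus\{r\}$, and since $M$ is relation-preserving (Def~\ref{def:bin-rel-presv}) that cycle would persist in $E_t$ if $r$ regained its original edge, contradicting $c_M(E_t)$. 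The paper's own proof does not address this point either, and the lemma is in fact only invoked downstream with minimal crucial sets (Lemma~\ref{lem:crucial-based-unsafety}, Corollary~\ref{cor:read-elim-crucial-unsafety}), so the issue is harmless in context; but you should state the minimality assumption explicitly rather than smuggling it in through the $cra$ construction.
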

    
    \begin{proof}
        Using Def~\ref{def:cr-rf}, we identify crucial set $cr$ and remove the corresponding set of $\rf$ relations to give execution $E'$ such that $c_{M}(E')$ (albeit not well-formed). 
        From $\pwc{M}$, we can use Def~\ref{def:piecewise-cons} to construct $E_{t}$ from $E'$.
    \end{proof}

    The following holds for all transformation-effects not involving any elimination.
    \begin{proposition}
        \label{prop:no-elim-effect}
        For all $P \mapsto_{tr} P'$ such that $st^{-} = \phi$ we have
        \begin{align*}
            \forall E' \in \langle P' \rangle . \exists! E \in \langle P \rangle \ \text{s.t.} \ E \sim E'.
        \end{align*}
    \end{proposition}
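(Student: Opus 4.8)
The statement has two parts, existence of a comparable $E$ and its uniqueness, and I would dispatch existence immediately and spend the argument on the uniqueness. First I would record the structural consequence of the hypothesis $st^{-} = \phi$: by Def~\ref{def:transf-effect} we then have $st(P) \subseteq st(P')$, hence $r(P) \subseteq r(P')$ and $w(P) \subseteq w(P')$, so the common read set $r(P') \cap r(P)$ collapses to $r(P)$ and the common write set to $w(P)$. Existence is then free: the standing assumption $\langle P' \rangle \sqsubseteq \langle P \rangle$, unfolded through Def~\ref{def:cand-exec-set-cmp}, hands us for each $E' \in \langle P' \rangle$ some $E \in \langle P \rangle$ with $E' \sim E$, and $\sim$ is symmetric. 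The real content of the proposition is therefore uniqueness.

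For uniqueness I would fix $E'$ and take two candidate executions $E_1, E_2 \in \langle P \rangle$ each comparable to it, aiming to show $E_1 = E_2$. Since both carry the same pre-trace $P$, it suffices to prove $\rf(E_1) = \rf(E_2)$ and $\mo(E_1) = \mo(E_2)$. For reads-from, the first clause of Def~\ref{def:sim-exec} applied to $E_i \sim E'$ becomes $r(P) = \textit{codom}(\rf(E') \cap \rf(E_i))$, using the collapse above. Because each $E_i$ is well-formed and $\rf^{-1}$ is functional, every read of $P$ carries exactly one incoming $\rf$ edge, and this equality forces that edge to coincide with the unique incoming edge of $E'$, for both $i = 1,2$. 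Hence the source of every $r \in r(P)$ agrees across $E_1$ and $E_2$; since the $\rf$ relation of an execution of $P$ is precisely the set of these (one-per-read) incoming edges, $\rf(E_1) = \rf(E_2)$.

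For memory order I would invoke the second clause of Def~\ref{def:sim-exec}. Every edge of $\mo(E_i)$ lies between writes of $P$, hence (by $w(P) \subseteq w(P')$) between writes of $P'$, so the antecedent of the implication always holds and we get $\mo(E_i) \subseteq \mo(E')$ for $i = 1,2$. Now $\mo(E_i)$ is a strict total order on $w(P)$ by Def~\ref{def:cand-exec}, and so is the restriction $\mo(E')$ to $w(P)$; a strict total order contained in another strict total order on the same finite set must equal it, since a missing edge would, by totality of the smaller order, force its reverse into the relation and contradict antisymmetry of the larger. Thus $\mo(E_1)$ and $\mo(E_2)$ both equal that restriction, so $\mo(E_1) = \mo(E_2)$, and combined with the reads-from equality this yields $E_1 = E_2$.

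The main obstacle, and the exact place where $st^{-} = \phi$ is used, is the collapse of the common sets: comparability only constrains the reads and writes shared by $P$ and $P'$, so an eliminated read would leave its source entirely unconstrained and two distinct executions of $P$ could both be comparable to $E'$, breaking uniqueness. The two mildly delicate supporting points are the well-formedness together with $\rf^{-1}$-functionality argument that pins each read to a single matching edge, and the elementary ``total order inside a total order is equal'' step; both are routine once the set inclusions $r(P)\subseteq r(P')$ and $w(P)\subseteq w(P')$ are in hand.
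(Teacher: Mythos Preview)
The paper states this proposition without proof, presenting it as an auxiliary observation in Appendix~\ref{appendix:A}; your argument therefore supplies the justification that the paper omits. The proof you give is correct and is the natural one: the collapse $r(P)\subseteq r(P')$, $w(P)\subseteq w(P')$ forced by $st^{-}=\phi$ is exactly what makes Def~\ref{def:sim-exec} pin down $\rf$ and $\mo$ of any comparable $E$ uniquely from $E'$, and your use of well-formedness together with $\rf^{-1}$-functionality for the reads-from part, and the total-order-inside-a-total-order step for the memory-order part, are both sound. Existence via the standing assumption $\langle P' \rangle \sqsubseteq \langle P \rangle$ is also the intended route. There is nothing to compare against in the paper beyond noting that the authors evidently regarded the result as immediate; your write-up makes explicit precisely why.
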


    \subsection{Minimal Crucial Sets and Safety}

    \begin{definition}
        \label{def:min-crucial}
        A non-empty crucial set $cr$ for $E$ and memory model $M$ is \textit{minimal} if $\nexists cr' . cr' \subset cr.$  
        Let $Cr(E, M)$ represent the set of minimal crucial sets of candidate execution $E$ for memory model $M$.
    \end{definition}

    We show how minimal crucial sets enable us to assert a transformation-effect without write-eliminations as unsafe.
    \begin{lemma}
        \label{lem:crucial-based-unsafety}
        Consider a memory model $M$ such that $\pwc{M}$.
        Further, consider a pre-trace $P$ and a transformation-effect $P \mapsto_{tr} P'$ that involves no elimination ($st^{-} = \phi$), along with two candidate executions $E \in I\langle P \rangle_{M}, E' \in I\langle P' \rangle_{M}$ such that $E \sim E'$.
        Then, 
        \begin{align*}
            \exists cr' \in Cr(E', M) \wedge Cr(E, M) = \phi \implies \neg \psf{M}{tr}{P}. \\
            \exists cr' \in Cr(E', M), cr \in Cr(E, M) \ \textit{s.t.} \ cr \nsubseteq cr' \implies \neg \psf{M}{tr}{P}. 
        \end{align*}
    \end{lemma}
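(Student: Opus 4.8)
The plan is to unfold the definition of safety and reduce the whole lemma to a single inconsistency claim about one carefully built execution. By Def~\ref{def:safe-transf} together with Def~\ref{def:cand-exec-set-cmp}, proving $\neg\psf{M}{tr}{P}$ amounts to exhibiting a consistent execution $E'_{t}\in\llbracket P'\rrbracket_{M}$ that has \emph{no} comparable consistent execution in $\llbracket P\rrbracket_{M}$. Since $\pwc{M}$ holds and $cr'\in Cr(E',M)$ is a non-empty minimal crucial set of the inconsistent $E'$, Lemma~\ref{lem:cr-cons} hands me exactly such a candidate: a consistent $E'_{t}\in\llbracket P'\rrbracket_{M}$ with $p(E'_{t})=p(E')$, $\mo(E'_{t})=\mo(E')$, and $\rf(E'_{t})$ agreeing with $\rf(E')$ off the reads of $cr'$. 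Because $st^{-}=\phi$, Prop~\ref{prop:no-elim-effect} guarantees a \emph{unique} comparable $E_{t}\in\langle P\rangle$ with $E_{t}\sim E'_{t}$. The lemma then collapses to a single goal: showing $E_{t}\in I\langle P\rangle_{M}$, for then $E'_{t}$ witnesses the failure of containment $\llbracket P'\rrbracket_{M}\sqsubseteq\llbracket P\rrbracket_{M}$.

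First I would pin down how $E_{t}$ relates to the original $E$. The two comparabilities $E\sim E'$ and $E_{t}\sim E'_{t}$, together with $r(P)\subseteq r(P')$ and $w(P)\subseteq w(P')$, force $\mo(E_{t})=\mo(E)$ and make the $\rf$ of $E_{t}$ and $E$ coincide on every read of $P$ except those lying in $cr'$; write $D\subseteq cr'\cap r(P)$ for the set where they genuinely differ. Thus $E_{t}$ arises from $E$ by reassigning only the $\rf$ of reads in $D$, keeping $\mo$ and the (possibly reordered) $\po$ of $P$ fixed. Invoking the relation-preserving property (Def~\ref{def:bin-rel-presv}), deleting the reassigned $\rf$ edges from $E_{t}$ can only remove relations, so \emph{if} $E_{t}$ were consistent then ``$E$ with the $\rf$ of $D$ removed'' would be consistent too; by Def~\ref{def:cr-rf} this makes $D$ a crucial set for $E$, and $D\neq\phi$ since $D=\phi$ would render $E$ itself consistent, contradicting $E\in I\langle P\rangle_{M}$. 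The target therefore reduces to: the reassignment set $D\subseteq cr'$ is \emph{not} crucial for $E$.

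For the first disjunct, $Cr(E,M)=\phi$, this is immediate: $E$ admits no crucial set at all, so no non-empty $D$ can be crucial, and the reduction directly yields $E_{t}\in I\langle P\rangle_{M}$. For the second disjunct I would argue from the minimal crucial set $cr\nsubseteq cr'$ of $E$. Here I exploit that $tr$ may reorder ($\po^{-}\neq\phi$), so $E$ carries cycles built from $\po$-edges of $P$ that are absent in $P'$, whence $cr'$---which is tailored to the cycles of $E'$---need not break them. Choosing $r^{\ast}\in cr\setminus cr'$, minimality of $cr$ (Def~\ref{def:min-crucial}) supplies a cycle of $E$ that survives removal of the $\rf$ of $cr\setminus\{r^{\ast}\}$ and hence can only be broken (within $cr$) by reassigning $r^{\ast}$. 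Since $r^{\ast}\notin cr'$, the execution $E_{t}$ retains $E$'s $\rf$ at $r^{\ast}$, so once I verify that this cycle uses no $\rf$-edge that $D$ actually removes, the cycle persists in $E_{t}$ and, by relation-preservation, certifies $E_{t}\in I\langle P\rangle_{M}$.

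The main obstacle is precisely that last verification in the second case: certifying that the cycle witnessed by $r^{\ast}$ is not inadvertently broken by the $cr'$-induced reassignment $D$. I expect to handle it through the reduction by contraposition---assuming $E_{t}$ consistent yields a crucial $D\subseteq cr'$, and I must show this is incompatible with the coexistence of a minimal $cr\nsubseteq cr'$, leveraging that each minimal crucial read corresponds to an $\rf$-edge indispensable for some cycle that the reassignment over $cr'$ leaves intact. This bookkeeping over which cycles are broken by $D$ and which genuinely require reads from $cr\setminus cr'$ is where the real work sits; by contrast the surrounding scaffolding (Lemma~\ref{lem:cr-cons}, Prop~\ref{prop:no-elim-effect} and Def~\ref{def:bin-rel-presv}) is routine.
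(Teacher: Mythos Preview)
Your overall architecture matches the paper's proof exactly: build $E'_{t}$ from $E'$ via Lemma~\ref{lem:cr-cons}, invoke Prop~\ref{prop:no-elim-effect} to get the unique comparable $E_{t}\in\langle P\rangle$, and argue $E_{t}$ is inconsistent using relation-preservation (Def~\ref{def:bin-rel-presv}). The paper compresses all of this into two sentences, simply asserting that ``since $cr$ is non-existent (or minimal)\ldots the resultant candidate execution $E_{t}\sim E'_{t}$ from $E$ will still be inconsistent.'' Your contrapositive reformulation---$E_{t}$ consistent $\Rightarrow$ $D\subseteq cr'$ is a crucial set for $E$---is a genuine sharpening of that assertion, and it dispatches the first implication ($Cr(E,M)=\phi$) cleanly.

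The obstacle you flag in the second implication is real, and your proposed attack will not close it under the hypothesis \emph{as written}. From $\exists\,cr\in Cr(E,M)$ with $cr\nsubseteq cr'$ you cannot conclude that \emph{no} minimal crucial set of $E$ sits inside $cr'$: there could perfectly well be a second minimal crucial set $cr_{0}\subseteq cr'$, in which case $D$ (or a subset of it) \emph{is} crucial for $E$ and your contrapositive yields no contradiction. The cycle you isolate via $r^{\ast}\in cr\setminus cr'$ may still pass through $\rf$-edges of reads in $cr'\cap cr$, so reassignment over $D$ can break it; minimality of $cr$ only prevents proper \emph{subsets of $cr$} from being crucial, not arbitrary subsets of $cr'$. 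The paper's one-line justification has the same gap. The resolution is that every place the paper \emph{applies} this lemma (Theorems~\ref{thm:sc-rr:comp-no-w-elim} and~\ref{thm:scrr:comp:rmw-frr-no-w-elim}) actually establishes the stronger premise $\nexists\,cr\in Cr(E,M)\ .\ cr\subseteq cr'$, i.e.\ \emph{no} minimal crucial set of $E$ is contained in $cr'$. Under that universal hypothesis your contrapositive finishes immediately: $D\subseteq cr'$ crucial would contain some minimal $cr_{0}\subseteq cr'$, contradiction. So your scaffolding is right; just strengthen the second hypothesis to the universal form and the ``real work'' evaporates.
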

    
    \begin{proof}
        For both cases, from Lemma~\ref{lem:cr-cons}, we can construct a consistent candidate execution $E'_{t}$ from $E'$ using $cr'$.
        However, from Def~\ref{def:cr-rf}, since $cr$ is non-existent (or minimal), and that $M$ is relation-preserving (Def~\ref{def:bin-rel-presv}), the resultant candidate execution $E_{t} \sim E'_{t}$ from $E$ will still be inconsistent.
        Because $st^{-} = \phi$, from Prop~\ref{prop:no-elim-effect}, no other candidate execution from $\langle P \rangle$ is similar to $E'_{t}$.
        Thus, we have a consistent execution $E'_{t}$ under $M$ in $P'$, for which there does not exist any similar consistent execution in the original pre-trace $P$. 
        From Def~\ref{def:safe-transf}, we can infer $\neg \psf{M}{tr}{P}$.
    \end{proof}
 
    Using Lemma~\ref{lem:crucial-based-unsafety}, we can also infer transformation-effects unsafe for a subset of read-elimination. 
    \begin{corollary}
        \label{cor:read-elim-crucial-unsafety}
        Consider the above setting, but with $st^{-} \cap w(P) = \phi$, with any $cr \in Cr(E, M)$ and $cr_{t} = cr \cap r(P')$ such that $cr_{t} \neq \phi$.
        Then, 
        \begin{align*}
            \exists cr' \in Cr(E', M) \wedge Cr(E, M) = \phi \implies \neg \psf{M}{tr}{P}. \\
            \exists cr' \in Cr(E', M) \ . \ cr_{t} \nsubseteq cr' \implies \neg \psf{M}{tr}{P}. 
        \end{align*}
    \end{corollary}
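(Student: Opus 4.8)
The plan is to follow the proof of Lemma~\ref{lem:crucial-based-unsafety} almost verbatim, the only genuinely new ingredient being the bookkeeping forced by the fact that reads (though not writes) may now be eliminated, so that Prop~\ref{prop:no-elim-effect} no longer supplies a \emph{unique} comparable execution. First, exactly as before, I would use Lemma~\ref{lem:cr-cons} together with $\pwc{M}$ to turn the witness $E'$ and its minimal crucial set $cr'$ into a consistent candidate execution $E'_{t} \in \llbracket P' \rrbracket_{M}$ with $\mo(E'_{t})=\mo(E')$ and $\rf(E'_{t})\cap\rf(E')=\rf(E')\setminus cr'$. The goal is then to show that $E'_{t}$ admits no comparable consistent execution in $P$; once established, Def~\ref{def:safe-transf} yields $\neg\psf{M}{tr}{P}$ directly.

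The key observation replacing the appeal to Prop~\ref{prop:no-elim-effect} is a bound on how far any comparable $E_{t}$ can stray from $E$. Suppose $E_{t} \in \langle P \rangle$ and $E_{t} \sim E'_{t}$. Chaining the comparabilities $E \sim E'$, the fact that $E'$ and $E'_{t}$ differ only on $cr'$, and $E_{t} \sim E'_{t}$ shows that $E_{t}$ and $E$ carry the same $\rf$ on every surviving read outside $cr'$, and---because $st^{-}\cap w(P)=\phi$, so no write is removed---the same $\mo$ on all of $w(P)$. Hence $E_{t}$ and $E$ can disagree only on the read set $S := (cr'\cap r(P)) \cup (r(P)\setminus r(P'))$, the survivors reassigned by $cr'$ together with the eliminated reads. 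By relation-preservation (Def~\ref{def:bin-rel-presv}), if such an $E_{t}$ were consistent then deleting the $\rf$ of $S$ from $E$ would already yield a consistent (non-well-formed) execution, since its $\rf$ relation is contained in that of $E_{t}$; that is, $S$ would be a crucial set of $E$.

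This immediately settles the first case: if $Cr(E,M)=\phi$ then $E$ has no crucial set at all, so no comparable consistent $E_{t}$ can exist and $E'_{t}$ is the required witness. For the second case I would argue that the hypothesis $cr_{t} \nsubseteq cr'$ prevents $S$ from being crucial. Pick $r^{\ast} \in cr_{t} \setminus cr' = (cr\cap r(P'))\setminus cr'$; then $r^{\ast}$ is a survivor, lies in the minimal crucial set $cr$, and---being neither eliminated nor in $cr'$---satisfies $r^{\ast} \notin S$. Minimality of $cr$ (Def~\ref{def:min-crucial}) makes $r^{\ast}$ indispensable: removing $cr\setminus\{r^{\ast}\}$ leaves a violating cycle whose only removable reads-from edge within $cr$ is that of $r^{\ast}$. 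Since $r^{\ast}$'s $\rf$ and all $\mo$ edges survive in every comparable $E_{t}$, this cycle is reinstated in $E_{t}$, so $E_{t}$ is inconsistent; as this holds for every comparable $E_{t}$, none is consistent, and $E'_{t}$ again witnesses $\neg\psf{M}{tr}{P}$.

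The main obstacle is this last step: I must ensure that the cycle anchored at $r^{\ast}$ is not secretly broken by the reassignment of $S$, i.e. that its remaining $\rf$ edges belong to reads that are neither eliminated nor in $cr'$, so that $S$ cannot cover the $r^{\ast}$-cycle through some \emph{other} read. This is exactly where the concrete $cra$ characterisation is needed: invoking Lemma~\ref{lem:sc-cra} (and Prop~\ref{prop:sc-rr:cra} in the $\textit{SC}_\textit{RR}$ setting) one sees that in the SC family the minimal cycle witnessing $r^{\ast}$ carries a single removable reads-from edge, namely $r^{\ast}$'s, with the rest of the path made of $\po$, $\mo$, and reads-from edges forced by the relation definitions themselves, so that reassigning $S$ leaves it intact. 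I would therefore isolate this shape property of minimal crucial cycles as a short auxiliary claim and discharge the corollary from it.
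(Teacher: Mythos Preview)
Your treatment of the first case is correct and matches the paper. For the second case, the paper takes a shorter route that avoids your obstacle altogether: rather than constructing $E'_t$ upfront and then chasing a particular surviving cycle through $r^{\ast}$, it first reassigns the $\rf$ relations of the eliminated crucial reads $cr \setminus cr_t$ in $E$. Since these reads are absent from $P'$, the reassignment preserves $E \sim E'$; since $cr$ is minimal and $cr_t \neq \phi$, the modified $E$ stays inconsistent, with $cr_t$ now a minimal crucial set consisting entirely of surviving reads. With the eliminated reads thus neutralised, the situation collapses to the second case of Lemma~\ref{lem:crucial-based-unsafety} with $cr_t$ playing the role of $cr$, and $cr_t \nsubseteq cr'$ is precisely the required hypothesis.

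Your proposed fix---appealing to Lemma~\ref{lem:sc-cra} and Prop~\ref{prop:sc-rr:cra} to argue that the minimal cycle through $r^{\ast}$ carries only one removable $\rf$ edge---has two problems. First, the corollary is stated for an arbitrary piecewise-consistent $M$, so invoking SC-specific structure undercuts its generality. Second, even for SC those lemmas do not deliver the claimed shape: a single $\hb$ cycle, or a $\rb;\mo^{?};\hb$ cycle, may traverse several $\rfe$ edges inside the $\hb$ segment, any of which could land on an eliminated read or on a read in $cr'$ and hence be reassigned in $E_t$. Your ``secretly broken'' worry is therefore real and not discharged by the cited lemmas. The paper's reassignment trick dissolves this issue because it never needs to isolate a particular cycle---it shifts the bookkeeping so that the earlier lemma applies directly.
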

    
    \begin{proof}
        The first case is same as the first of Lemma~\ref{lem:crucial-based-unsafety}. 
        For the second, we first reassign the $\rf$ for the crucial reads in $cr \setminus cr_{t}$.
        This still keeps both $E$ and $E'$ similar and inconsistent. 
        Without the extra reads, the case now is same as the second case in Lemma~\ref{lem:crucial-based-unsafety}.
    \end{proof}

    \newpage

    \section{Proofs for Sequential Consistency}
    \label{appendix:B}
    \begin{theorem}
        \label{thm:sc-not-redn}
        No consistency rule of \textit{SC} is redundant.
    \end{theorem}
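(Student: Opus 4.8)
The statement to prove is an \emph{independence} (mutual irredundancy) claim about the five rules (a)--(e) of \textit{SC}. Recalling the notion of redundancy from Section~\ref{subsec:meth-proposal} — a constraint is redundant when its relation is \emph{contained within} that of another, so that irreflexivity of the larger forces irreflexivity of the smaller — the plan is to refute every possible containment at once by proving the stronger \emph{independence} property: for each of the five rules (a)--(e) I would exhibit an execution $E_i$ that \emph{violates that rule but satisfies the other four}. This suffices because if the relation of one rule were a subset of another's, then any self-loop witnessing the violation would also be a self-loop of the larger relation, contradicting the latter's satisfaction on $E_i$. I work with general executions rather than only candidate executions, as permitted by the footnote to Def~\ref{def:consistent}.

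For the witnesses I would use small litmus-style executions. For (b), a single thread $a\!=\!x;\,x\!=\!1$ in which the read takes its value from the \emph{later} write yields the cycle $[x\!=\!1];\rf;[a\!=\!x];\po;[x\!=\!1]$ in $\hb$; because $x\!=\!1$ is $\mo$-maximal and is exactly the write observed, $\rb$ is empty (so (d),(e) hold) and the sole $\mo$-edge runs from the $\hb$-unreachable initial write (so (c) holds). For (c), I take $x\!=\!1 \parallel a\!=\!x;\,x\!=\!2$ with the read observing $x\!=\!1$ and $\mo$ set to $\mathit{init}\to x\!=\!2 \to x\!=\!1$; then $x\!=\!1 \xrightarrow{\hb} x\!=\!2$ together with $x\!=\!2 \xrightarrow{\mo} x\!=\!1$ gives a $\mo;\hb$ cycle while $\hb$ is acyclic and $\rb$ empty. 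For (d), I take $x\!=\!1 \parallel a\!=\!x;\,b\!=\!x$ with $a$ reading $x\!=\!1$ and $b$ reading the initial value, producing the $\rb;\hb$ cycle $[b\!=\!x];\rb;[x\!=\!1];\rfe;[a\!=\!x];\po;[b\!=\!x]$ whose $\hb$ leg passes through reads only; keeping $x\!=\!1$ $\mo$-maximal prevents any extra $\mo$ step, so (e) survives. For (e), the store-buffering execution of Fig~\ref{fig:sc-wr-axiomatic} (right) has as its only cycle the $\rb;\mo;\hb$ one already exhibited there. Finally, for (a), two concurrent writes $x\!=\!1,\,x\!=\!2$ left $\mo$-incomparable make $\mo$ non-total while leaving every read-based composition vacuously irreflexive.

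The main obstacle is precisely the coupling among (b),(c),(d),(e): these relations form a chain — (c) prefixes (b) with $\mo$, (d) prefixes (b) with $\rb$, and (e) inserts $\mo$ into (d) — so a single cycle introduced to break one rule tends to drag the neighbouring rules down with it once $\mo$ is oriented to satisfy consistency. Naïve choices fail: ordinary load-buffering breaks both (b) and (c), and plain message-passing breaks both (d) and (e). The crux of each construction is therefore to \emph{localize} the cycle — keep the offending write $\mo$-maximal so no $\mo$-successor can close a longer composition, route $\hb$ legs through reads only so that a $\rb;\hb$ cycle offers no intermediate write for $\mo$ to exploit, and arrange that the $\mo$-edges incident to writes on an $\hb$-cycle emanate only from initial writes, which no $\hb$ path can re-enter. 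Once the five executions are fixed, verifying that the other four rules hold in each is a finite, routine case check over the few edges present.
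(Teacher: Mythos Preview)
Your proposal is correct and follows essentially the same approach as the paper: both prove independence by exhibiting, for each of the five rules, a small execution that violates precisely that rule and none of the others (the paper packages its five witnesses in Fig.~\ref{sc-cons}). Your concrete constructions are sound, and your discussion of how to \emph{localize} each cycle---keeping the relevant write $\mo$-maximal, routing $\hb$ through reads, etc.---is exactly the care needed and matches the spirit of the paper's examples.
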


    \begin{proof}

 
        We prove this by contradiction.
        Assume some consistency rule of \textit{SC} is redundant. 
        Then
        \begin{align*}
            \exists a1, a2 \in SC \ \text{s.t.} \ \forall E \ . \ \neg a1(E) \implies \neg a2(E). 
        \end{align*}
        We show that this is not true by construction of a candidate execution that clearly does not satisfy the implication for any given pair of rules.
        We take cases over $a1$, being any of the 5 consistency rules of \textit{SC}, resulting in 5 examples.
        \begin{tasks}(2)
            \task $\mo$ strict total order.
            \task $\mo;\hb$ irreflexive.
            \task $\hb$ irreflexive.
            \task $\rb;\hb$ irreflexive.
            \task $\rb;\mo;\hb$ irreflexive.
        \end{tasks}
        The counter examples are shown in Fig~\ref{sc-cons} as 5 cases.
        For each, notice that exactly one consistency rule of $SC$ is violated, but the others not.
        \begin{figure*}[htbp]
            \centering
            \includegraphics[scale=0.6]{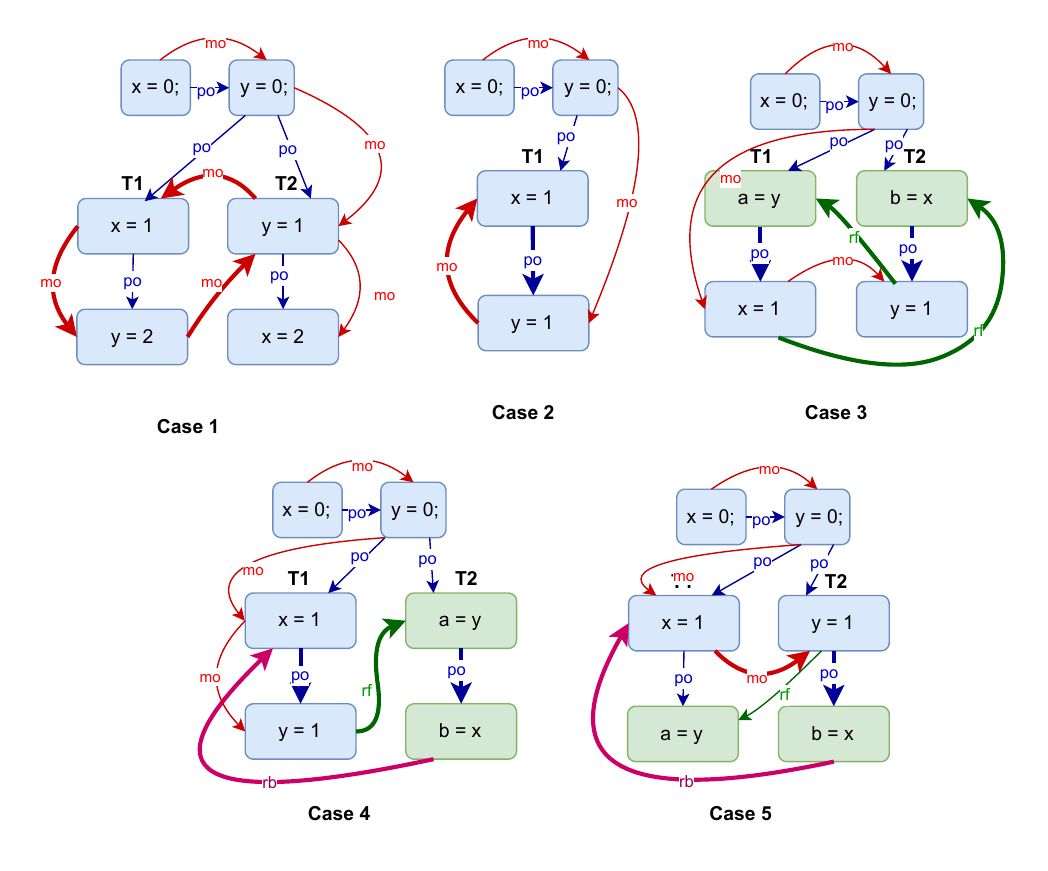}
            \caption{Counter examples for each case of $a1$ being one of the consistency rules.}
            \label{sc-cons}
        \end{figure*}
        
    \end{proof}


    \begin{theorem}
        \label{thm:sc-piecewise-con}
        \textit{SC} is piecewise consistent - $pwc(SC)$.
    \end{theorem}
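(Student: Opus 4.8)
The plan is to exploit the equivalence between the axiomatic presentation of \textit{SC} used here and the existence of a single total order witnessing sequential consistency, and then to complete the missing $\rf$ edges greedily with respect to that order. Concretely, suppose we are given an execution $E$ satisfying the hypotheses of Def~\ref{def:piecewise-cons}: $E$ is consistent under \textit{SC}, $\rf^{-1}$ is functional, $\mo$ is a total order, every location has a maximal write, yet $\neg \wf{E}$ so that some reads carry no $\rf$ edge. Throughout I would keep $p(E)$ and $\mo(E)$ fixed and only ever \emph{add} $\rf$ edges, which is exactly what the conclusion $\rf(E) \subset \rf(E_{t})$, $\mo(E) = \mo(E_{t})$, $p(E) = p(E_{t})$ demands.

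First I would construct an \textit{SC}-order. Since all five consistency rules (a)--(e) hold for $E$, and rules (a)--(e) are equivalent to $\mo$ being total together with acyclicity of $\po \cup \rf \cup \mo \cup \rb$ (the standard equivalence, \cite{LahavV}), this acyclic relation extends to a strict total order $T$ on the events of $E$ with $\po, \rf, \mo, \rb \subseteq T$. Crucially, $T$ already orders the reads that lack an $\rf$ edge, since those reads are themselves present in $E$; only their incoming $\rf$ is absent.

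Next I would assign the missing edges greedily against $T$. For each read $r$ to a location $x$ with no incoming $\rf$, let $W^{<r}_{x}$ be the set of writes to $x$ that are $T$-before $r$. This set is non-empty, since the initialisation write to $x$ precedes every access to $x$ in program order and hence in $T$. I would set the $\rf$ of $r$ to the $T$-maximal element $w$ of $W^{<r}_{x}$. Performing this simultaneously for all unassigned reads yields $E_{t}$ with $p(E_{t}) = p(E)$ and $\mo(E_{t}) = \mo(E)$ (both untouched), with $\rf^{-1}$ still functional (one write chosen per read) and now $\wf{E_{t}}$; together with $\mo$ total and $max(loc) \neq \phi$ this makes $E_{t}$ a candidate execution, and $\rf(E) \subset \rf(E_{t})$ strictly since at least one read was previously unassigned.

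It remains to verify $c_{SC}(E_{t})$, and I expect this to be the crux. The strategy is to show the \emph{same} $T$ still witnesses consistency, i.e. that every newly created edge stays inside $T$. Each new edge $[w];\rf;[r]$ satisfies $w <_{T} r$ by choice. The only other new edges are reads-before edges $[r];\rb;[w']$ arising from writes $w'$ with $[w];\mo_{|x};[w']$; here the $T$-maximality of $w$ in $W^{<r}_{x}$ is precisely what forces $r <_{T} w'$, since a write $T$-after $w$ but $T$-before $r$ would contradict that choice (using $\mo \subseteq T$). Hence $\po \cup \rf \cup \mo \cup \rb \subseteq T$ continues to hold for $E_{t}$, so $\hb = (\po \cup \rf)^{+} \subseteq T$ and all of rules (b)--(e) reduce to irreflexivity of compositions contained in the strict order $T$, while rule (a) is inherited unchanged. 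The delicate point throughout is exactly this reads-before argument: the greedy ``latest prior write'' choice is what prevents the added $\rb$ edges from closing a cycle, and it is where the totality of $\mo$ and the fixed witness $T$ do the real work.
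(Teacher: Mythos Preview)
Your argument is correct and takes a genuinely different route from the paper's proof. You exploit the known equivalence between the paper's irreflexivity axioms (b)--(e) and global acyclicity of $\po \cup \rf \cup \mo \cup \rb$ (the Shasha--Snir style characterisation, which you cite from \cite{LahavV}), extend the acyclic relation to a single witness order $T$, and then fill \emph{all} missing $\rf$ edges simultaneously via the ``latest prior write in $T$'' rule. The verification that the new $\rb$ edges stay inside $T$ is exactly the point where the greedy choice and totality of $\mo$ do the work, and your argument there is clean and complete.

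The paper, by contrast, works directly against the five irreflexivity rules without invoking the global-acyclicity equivalence. It treats one unassigned read $r_x$ at a time, picks an arbitrary write, and then \emph{searches} along $\mo_x$: if the chosen write creates a cycle of the shape $\hb;\rf$, $\mo;\rfe;\hb$, or $\rb;\mo^{?};\rfe;\hb$, it moves to the $\mo$-previous write; if it creates a cycle of the shape $\rb;\mo^{?};\hb$, it moves to the $\mo$-later write. Termination follows because $\mo_x$ has a minimum and maximum, and the paper shows that needing to move in both directions simultaneously would exhibit an $\mo;\hb$ or $\rb;\mo^{?};\hb$ cycle already present in $E$ without the new $\rf$, contradicting $c_{SC}(E)$. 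Your approach buys conceptual economy and handles all missing reads in one shot, at the price of importing the axioms-versus-acyclicity equivalence as a black box; the paper's approach is more operational and self-contained against its own axiomatisation, but treats reads one at a time and requires the slightly fiddly ``cannot be stuck'' case analysis.
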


    \begin{proof}
        

        Consider an execution $E$ such that $c_{SC}(E)$ and that it is not well-formed only due to the missing $\rf$ relation with $r_{x}$.
        The task now is to find a suitable $\rf$ for $r_{x}$ such that none of the consistency rules of $SC$ are violated, giving us our desired candidate execution $E_{t}$.
        We first choose a random write $w_{x}$ and modify our choice based on the cycles formed violating a constraint of $SC$.
        Our modified choice is either an immediate $\mo_{x}$ previous or later write to $w_{x}$.
        We show that this process should terminate, resulting in the desired write. 
        If the first choice violates no rule of $SC$, we are done. 
        If it does, then the following cycles are possible.
        \begin{description}
            \item[Case 1:] $\hb;[w_{x}];\rf;[r_{x}]$ cycle or $\mo;[w_{x}];\rfe;[r_{x}];\hb$ cycle or $\rb;\mo^{?};[w_{x}];\rfe;[r_{x}];\hb$ cycle.
            \begin{itemize}
                \item Choosing an $\mo$ later write $w'_{x}$ ($[w_{x}];\mo;[w'_{x}]$), would always get $[w_{x}];\mo;[w'_{x}];\rf;[r_{x}];\hb$ cycle or $\rb;\mo^{?};[w'_{x}];\rfe;[r_{x}];\hb$ cycle.
                \item Hence, we choose the immediate $\mo$ previous write $w'_{x}$ ($[w'_{x}];\mo;[w_{x}]$) instead.
            \end{itemize}
            \item[Case 2:] $[r_{x}];\rb;[w];\mo^{?};\hb$ cycle 
            \begin{itemize}
                \item Choosing an $\mo$ previous write $w'_{x}$ ($[w'_{x}];\mo;[w_{x}]$), would get the same $[r_{x}];\rb;\mo^{?};\hb$ cycle.
                \item Hence, we choose the immediate $\mo$ later write $w'_{x}$ ($[w_{x}];\mo;[w'_{x}]$) instead.
            \end{itemize}  
        \end{description}
        Since the $\mo_{x}$ has a minimal (initialization) and maximal (candidate execution) element, requiring to choose only $\mo$ previous or only $\mo$ later writes will eventually get us our desired write. 
        However, as we noted above, different cycles may require choosing either one while trying to identify our desired write.
        This can result in never finding a write: choosing an $\mo$ previous one can result in a cycle that requires to choose an $\mo$ later write instead. 
        We show that such a situation would imply $E$ inconsistent under $SC$ without the $\rf$ relation instead, thereby violating our premise.
        There are three cases to consider, pairing the two cycles which would create the scenario mentioned above, while choosing either $w_{x}$ or $w'_{x}$.
        \begin{description}
            \item[Case 1:] $[w];\mo;[w_{x}];\rfe;[r_{x}];\hb$ cycle and $[r_{x}];\rf^{-1};[w'_{x}];\mo_{loc};[w_{x}];\mo^{?};[w'];\hb$ cycle.
            
                This gives us $[w_{x}];\mo^{?};[w'];\hb;\po;[r_{x}];\po;\hb;[w];\mo$ cycle, or $\mo;\hb$ cycle in $E$.
            \item[Case 2:] $[w_{x}];\rf;[r_{x}];\hb$ cycle and $[r_{x}];\rf^{-1};[w'_{x}];\mo_{loc};[w_{x}];\mo^{?};[w'];\hb$ cycle.
            
                This gives us $[w_{x}];\mo^{?};[w'];\hb;\po;[r_{x}];\po;\hb;$ cycle, or $\mo;\hb$ cycle in $E$.
            \item[Case 3:] $\rb;\mo^{?};[w_{x}];\rfe;[r_{x}];\hb$ cycle and $[r_{x}];\rf^{-1};[w'_{x}];\mo_{loc};[w_{x}];\mo^{?};[w'];\hb$ cycle.
            
                This gives us $\rb;\mo^{?};[w_{x}];\mo^{?};[w'];\hb;\po;[r_{x}];\po;\hb$ cycle, or $\rb;\mo^{?};\hb$ cycle in $E$.
        \end{description}
        Thus, we will always be able to find a suitable write $w_{x}$ for $[w_{x}];\rf;[r_{x}]$, resulting in $E_{t}$ being consistent under $SC$.
        Hence, $pwc(SC)$ is true.

        %
    \end{proof}

    \begin{lemma}
        \label{lem:sc-crucial}
        Given an inconsistent execution $E$ under \textit{SC}, a crucial set exists if $\mo$ is a strict total order and does not conflict $\po$.
    \end{lemma}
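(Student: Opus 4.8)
The plan is to exploit the fact that, under the two hypotheses, the only way $E$ can fail an \textit{SC} rule is through a cycle that passes through a read-from edge, so that deleting enough $\rf$ relations is guaranteed to restore consistency. First I would record that since $\mo$ is a strict total order, rule (a) of \textit{SC} already holds for $E$; hence the inconsistency $\neg c_{SC}(E)$ must stem from a violation of one of the rules (b)--(e). Each of these is an irreflexivity constraint on a relation built, via $\hb = (\po \cup \rf)^{+}$ and $\rb = \rf^{-1};\mo_{|loc}$, entirely from the edges $\po$, $\mo$, $\rf$, and $\rf^{-1}$.

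The key observation is that the hypothesis that $\mo$ does not conflict $\po$ gives that $\po \cup \mo$ is acyclic. Consequently any cyclic path witnessing a violation of (b)--(e) must traverse at least one $\rf$ edge (a forward edge inside $\hb$) or one $\rf^{-1}$ edge (the first edge of an $\rb$ step): a cycle using only $\po$ and $\mo$ edges is ruled out by acyclicity. Since deleting the $\rf$ relation of a read $r$ simultaneously removes every $\hb$ contribution of the form $[w];\rf;[r]$ and every $\rb$ contribution of the form $[r];\rf^{-1};[w];\mo_{|loc};[w']$, such a deletion severs every cycle passing through $r$ via $\rf$ or $\rb$.

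Concretely, I would take the candidate crucial set to be all reads, $cr = r(p(E))$, and form the execution $E'$ as in Def~\ref{def:cr-rf} by deleting every $\rf$ edge. Then $\hb$ collapses to $\po^{+} = \po$ (as $\po$ is a strict partial order), so rule (b) holds; $\rb$ becomes empty, so rules (d) and (e) hold vacuously; rule (c), namely $\mo;\hb = \mo;\po^{+}$, is irreflexive precisely because $\po \cup \mo$ is acyclic; and rule (a) is untouched. Hence $c_{SC}(E')$, so $cr$ is a crucial set and existence follows. Because $E$ itself is inconsistent while $E'$ is consistent, no crucial set can be empty, so a minimal, non-empty crucial set exists as a subset; alternatively it may be assembled as the union of the sets $cra(\cdot)$ of Lemma~\ref{lem:sc-cra} taken over the violated constraints, each of which lists exactly the reads whose $\rf$-removal neutralises the corresponding cycle.

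The main obstacle I anticipate is not the forward direction but making two bookkeeping points airtight: first, that removing $\rf$ edges is monotone and can therefore never create a \emph{new} violation (each \textit{SC} relation only shrinks when $\rf$ shrinks, so irreflexivity is preserved), which is what licenses free deletion; and second, reconciling the deletion with the similarity requirement $E' \sim E$ of Def~\ref{def:sim-exec}, i.e.\ checking that the reads whose $\rf$ has been dropped are treated as unconstrained when comparing the preserved $\rf$ and $\mo$ relations, exactly as the downstream appeal to piecewise consistency (Def~\ref{def:piecewise-cons}, Lemma~\ref{lem:cr-cons}) intends. The hypotheses are also what delimits the negative case behind the earlier remark: were $\mo$ not a strict total order, or were some $\po \cup \mo$ cycle present, the offending cycle would survive the removal of every $\rf$ edge and no crucial set could exist.
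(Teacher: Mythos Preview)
Your argument is correct. The paper's own proof is shorter and goes via Lemma~\ref{lem:sc-cra} instead: since $\mo$ is a strict total order, rule (a) already holds; since $\mo$ does not conflict $\po$, the set $cra(\mo;\hb)$ is non-empty; and the remaining $cra(\cdot)$ sets for rules (b), (d), (e) are non-empty by Lemma~\ref{lem:sc-cra}, so a crucial set is assembled by collecting reads from these $cra$ sets. Your ``delete every $\rf$'' argument is more elementary and sidesteps the case analysis packaged in Lemma~\ref{lem:sc-cra}; the trade-off is that it yields only the coarsest possible crucial set, whereas the paper's construction directly targets the offending cycles and dovetails with the minimal-crucial-set machinery used downstream (Def~\ref{def:min-crucial}, Lemma~\ref{lem:crucial-based-unsafety}). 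You do sketch the $cra$-based alternative at the end, which is exactly the paper's route. The two bookkeeping concerns you raise are handled in the paper implicitly: monotonicity under $\rf$-removal is the content of Def~\ref{def:bin-rel-presv}, and the $E' \sim E$ in Def~\ref{def:cr-rf} is in practice read as ``same pre-trace, same $\mo$'', consistent with how Lemma~\ref{lem:cr-cons} later uses it.
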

    \begin{proof}
        Each of the consistency rules of \textit{SC} without the constraints mentioned in the premise has a non-empty $cra$ by Lemma~\ref{lem:sc-cra}.
        $cra(\mo;\hb)$ is non-empty as $\mo$ does not conflict with $\po$.  
        We can thus construct a crucial set $cr$ by taking at least 1 read from each $cra$ set and considering it's $\rf$ in $E$.
    \end{proof}


\newpage

\section{Proofs Regarding \textbf{$\textit{SC}_\textit{RR}$} without Fences and Read-modify-Writes}
    \label{appendix:C}

    \thmrrsound*

    \begin{proof}
        
        Continuing from the proof sketch via contradiction, note that $[r_{x}];\po_{imm};[r_{y}]$ must be a part of the cycle in $E$ violating some rule of $SC_{RR}$.
        Our objective is to show that for each cycle, if $E$ is inconsistent, then so is $E'$ which is comparable to $E$.
        \begin{description}
        \item[Rule (a) violated:] $\mo$ not strict.
        
        The $\po$ between reads does not play a role in $\mo$.
        Thus, $\mo$ remains non-strict even in $E'$, thereby violating Rule (a). 

        \item[Rule (b) violated:] $\hb$ cycle. 
        \begin{flalign*}
            &\hb;[r_{x}];\po_{imm};[r_{y}];\hb \ \text{cycle}. \\ 
            &\hb;[r_{x}];\po_{imm};[r_{y}];\hb. \\ 
            &\hb;\po;[r_{x}];\po_{imm};[r_{y}];\hb \ \vee \ \hb;\rfe;[r_{x}];\po_{imm};[r_{y}];\po. \\
            &\to \hb;\po;[r_{y}];\hb \ \vee \ \hb;\rfe;[r_{x}];\po \ \text{cycle}.  
        \end{flalign*}
        We can see that $\hb$ cycle exists independent of the $\po$ between $r_{x}$ and $r_{y}$.
        Thus, $\hb$ forms a cycle in $E'$, thereby violating Rule (b).
        
        \item[Rule (c) violated:] $\mo;\hb$ cycle.
        \begin{align*}
            &\mo;\hb;[r_{x}];\po_{imm};[r_{y}] \ \text{cycle}. \\ 
            &\mo;\hb;[r_{x}];\po_{imm};[r_{y}];\po. \\ 
            &\mo;\hb^{?};\po;[r_{x}];\po_{imm};[r_{y}];\hb \ \vee \ \mo;\hb^{?};\rfe;[r_{x}];\po_{imm};[r_{y}];\po;\hb^{?}. \\
            &\to \mo;\hb^{?};\po;[r_{y}];\hb \ \vee \ \mo;\hb^{?};\rfe;[r_{x}];\po;\hb^{?} \ \text{cycle}.  
        \end{align*}
        We can see that $\mo;\hb$ cycle exists independent of the $\po$ between $r_{x}$ and $r_{y}$.
        Thus, $\mo;\hb$ forms a cycle in $E'$, thereby violating Rule (c).
    
        \item[Rule (d) or (e) violated:] $\rb;\mo^{?};\hb \setminus a_{rr}$ cycle.
        (We remove $\mo$ for the sake of simplicity.) 
        \begin{align*}
            &\rb;\hb;[r_{x}];\po_{imm};[r_{y}];\hb^{?} \ \text{cycle}. \\ 
            &\rb;\hb;\po;[r_{x}];\po_{imm};[r_{y}];\hb^{?} \ \vee \ \rb;\hb;\rfe;[r_{x}];\po_{imm};[r_{y}];\hb_{?}. \\ 
            &\rb;\hb;[r_{y}];\hb^{?} \ \vee \ \rb;\hb;\rfe;[r_{x}];\po_{imm};[r_{y}];\po;\hb_{?} \ \vee \ a_{rr}. \\ 
            &\to \rb;\hb;[r_{y}];\hb^{?} \ \vee \ \rb;\hb;\rfe;[r_{x}];\po;\hb^{?} \ \vee \ a_{rr} \ \text{cycle}. 
        \end{align*}
        We can see that either $\rb;\hb$ cycle exists independent of the $\po$ between $r_{x}$ and $r_{y}$ or the composition itself becomes part of $a_{rr}$, which is not a constraint of $\textit{SC}_\textit{RR}$.
        Thus, Rule (d) or (e) is violated in $E'$. 
        \end{description}
        
        Hence, via contradiction we have $sound(SC_{RR}, rr)$.
    \end{proof}

    \lemscrrcompgen*

    \begin{proof}
        
        From Theorem~\ref{thm:complete-gen} we know $E \in I\langle P \rangle_{SC_{RR}}$ and $E' \in \llbracket P' \rrbracket_{SC_{RR}}$.
        We also know by $st^- \cap w(P) = \phi$, $\mo$ must be a strict total order for both $E$ and $E'$.
        We go case-wise on the remaining possible rules that could be violated for $E$.
        For each, we show that they reduce to the above claimed compositions which form a cycle.
        Since pre-traces are finite set of events, by monotonicity, this reduction process eventually terminates. 
        To make the reduction process readable, we assume $rr$ to be the composition forming a cycle, thus violating rule $a_{rr}$.
        We also denote the following symbols for the other relations that we need to show forming a cycle in $E$: 
        \begin{tasks}(2)
            \task \textbf{A} - $\rfi;\po$ cycle. 
            \task \textbf{B} - $\mo;\po$ cycle. 
            \task \textbf{C} - $\mo;\rfe;\po$ cycle.
            \task \textbf{D} - $\rb;\mo^{?};\po;[r_{x}]$ cycle.
            \task \textbf{E} - $\rb;\rfe;[r_{x}];\po;[r_{x}]$ cycle. 
        \end{tasks}
    
        \begin{description}
            \item[Rule (b) violated:] $\hb$ cycle.
                \begin{flalign*}
                    &\to \hb \\ 
                    &\to \rf;\po;\hb^{?} \\
                    &\to \rfi;\po \vee \rfe;\po;\hb^{?} \\ 
                    &\to \textbf{A} \vee [w];\rfe;\po;[w'];\rfe;\po;\hb^{?} \\ 
                    &\to \textbf{A} \vee [w'];\mo;[w];\rfe;[r_{x}];\po \vee [w];\mo;[w'];\rfe;\po;\hb^{?} \\ 
                    &\to \textbf{A} \vee \textbf{C} \vee \mo;\hb \ \text{cycle}. 
                \end{flalign*}         
            \item[Rule (c) violated:] $\mo;\hb$ cycle.
                \begin{align*}
                    &\to \mo;\hb \\ 
                    &\to \mo;\hb^{?};\po \\ 
                    &\to \mo;\po \vee \mo;[w];\hb^{?};[w'];\rf;\po \\ 
                    &\to \textbf{B} \vee \mo;[w'];\rf;\po \vee [w];\hb;[w'];\mo \\
                    &\to \textbf{B} \vee \mo;[w'];\rfi;\po \vee \mo;[w'];\rfe;\po \vee \mo;\hb \\ 
                    &\to \textbf{B} \vee \rfi;\po \vee \textbf{C} \vee \mo;\hb \\ 
                    &\to \textbf{B} \vee \textbf{A} \vee \textbf{C} \vee \mo;\hb \ \text{cycle}. 
                \end{align*}
            \item[Rule (d) violated:] $\rb;\hb$ cycle.
                \begin{align*}
                    &\to \rb;\hb \\ 
                    &\to \rb;\hb^{?};\po;[r_{x}] \\ 
                    &\to \rb;\po;[r_{x}] \vee \rb;[w];\hb^{?};[w'];\rf;\po;[r_{x}] \\ 
                    &\to \textbf{D} \vee \rb;[w];\mo;[w'];\rf;\po;[r_{x}] \vee \mo;[w];\hb;[w'] \vee \rb;[w];\rf;\po;[r_{x}] \\ 
                    &\to \textbf{D} \vee \rb;\mo;\rfe;\po;[r_{x}] \vee \rb;\mo;\rfi;\po;[r_{x}] \vee \mo;\hb \vee \rb;[w];\rf;\po;[r_{x}] \\
                    &\to \textbf{D} \vee \textbf{E} \vee rr \vee \rfi;\po \vee \textbf{D} \vee \mo;\hb \vee \rb;\po \vee \rfi;\po \vee \rb;\rfe;\po;[r_{x}] \\ 
                    &\to \textbf{D} \vee \textbf{E} \vee rr \vee \textbf{A} \vee \mo;\hb \vee \textbf{D} \vee \textbf{A} \vee \textbf{E} \\ 
                    &\to \textbf{D} \vee \textbf{E} \vee \textbf{A} \vee rr \vee \mo;\hb \ \text{cycle}.
                \end{align*}
            \item[Rule (e) violated:] $\rb;\mo;\hb$ cycle.  
                \begin{align*}
                    &\to \rb;\mo;\hb \\ 
                    &\to \rb;\mo;\hb^{?};\po;[r_{x}] \\ 
                    &\to \rb;\mo;[w'];\hb^{?};[w];\rf;\po;[r_{x}] \vee \rb;\mo;\po;[r_{x}] \\
                    &\to \rb;\mo;[w'];\mo^{?};[w];\rf;\po;[r_{x}] \vee \mo;[w'];\hb;[w] \vee \textbf{D} \\ 
                    &\to \rb;\mo;[w];\rfe;\po;[r_{x}] \vee \rb;\mo;[w];\rfi;\po;[r_{x}] \vee \mo;\hb \vee \textbf{D} \\ 
                    &\to \rb;\mo;\rfe;\po;[r_{x}] \vee \rb;\mo;\po;[r_{x}] \vee \rfi;\po \vee \mo;\hb \vee \textbf{D} \\ 
                    &\to \rb;\mo;\rfe;[r_{x}];\po;[r_{x}] \vee \rb;\mo;\rfe;[r_{y}];\po;[r_{x}] \vee \textbf{D} \vee \textbf{A} \vee \mo;\hb \vee \textbf{D} \\
                    &\to \rb;\rfe;[r_{x}];\po;[r_{x}] \vee rr \vee \textbf{A} \vee \mo;\hb \vee \textbf{D} \\ 
                    &\to \textbf{E} \vee rr \vee \textbf{A} \vee \mo;\hb \vee \textbf{D} \ \text{cycle}. 
                \end{align*}
        \end{description}

        The above cases show that either $rr$ or one of the relations $\textbf{A, B, C, D, E}$ that form a cycle (by monotonicity) must hold in $E$.
        If only the cycle $rr$ exists in $E$, then by Theorem~\ref{thm:complete-gen}, it is not a valid transformation-effect (as $E$ must be inconsistent under $\textit{SC}_\textit{RR}$). 
        Hence proved.
    \end{proof}

    \thmscrrcompnowelim*

    \begin{proof}
        
        From Lemma~\ref{lem:sc-rr:comp:general}, we know at least one of the following must be true in $E$.
        \begin{tasks}(2)
            \task $\rfi;\po$ cycle. 
            \task $\mo;\po$ cycle.  
            \task $\mo;\rfe;\po$ cycle.  
            \task $\rb;\mo^{?};\po$ cycle. 
            \task $\rb;\rfe;\po$ cycle. 
        \end{tasks}
        
        Consider the above compositions as $rsc$.
        We also know that $E'$ has only $rr$ cycles as opposed to the above ones for $E$.
        From Lemma~\ref{lem:sc-crucial}, we know that a crucial set exists for $E'$.
        From Prop~\ref{prop:sc-rr:cra}, we know $cra(rr) \nsubseteq cra(rsc)$.
        Thus, we can construct a minimal crucial set $cr'$ for $E'$, such that 
        \begin{align*}
            \nexists cr \in Cr(E, SC) \ . \ cr \subseteq cr'.
        \end{align*}

        Since $st^{-} \cap w(P) = \phi$, the $\mo$ cannot change among similar executions.
        Therefore, there can be only more than one $E$ if the eliminated reads are assigned some other $\rf$ relation.
        Depending on the transformation-effect and crucial set $cr$, we have three cases for this 
        \begin{description}
            \item[Case 0:] $\nexists cr \in Cr(E, SC)$
            
                Since $E$ has no crucial set, we trivially have $\neg \psf{SC}{tr}{P}$.

            \item[Case 1:] $\exists cr \ . \ st(r(E')) \cap cr = \phi$.

                This contradicts our premise over $E'$ and $E$ in Theorem~\ref{thm:complete-gen}, as we now can have an execution $E$ consistent under $\textit{SC}_\textit{RR}$ and $E' \sim E$.
                Therefore, this case need not be considered.
            \item[Case 2:] $\exists cr \ . \ st^{-} \cap cr \neq \phi$ 

                We simply remove the reads eliminated from $cr$, giving us $cr_{t}$.
                But note that we would still have $cr_{t} \nsubseteq cr'$ from the above, otherwise it is simply Case 1. 
                From Corollary~\ref{cor:read-elim-crucial-unsafety}, we have $\neg \psf{SC}{tr}{P}$.

            \item[Case 3:] $\exists cr \ . \ st^{-} \cap cr = \phi$

                By Lemma~\ref{lem:crucial-based-unsafety}, we have $\neg \psf{SC}{tr}{P}$.
        \end{description}

        Thus, by contradiction we have $\comp{SC_{RR}}{SC}$.       

    \end{proof}

\newpage

\section{Proofs Regarding $\textit{SC}_\textit{RR}$ on Adding Fences and Read-modify-Writes}

    \label{appendix:D}

    \lemscrrcomprmwfence*

    \begin{proof}
        
        We go case-wise on relations that do bring different possible relations forming a cycle on addition of $rmw$ or $fences$.
        \begin{description}
            \item[Rule (c) violated:] $\mo;\hb$ cycle.
                \begin{align*}
                    &\to \mo;\hb \\ 
                    &\to \mo;\rf \cup \mo;\po;\hb^{?} \\  
                    &\to \mo;\rf \cup \mo;\hb \ \text{cycle}.
                \end{align*}
            \item[Rule (f) violated:] $\rb;\mo$ cycle.
                
                As is. 
            \item[Rule (g) violated:] $\rb;\mo;\rfe;[r_{y}];\po;[f_{rr}];\po;[r_{x \neq y}]$ cycle.            
            
                As is.
        \end{description}
        The above cases show that either $\mo;\hb$ cycle or one of the compositions above (by monotonicity) must hold.
        Thus, at least one of the above mentioned relations may also form a cycle in $E$.

    \end{proof}

    \lemscrrfrr*

    \begin{proof}
        
        From Theorem~\ref{thm:complete-gen} we know $E \in I\langle P \rangle_{SC_{RR}}$ and $E' \in \llbracket P' \rrbracket_{SC_{RR}}$.
        We also know that $\mo$ must be a strict total order for both $E$ and $E'$.
        Consider the relation resulting in $a_{frr}(E) = false$.
        \begin{align*}
            &\rb;\mo^{?};\hb^{?};\rfe;[r_{y}];\po;[f_{rr}];\po;[r_{x \neq y}] \ \text{cycle}.    
        \end{align*}
        We know that the $\rf$ and $\mo$ relations cannot change. 
        Thus, either $\po$ or some $st$ needs to change. 
        We also know that $st^{-}$ cannot involve the reads $r_{x}, r_{y}$ nor any write in $w(P)$.
        Therefore, the only choices remain is to either remove $f_{rr}$ or the two $\po$ relations $[r_{y}];\po;[f_{rr}]$ or $[f_{rr}];\po;[r_{x}]$.
        Hence proved.
    \end{proof}

    \thmscrrcomprmwfrr*

    \begin{proof}
        From Lemma~\ref{lem:sc-rr:comp:rmw-fence}, at least one of the following must be true in $E$.
        \begin{tasks}(2)
            \task $\mo;\rf$ cycle.
            \task $\rb;\mo$ cycle. 
            \task $\rb;\mo;\rfe;\po;[f_{rr}];\po$ cycle.
        \end{tasks}

        First, if the cycle involves the fence $f_{rr}$ like (c), then by Lemma~\ref{lem:sc-rr:frr}, the transformation-effect also involves $f_{rr}$. 
        Since $f_{rr}$ acts as a no-op in \textit{SC}, we do not need to consider such effects for proving Complete.
        We are only concerned with effects on programs with no fences involved.
            
        Now, consider the first two cycles (a), (b) as $usc$.
        We know that $E'$ has only $rr$ cycles as opposed to the above for $E$.
        From Lemma~\ref{lem:sc-crucial}, we know that a crucial set exists for $E'$.
        From Prop~\ref{prop:sc-rr:rmw-frr}, we know $cra(rr) \nsubseteq cra(usc)$.
        Thus, there must be a minimal crucial set $cr'$ such that 
        \begin{align*}
            \nexists cr \in Cr(E, SC) \ . \ cr \subseteq cr'. 
        \end{align*}

        Since $st^{-} \cap w(P) = \phi$, the $\mo$ cannot change among similar executions.
        Therefore, there can be only more than one $E$ if the eliminated reads are assigned some other $\rf$ relation.
        Depending on the transformation-effect and crucial set $cr$, we have three cases for this 
        \begin{description}
            \item[Case 0:] $\nexists cr \in Cr(E, SC)$
            
            Since $E$ has no crucial set, we trivially have $\neg \psf{SC}{tr}{P}$.

            \item[Case 1:] $\exists cr \ . \ st(r(E')) \cap cr = \phi$.
             
                This contradicts our premise over $E'$ and $E$ in Theorem~\ref{thm:complete-gen}, as we know can have an execution $E$ consistent under $\textit{SC}_\textit{RR}$ and $E' \sim E$.
                Therefore, this case need not be considered.
            \item[Case 2:] $\exists cr \ . \ st^{-} \cap cr \neq \phi$. 
             
                We simply remove the reads eliminated from $cr$, giving us $cr_{t}$.
                But note that we would still have $cr_{t} \nsubseteq cr'$ from the above, otherwise it is simply Case 1. 
                From Corollary~\ref{cor:read-elim-crucial-unsafety}, we have $\neg \psf{SC}{tr}{P}$.
            \item[Case 3:] $\exists cr \ . \ st^{-} \cap cr = \phi$.
            
                By Lemma~\ref{lem:crucial-based-unsafety}, we have $\neg \psf{SC}{tr}{P}$.
        \end{description}

        Thus, by contradiction, for the models \textit{SC} and $\textit{SC}_\textit{RR}$ with the inclusion of $f_{rr}$ and $rmw$ events, we have $\comp{SC_{RR}}{SC}$.      
        
    \end{proof}

    \corscrrrwreord*

    \begin{proof}
        By Lemma~\ref{lem:sc-rr:comp:general} we know at least one of the mentioned relations form a cycle.
        We also know that none of these cycles compositions exist in $E'$.
        Therefore, we only need to check what we can remove in order to eliminate the cycle. 
        Let us go case-wise: 
        \begin{description}
            \item[Case 1:] $\rfi;[r_{x}];\po$ cycle.

                Since $\rfi;[r_{x}]$ cannot change, only the $\po$ relation can change to eliminate the cycle.  
                Since we have $[r_{x}];\po;[w_{x}]$, we need $[r_{x}];\po;[w_{x}] \in \po^{-}$
                
            \item[Case 2:] $[w_{y}];\mo;\rfe;[r_{x}];\po$ cycle.

                Since both $\mo$ and $\rfe$ cannot change from the premise, we have to change $\po$ to eliminate the cycle.
                Since we have $[r_{x}];\po;[w_{y}]$, we need $[r_{x}];\po;[w_{x}] \in \po^{-}$.
        \end{description}
    
        Thus, read-write de-ordering is part of the transformation-effect.
    \end{proof}

    \newpage

    \bibliography{ref.bib}

  \end{document}